\newcommand*\rectangled[1]{%
      \tikz[baseline=(R.base)]\node[draw,rectangle,inner sep=0.9pt](R) {#1};\!
}
\newcommand*\rund[1]{%
      \tikz[baseline=(R.base)]\node[draw,circle,inner sep=0.5pt](R) {#1};\!
}
\newtheorem{theorem}{Theorem}
\newtheorem{proposition}[theorem]{Proposition}
\newtheorem{corollary}[theorem]{Corollary}
\newtheorem{lemma}[theorem]{Lemma}
\theoremstyle{definition}
\newtheorem{definition}[theorem]{Definition}
\newtheorem{remark}[theorem]{Remark}
\newtheorem{example}{Example}
\renewcommand{\tilde}[1]{\widetilde{#1}}
\newcommand\xqed[1]{%
  \leavevmode\unskip\penalty9999 \hbox{}\nobreak\hfill
  \quad\hbox{#1}}
\newcommand\demo{\xqed{$\Diamond$}}
\title{Improved constructions of nested code pairs}
\author[1]{Carlos Galindo\thanks{galindo@mat.uji.es}}
\author[2]{Olav Geil\thanks{olav@math.aau.dk}}
\author[1,2]{Fernando Hernando\thanks{carrillf@mat.uji.es}}
\author[2]{Diego Ruano\thanks{diego@math.aau.dk}}
\affil[1]{Instituto Universitario de Matemáticas y Aplicaciones de Castellón, and
Departamento de Matemáticas, Jaume I University, Spain}
\affil[2]{Department of Mathematical Sciences, Aalborg University, Denmark}
\begin{document}
\maketitle

\begin{abstract}
Two new constructions of linear code pairs $C_2 \subset C_1$ are given for
which the codimension and the relative minimum distances $M_1(C_1,C_2)$,
$M_1(C_2^\perp,C_1^\perp)$ are good. By this we
mean that for any two out of the three parameters
the third parameter of the
constructed code pair is large. Such pairs of nested codes are indispensable for the
determination of good linear ramp secret sharing schemes \cite{kurihara}. They can
also be used to ensure reliable communication over asymmetric quantum
channels \cite{steane1996simple}. The new constructions result from
carefully applying the
Feng-Rao bounds \cite{FR1,agismm} to a family of codes defined from multivariate
polynomials and Cartesian product point sets. \\

\noindent {\bf{Keywords:}} asymmetric quantum code, CSS construction,
Feng-Rao bound, nested codes, ramp secret sharing, relative
generalized Hamming weight, relative minimum distance, wiretap channel of type II.
\end{abstract}

\section{Introduction}\label{secintro}
In this paper we consider pairs of linear codes $C_2 \subset C_1 \subseteq {\mathbb{F}}_q^n$
where ${\mathbb{F}}_q$ is the finite field with $q$ elements. We are
interested in the codimension $\ell=\dim C_1-\dim C_2$ and the
relative minimum distances
$$M_1(C_1,C_2)=\min \{ w_H(\vec{c}) \mid \vec{c} \in C_1 \backslash
C_2\},$$
$$M_1(C_2^\perp,C_1^\perp)=\min \{ w_H(\vec{c}) \mid \vec{c} \in C_2^\perp \backslash
C_1^\perp\}.$$
Here
$w_H(\vec{c})$ means the Hamming weight of $\vec{c}$. For any two out of
three parameters we aim to construct code pairs such that the
two 
parameters are equal to some prescribed values, whereas the last parameter
is as large as possible. Our motivation for studying the above problem is
applications in ramp secret sharing, communication over wiretap
channels of type II,  and asymmetric quantum coding. \\

We first explain the application in secret sharing. The application to wiretap channels of type II is analogue. A secret sharing scheme is a cryptographic method to encode a secret
into a set of shares, later to be distributed among participants, so that only
specified subsets of the participants can reconstruct the secret. The first
secret sharing scheme, proposed by Shamir~\cite{shamir1979share}, was a
perfect scheme, meaning that a set of participants unable to reconstruct
the secret has no information on the secret. Later
non-perfect secret sharing schemes were proposed~\cite{Blakley,Yamamoto} in which
there are sets of participants that have some 
information about the secret, but cannot fully reconstruct it. In this paper
we use the term ramp
secret sharing schemes for the general class of perfect or non-perfect schemes. 
Secret sharing
has been applied, for example, to store confidential information at
multiple locations that are placed  geographically apart. When we use  secret sharing
schemes in such a scenario, the likelihoods of both data loss and data
theft are decreased. As far as we know, in many applications both
perfect and non-perfect ramp secret sharing schemes are useful. In
the perfect scheme the size of a share must be greater than or equal to that of the
secret~\cite{capocelli1993size}. In contrast ramp secret sharing schemes allow
shares to be smaller than the secret which for instance is useful for storing bulk
data~\cite{csirmaz2009ramp}.

A linear ramp secret sharing scheme
can be described
as a coset
construction $C_1/C_2$ where $C_2 \subset C_1$ are linear codes
\cite{Chen}. More precisely, let $\dim C_2=k_2$, $\dim C_1=k_1$ and
$\ell=k_1-k_2$. Given a basis $\{\vec{b}_1, \ldots ,
\vec{b}_{k_2} \}$ for $C_2$
as a vector space over ${\mathbb{F}}_q$ and a corresponding basis $\{
\vec{b}_1, \ldots , \vec{b}_{k_2}, \vec{b}_{k_2+1}, \ldots ,
\vec{b}_{k_1=k_2+\ell}\}$ for $C_1$ the encoding of a secret $(s_1,\ldots
, s_\ell) \in {\mathbb{F}}_q^{\ell}$ is done by choosing $a_1,
\ldots , a_{k_2} \in {\mathbb{F}}_q$ randomly from a uniform distribution and then constructing
the codeword $\vec{c}=a_1\vec{b}_1+\cdots +a_{k_2}\vec{b}_{k_2}+s_1 \vec{b}_{k_2 +1}+\cdots +s_\ell \vec{b}_{ k_1}$. The
shares are the entries of $\vec{c}$. 

\begin{definition}\label{def:privreco}
Given a ramp secret sharing scheme $C_1/C_2$ with $\ell = \dim C_1-\dim C_2$
we say that it has
$(t_1,\ldots,t_\ell)$-privacy and $(r_1, \ldots ,
r_\ell)$-reconstruction if the positive integers $t_1,\ldots , t_{\ell}$ are
chosen as large as
possible and the positive integers $r_1, \ldots
,r_{\ell}$ are chosen as small as possible such that
\begin{itemize}
\item for $1 \leq v \leq \ell$, an adversary cannot obtain $v \log_2(q)$ bits of
  information  about $\vec{s}$ with any
  $t_v$ shares, 
\item for $1 \leq v \leq \ell$, it is possible to recover $v \log_2(q)$ bits of information about $\vec{s}$ with any
  collection of $r_v$ shares. 
\end{itemize}
\end{definition}

We shall refer to the numbers $r_1, \ldots , r_\ell$ as
reconstruction numbers and similarly call the numbers $t_1, \ldots ,
t_\ell$ privacy numbers. These parameters are motivated by the fact that the amount of
information which an adversary can obtain is always an integer times
$log_2(q)$ bits and similar for the reconstruction. Of particular interest are the first
privacy number $t=t_1$ and the last reconstruction number $r=r_\ell$, as
$t$ is the maximal number such that no set of this size leaks any
information about the secret, and $r$ equals the smallest 
number such that any set of this size can recover the entire secret. It was demonstrated
in~\cite{Bains,subramanian2009mds,kurihara,geil2014relative} 
that the above
numbers can be uniquely determined from the relative generalized Hamming
weights, that we shall define now. For $v=1, \ldots , \ell$
\begin{eqnarray}
M_v(C_1,C_2)&=&\min \{\# {\mbox{\textnormal{Supp }}} U \mid U {\mbox{ is a subspace
      of }} C_1\nonumber \\
&& {\mbox{ {\hspace{3cm}} of dimension }} v, U \cap C_2=\{ \vec{0} \}
                \}\nonumber 
\end{eqnarray}
(and similar for the dual codes). 
Here, ${\mbox{\textnormal{Supp} }} U$ is the set of  entries where some codeword in
$U$ is non-zero and $\#$ is the cardinality. In our paper we shall adopt the tradition of sometimes
referring to   
relative generalized Hamming weights
$M_1(C_1,C_2)$ and $M_1(C_2^\perp, C_1^\perp)$ simply as relative
minimum 
distances. Note that the relative minimum distance $M_1(C_1,C_2)$ can be
lower bounded by the minimum distance of $C_1$. Similarly, the
relative minimum distance $M_1(C_2^\perp,C_1^\perp)$ is greater than or equal
to the minimum distance of $C_2^\perp$. The following theorem corresponds to~\cite[Theorem 3]{geil2014relative}.
\begin{theorem}
Let $C_2\subset C_1 \subseteq {\mathbb{F}}_q^n$ with $\ell =\dim
C_1-\dim C_2$ and consider the corresponding ramp secret sharing
scheme $C_1/C_2$. Then the reconstruction
numbers and privacy numbers satisfy
\begin{eqnarray}
r_v&=&n-M_{\ell - v+1}(C_1,C_2)+1, \nonumber \\
t_v&=&M_v(C_2^\perp, C_1^\perp)-1, \nonumber 
\end{eqnarray}
for $v=1, \ldots , \ell$.
\end{theorem}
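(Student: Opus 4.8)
The plan is to first compute, for an arbitrary coalition $A\subseteq\{1,\dots,n\}$ of shareholders, how much information about the secret $A$ can recover, and then to recognise the answer as a relative generalized Hamming weight. Write $\pi_A\colon\mathbb{F}_q^n\to\mathbb{F}_q^{|A|}$ for the projection onto the coordinates indexed by $A$, so that $A$ observes $\vec C_A=\pi_A(\vec C)$. Taking the secret $\vec S$ uniformly distributed, $\vec C$ is uniform on $C_1$ and, conditioned on $\vec S=\vec s$, is uniform on a coset of $C_2$; projecting by $\pi_A$, the vector $\vec C_A$ is uniform on $\pi_A(C_1)$ and, conditioned on $\vec S=\vec s$, uniform on a coset of $\pi_A(C_2)$. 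Hence
\[
 I(\vec S;\vec C_A)=H(\vec C_A)-H(\vec C_A\mid\vec S)=\bigl(\dim\pi_A(C_1)-\dim\pi_A(C_2)\bigr)\log_2 q,
\]
a nonnegative integer multiple of $\log_2 q$. Thus a coalition $A$ recovers $v\log_2 q$ bits about $\vec S$ exactly when $\dim\pi_A(C_1)-\dim\pi_A(C_2)\ge v$, and since enlarging $A$ cannot decrease $I(\vec S;\vec C_A)$, the map $A\mapsto\dim\pi_A(C_1)-\dim\pi_A(C_2)$ is nondecreasing under inclusion. Combining this monotonicity with Definition~\ref{def:privreco}, $t_v+1$ is the least cardinality of a set $A$ with $\dim\pi_A(C_1)-\dim\pi_A(C_2)\ge v$, while $r_v-1$ is the greatest cardinality of a set $A$ with $\dim\pi_A(C_1)-\dim\pi_A(C_2)\le v-1$.

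It remains to identify these two cardinalities with $M_v(C_2^\perp,C_1^\perp)$ and $n-M_{\ell-v+1}(C_1,C_2)$. For $J\subseteq\{1,\dots,n\}$ let $(C)_J=\{\vec c\in C:\operatorname{Supp}(\vec c)\subseteq J\}$ denote the shortened code. The first step here is the combinatorial description
\[
 M_v(C_1,C_2)=\min\bigl\{\,|J|\ :\ \dim (C_1)_J-\dim (C_2)_J\ge v\,\bigr\},
\]
proved by two inclusions: if $U\subseteq C_1$ has dimension $v$, $U\cap C_2=\{\vec0\}$, and support $J$, then $U\subseteq (C_1)_J$ and $U\cap (C_2)_J=\{\vec0\}$, so $\dim (C_1)_J-\dim (C_2)_J\ge v$; conversely, whenever $\dim (C_1)_J-\dim (C_2)_J\ge v$ one can choose a $v$-dimensional $U\subseteq (C_1)_J$ with $U\cap (C_2)_J=\{\vec0\}$, and since a codeword of $(C_1)_J$ that lies in $C_2$ necessarily lies in $(C_2)_J$, this $U$ also satisfies $U\cap C_2=\{\vec0\}$ and is supported on $J$. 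Applying this identity to the nested pair $C_1^\perp\subseteq C_2^\perp$ and using the classical duality $\dim (C^\perp)_J=|J|-\dim\pi_J(C)$ between shortening and puncturing, one obtains $M_v(C_2^\perp,C_1^\perp)=\min\{\,|A|:\dim\pi_A(C_1)-\dim\pi_A(C_2)\ge v\,\}$, the first cardinality above. Likewise, from $\dim (C_i)_J=\dim C_i-\dim\pi_{J^c}(C_i)$ and $\dim C_1-\dim C_2=\ell$, the displayed formula for $M_v(C_1,C_2)$ becomes $M_{\ell-v+1}(C_1,C_2)=n-\max\{\,|A|:\dim\pi_A(C_1)-\dim\pi_A(C_2)\le v-1\,\}$, the second cardinality above. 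Together with the first paragraph this gives $t_v=M_v(C_2^\perp,C_1^\perp)-1$ and $r_v=n-M_{\ell-v+1}(C_1,C_2)+1$.

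The linear-algebra bookkeeping with punctured and shortened codes and their duals is routine. The step I expect to demand the most care is the information-theoretic one: verifying that the information leaked to a coalition $A$ is \emph{exactly} $\bigl(\dim\pi_A(C_1)-\dim\pi_A(C_2)\bigr)\log_2 q$, so that it can only take the values $0,\log_2 q,2\log_2 q,\dots$. This integrality, together with the monotonicity of mutual information under enlarging $A$, is what makes the ``$v\log_2 q$ bits'' thresholds of Definition~\ref{def:privreco} align with the unit jumps of the dimension difference, and hence what justifies rewriting the privacy and reconstruction numbers as the minimum and maximum cardinalities computed above.
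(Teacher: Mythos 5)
The paper does not actually prove this theorem; it imports it verbatim as \cite[Theorem 3]{geil2014relative}, so there is no in-paper proof to compare against. Your blind derivation is nonetheless correct and is essentially the standard argument used in that cited reference and its precursors (Kurihara--Uyematsu--Matsumoto and Bains): you compute $I(\vec{S};\vec{C}_A)=\bigl(\dim\pi_A(C_1)-\dim\pi_A(C_2)\bigr)\log_2 q$, observe the integrality and monotonicity, rewrite $M_v$ via shortened codes as $M_v(C_1,C_2)=\min\{|J|:\dim(C_1)_J-\dim(C_2)_J\ge v\}$, and then translate via the shortening/puncturing dualities $\dim(C^\perp)_J=|J|-\dim\pi_J(C)$ and $\dim(C)_J=\dim C-\dim\pi_{J^c}(C)$ to identify $t_v+1$ and $r_v-1$ with the claimed weights. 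All the intermediate identities check out, so this is a correct self-contained proof of the theorem the paper chose to cite rather than reprove.
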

Hence, if for instance we want to construct a ramp secret sharing scheme over ${\mathbb{F}}_q$ with $n$
participants, 
secrets of length
$\ell$, first privacy number equal to some $t$,  and last reconstruction
number $r$ as small as possible, what we need is exactly a pair
of nested codes $C_2\subset C_1 \subseteq {\mathbb{F}}_q^n$ of
codimension $\ell$ with $M_1(C_2^\perp,C_1^\perp)=t+1$, and 
$M_1(C_1,C_2)=n-r+1$ as large as possible.\\

Finally we explain in brief the use of nested codes in connection with asymmetric
quantum error-correcting codes, introduced
in~\cite{steane1996simple}. The study of good quantum 
codes is by now a
well-established research
area. Some classical and recent references are~\cite{q8,q6,q7,q3,q5,q2,q14,kkk}. Recently, such theory has been
extended to asymmetric quantum error-correcting codes which are
useful in a model where the probabilities of qubit-flip and phase-shift errors
are
different~\cite{aq7,aq6,aq5,aq1,aq2,rev3_4,ezerman2013css,aq3,aqlaguardia}. This
generalization is
motivated by the argument that dephasing will happen more frequently
than relaxation~\cite{aq7}. A linear $q$-ary asymmetric quantum
error-correcting code 
$C$ is a $q^k$ dimensional subspace of the Hilbert space
${\mathbb{C}}^{q^n}$ whose error basis is defined by unitary operators
usually denoted by $X$ and $Z$. It is customary to write the
parameters of $C$ as $[[n,k,d_z/ d_x]]_q$ which means that $C$
corrects all phase-shift errors up to $\lfloor
\frac{d_z-1}{2}\rfloor$, and all qudit-flip errors up to $\lfloor \frac{d_x-1}{2} \rfloor$.

In the present paper we concentrate on
the Calderbank-Shor-Steane (CSS) construction of asymmetric quantum
codes from a pair of nested linear codes $C_2 \subset C_1
\subseteq {\mathbb{F}}_q^n$. We leave it for the reader to
inspect~\cite{q7,kkk} for the actual construction. Here, we only give the
following important result on the parameters of the resulting
asymmetric quantum
code (see \cite[Lemma 3.1]{aq6}). 

\begin{theorem}\label{thecss}
Consider linear codes $C_2 \subset C_1 \subseteq
{\mathbb{F}}_q^n$. Then the asymmetric quantum code defined
using the CSS construction has parameters $$[[n,\ell=\dim C_1-\dim C_2,
d_z / d_x]]_q$$ where $d_z=M_1(C_1,C_2)$ and $d_x=M_1(C_2^\perp,C_1^\perp)$.
\end{theorem}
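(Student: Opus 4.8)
The plan is to take the CSS construction itself as given in \cite{q7,kkk} and to read off the three parameters from its defining data. Recall that, for the nested pair $C_2\subset C_1$, the associated quantum code $Q\subseteq\mathbb{C}^{q^{n}}$ is spanned by the coset states
$$
|\,x+C_2\,\rangle \;=\; \frac{1}{\sqrt{|C_2|}}\sum_{c\in C_2}|\,x+c\,\rangle ,
\qquad x+C_2\in C_1/C_2 .
$$
Since distinct cosets of $C_2$ are disjoint subsets of $\mathbb{F}_q^{n}$, these $q^{\,k_1-k_2}$ vectors are orthonormal; hence $\dim Q=q^{\ell}$ with $\ell=\dim C_1-\dim C_2$, which already accounts for the first two entries of $[[n,\ell,d_z/d_x]]_q$.

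For the two distances I would use the standard error basis on $\mathbb{C}^{q^{n}}$: fix a nontrivial additive character $\chi$ of $\mathbb{F}_q$ and, for $e\in\mathbb{F}_q^{n}$, let $X_e$ be the qudit-flip $|a\rangle\mapsto|a+e\rangle$ and $Z_e$ the phase-shift $|a\rangle\mapsto\chi(\langle a,e\rangle)\,|a\rangle$, so that a general error is a $\mathbb{C}$-linear combination of operators $Z_{e'}X_e$ of weight $\#(\mathrm{Supp}\,e\cup\mathrm{Supp}\,e')$. The key step is the Knill--Laflamme error-detection condition, which in the CSS situation decouples into an ``$X$-part'' and a ``$Z$-part''. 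First I would compute $\langle x'+C_2\,|\,X_e\,|\,x+C_2\rangle$ directly and conclude that, for $e\neq\vec{0}$, the operator $X_e$ maps $Q$ into its orthogonal complement unless $e\in C_1$, and acts as the identity on $Q$ when $e\in C_2$; thus the only harmful qudit-flip errors are those with $e\in C_1\setminus C_2$, the lightest of which has weight exactly $M_1(C_1,C_2)$. Dually, expanding $\langle x'+C_2\,|\,Z_{e'}\,|\,x+C_2\rangle$ and using that $\sum_{c\in C_2}\chi(\langle c,e'\rangle)$ equals $|C_2|$ if $e'\in C_2^{\perp}$ and $0$ otherwise, one finds that $Z_{e'}$ with $e'\neq\vec{0}$ maps $Q$ orthogonally to itself unless $e'\in C_2^{\perp}$ and acts as the identity on $Q$ when $e'\in C_1^{\perp}$; hence the harmful phase-shift errors are exactly those with $e'\in C_2^{\perp}\setminus C_1^{\perp}$, the lightest having weight $M_1(C_2^{\perp},C_1^{\perp})$.

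Assembling these two facts in the asymmetric sense then gives the statement: every product $Z_{e'}X_e$ with $w_H(e)<M_1(C_1,C_2)$ is detected or trivial on the qudit-flip side, and every one with $w_H(e')<M_1(C_2^{\perp},C_1^{\perp})$ is detected or trivial on the phase side, while both thresholds are attained, so the resulting code has parameters $[[n,\ell,d_z/d_x]]_q$ with $d_z=M_1(C_1,C_2)$ and $d_x=M_1(C_2^{\perp},C_1^{\perp})$ (after matching the $X/Z$ labelling of \cite{q7,kkk}). I expect the only delicate point to be the passage from ordinary to \emph{relative} minimum distance: one must verify that errors supported on the \emph{smaller} code ($e\in C_2$, respectively $e'\in C_1^{\perp}$) act as the logical identity and so do not degrade $Q$, which is precisely why the governing parameters are $M_1(C_1,C_2)\ge d(C_1)$ and $M_1(C_2^{\perp},C_1^{\perp})\ge d(C_2^{\perp})$ rather than the plain minimum distances of $C_1$ and $C_2^{\perp}$.
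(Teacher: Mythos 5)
The paper does not prove Theorem~\ref{thecss}: it is quoted directly as \cite[Lemma~3.1]{aq6}, and the CSS construction itself is delegated to \cite{q7,kkk}. So there is no in-paper argument to compare against, and your sketch is the natural thing to supply. It is sound in its essentials. The coset states are an orthonormal $q^\ell$-family, so $\dim Q = q^\ell$; the Knill--Laflamme error-detection conditions for a CSS code decouple into an $X$-part and a $Z$-part (one checks each with the other exponent set to $\vec 0$, which always satisfies the weight constraint, so the two thresholds can be read off independently); the $X_e$'s permute cosets, so the undetected nontrivial bit-flips are exactly those with $e \in C_1 \setminus C_2$; and the character sum $\sum_{c\in C_2}\chi(\langle c,e'\rangle)$ vanishes unless $e'\in C_2^\perp$, after which constancy of the diagonal forces $e'\in C_1^\perp$, so the undetected nontrivial phase errors are exactly those with $e'\in C_2^\perp\setminus C_1^\perp$. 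The minimum Hamming weights of these two sets are, by definition, $M_1(C_1,C_2)$ and $M_1(C_2^\perp,C_1^\perp)$, and your closing remark correctly identifies why the governing quantities are relative distances: errors lying in the smaller code ($C_2$, respectively $C_1^\perp$) act as the logical identity.

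The one point that deserves to be made explicit rather than left as ``after matching the $X/Z$ labelling'' is that, with the coset states you wrote down (supported on $C_1$ in the computational basis), your computation actually yields $d_x = M_1(C_1,C_2)$ and $d_z = M_1(C_2^\perp,C_1^\perp)$ --- the transpose of the theorem's assignment. The reconciliation is a genuine step: apply the $q$-ary Fourier transform to the code space (equivalently, run the same coset construction on $C_1^\perp \subset C_2^\perp$). This is a local unitary that conjugates $X\leftrightarrow Z$, so it sends your code to an equivalent one carrying the theorem's labels, and it is consistent with the paper's own consequence $d_z \geq d(C_1)$, $d_x \geq d(C_2^\perp)$. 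Since the paper's motivation for the asymmetric setting (dephasing dominating relaxation, hence wanting $d_z \gg d_x$) depends on which distance is called $d_z$, this relabelling should be spelled out rather than treated as bookkeeping.
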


Recall, that a stabilizer (symmetric) quantum code is the common eigenspace of a
commutative subgroup of the error group associated to the error basis
(see~\cite{q7,kkk} for details). The quantum codes in Theorem~\ref{thecss}
can be considered as stabilizer asymmetric quantum codes~\cite[Lemma
3.1]{aq6}. Studying asymmetric quantum codes rather than only
symmetric codes is an important problem. For instance, already 
in~\cite{aq7} it was identified that large ratios $d_z/d_x$ are relevant
 -- phase-flip
errors occurring tens, hundreds, or even thousands times more likely than
bit-flips. From Theorem~\ref{thecss} it is clear that $d_z\geq
d(C_1)$ and $d_x\geq d(C_2^\perp)$ where the expressions on the right
sides are the minimum distance of the classical code. There is a clear
physical significance of cases where strict inequality holds in at
least one of these expressions. Such (asymmetric) quantum codes are
called impure (or degenerate)~\cite{rev3_3,recentpreprint} and it is known that the impureness
can be employed to obtain improved decoding. 

As a measure for goodness of asymmetric quantum codes
we shall use the Gilbert-Varshamov bound from~\cite[Theorem 4]{MGV} which
we now recall:
\begin{theorem}\label{theMGV}
If
$$\frac{1-q^{-2\ell}}{1-q^{-2n}} \cdot \frac{1}{q^{n-\ell}} \sum
_{i=1}^{d_x-1} {n \choose i }(q-1)^i \cdot \sum
_{i=1}^{d_z-1} {n \choose i }(q-1)^i < 1$$
then there exists an $[[n,\ell,d_z/d_x]]_q$ asymmetric quantum code. 
\end{theorem}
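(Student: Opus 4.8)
The plan is to reduce the statement, via Theorem~\ref{thecss}, to the existence of a suitable pair of nested classical codes, and then to produce such a pair by a Gilbert--Varshamov type argument. It suffices to find $C_2\subset C_1\subseteq{\mathbb F}_q^n$ with $\dim C_1-\dim C_2=\ell$, $M_1(C_1,C_2)\ge d_z$, and $M_1(C_2^\perp,C_1^\perp)\ge d_x$, since feeding $(C_1,C_2)$ into the CSS construction then yields the asserted $[[n,\ell,d_z/d_x]]_q$ code. As observed in Section~\ref{secintro}, it is in fact enough to find such a pair with $d(C_1)\ge d_z$ and $d(C_2^\perp)\ge d_x$, because these minimum distances bound the corresponding relative minimum distances from below.

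To establish this existence, fix a dimension $k_1$ and set $k_2=k_1-\ell$ (the value of $k_1$ is pinned down only at the end). One natural route is a counting argument over the set of flags $C_2\subseteq C_1\subseteq{\mathbb F}_q^n$ with $\dim C_2=k_2$, $\dim C_1=k_1$, whose size is a product of two Gaussian binomial coefficients: call a flag \emph{bad} if some vector of weight in $\{1,\dots,d_z-1\}$ lies in $C_1\setminus C_2$, or some vector of weight in $\{1,\dots,d_x-1\}$ lies in $C_2^\perp\setminus C_1^\perp$. For each fixed nonzero $v$, an elementary Gaussian-binomial computation evaluates the fraction of flags with $v\in C_1\setminus C_2$ and --- using that $C\mapsto C^\perp$ is an inclusion-reversing bijection on subspaces --- the fraction with $v\in C_2^\perp\setminus C_1^\perp$; summing over the $\sum_{i=1}^{d_z-1}\binom ni(q-1)^i$, respectively $\sum_{i=1}^{d_x-1}\binom ni(q-1)^i$, offending vectors and applying the union bound shows that a good flag exists whenever an inequality of the form
\[
c(n,\ell)\left(q^{k_2}\sum_{i=1}^{d_z-1}\binom ni(q-1)^i+q^{\,n-k_1}\sum_{i=1}^{d_x-1}\binom ni(q-1)^i\right)<1
\]
holds, where $c(n,\ell)$ is the constant produced by the Gaussian-binomial ratios. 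A more transparent and essentially equivalent route builds the pair greedily: by the classical Gilbert--Varshamov bound one first gets a code $C_1$ of dimension $k_1$ with $d(C_1)\ge d_z$, at cost roughly $\sum_i\binom ni(q-1)^i\lesssim q^{\,n-k_1}$, and then extends $C_1^\perp$ inside ${\mathbb F}_q^n$ by $\ell$ further generators, one at a time, keeping the minimum distance $\ge d_x$, at cost roughly $\sum_i\binom ni(q-1)^i\lesssim q^{k_2}$. Either way the two competing ``budgets'' are $q^{\,n-k_1}=q^{\,n-k_2-\ell}$ and $q^{k_2}$, whose product is the fixed number $q^{\,n-\ell}$.

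It remains to choose $k_1$, equivalently $k_2\in\{0,1,\dots,n-\ell\}$, so that both requirements hold simultaneously. Since the two binding powers $q^{k_2}$ and $q^{\,n-\ell-k_2}$ multiply to the fixed number $q^{\,n-\ell}$, one takes $k_2$ as close as possible to the value that balances the two sums against them; assembling the constants and rewriting the resulting bound in closed form converts it into the stated inequality, in which the prefactor $\tfrac{1-q^{-2\ell}}{1-q^{-2n}}$ multiplies $\tfrac1{q^{\,n-\ell}}$ times the product of the two sums. This final step is the main obstacle: one must carry the estimates precisely enough to recover exactly this sharp prefactor rather than a cruder constant, and must check that some admissible \emph{integer} value of $k_2$ actually realizes the bound --- equivalently, one can run the argument simultaneously over all admissible dimension pairs and pass to a weighted average, which absorbs the rounding. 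The remaining work is routine bookkeeping.
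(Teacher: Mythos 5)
The paper does not prove Theorem~\ref{theMGV}; it imports it verbatim from Matsumoto~\cite[Theorem 4]{MGV}, so there is no internal proof to compare against. I can, however, assess your proposal on its own terms, and it has a real gap that you yourself half-acknowledge.

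Your flag-counting argument is the right genre of Gilbert--Varshamov reasoning, and the intermediate inequality you write down is essentially correct: for a uniformly random flag $C_2\subset C_1$ with $\dim C_2=k_2$, $\dim C_1=k_1=k_2+\ell$, one has $\Pr[v\in C_1\setminus C_2]=\frac{q^{k_1}-q^{k_2}}{q^n-1}$ and $\Pr[w\in C_2^\perp\setminus C_1^\perp]=\frac{q^{n-k_2}-q^{n-k_1}}{q^n-1}$, so the union bound over low-weight $v$ and $w$ gives the existence of a good flag whenever
\[
\frac{q^\ell-1}{q^n-1}\bigl(Aq^{k_2}+Bq^{\,n-k_1}\bigr)<1,\qquad A=\sum_{i=1}^{d_z-1}\binom ni(q-1)^i,\quad B=\sum_{i=1}^{d_x-1}\binom ni(q-1)^i.
\]
But this is a \emph{sum} of two terms, and the theorem's hypothesis is a \emph{product}. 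Minimizing the left-hand side over real $k_2$ (with $q^{k_2}=\sqrt{Bq^{n-\ell}/A}$) yields the condition $\frac{q^\ell-1}{q^n-1}\cdot 2\sqrt{ABq^{\,n-\ell}}<1$, which, after squaring, demands $AB<\tfrac{(q^n-1)^2}{4(q^\ell-1)^2\,q^{n-\ell}}$. The theorem's hypothesis is equivalent to $AB<\tfrac{(q^n-1)(q^n+1)}{(q^\ell-1)(q^\ell+1)\,q^{n-\ell}}$, which is \emph{larger} (already by a factor $\tfrac{4(q^\ell-1)}{q^\ell+1}\ge\tfrac43$ when $q=2,\ell=1$), and it becomes much larger once you pay for rounding $k_2$ to an integer. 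So the union-bound route, even executed optimally, proves a strictly weaker statement than the one claimed; your remark about ``passing to a weighted average'' would only absorb the integer rounding, not convert a sum into a product, and therefore does not close the gap.

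The greedy alternative you sketch has an additional flaw: after choosing $C_1$ with $d(C_1)\ge d_z$ by a plain GV argument, you propose to extend $C_1^\perp$ by $\ell$ generators ``keeping the minimum distance $\ge d_x$.'' Since $C_1^\perp\subseteq C_2^\perp$ forces $d(C_2^\perp)\le d(C_1^\perp)$, this only makes sense if $d(C_1^\perp)\ge d_x$ already holds, and the first step gives no control on $d(C_1^\perp)$. You would have to impose both constraints on $C_1$ simultaneously (one on $C_1$, one on $C_1^\perp$), again landing on a sum, not a product. In short, both routes you describe naturally yield additive (square-root) thresholds; recovering the multiplicative bound with the exact prefactor $\tfrac{1-q^{-2\ell}}{1-q^{-2n}}$ requires a genuinely different counting or probabilistic device (Matsumoto's), and you have not supplied it.
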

The literature only reports few families of long asymmetric quantum
codes, an exception being La Guardia's construction II in~\cite[Theorem
7.1]{aqlaguardia} which we shall compare our codes with. In another
direction Ezerman et al's works in~\cite{ezerman2013css,rev3_5} explain
how to derive good asymmetric quantum codes with $d_x$ being very
small and $\ell$ being moderate. As our
constructions do not seem to generally compare well with these codes
in the case of $d_x \in \{2,3\}$ we omit such cases in our tables.

Having discussed both ramp secret sharing schemes and asymmetric
quantum codes we include a remark relating them to each other.

\begin{remark}\label{remsammenhaeng}
There exists an asymmetric quantum code based on the CSS construction
with parameters $[[n,\ell,d_z/d_x]]_q$ if and only there exists a
linear ramp secret sharing scheme over ${\mathbb{F}}_q$ with secrets in
${\mathbb{F}}_q^\ell$, with $r=r_\ell=n-d_z+1$ and $t=t_1=d_x-1$.
\end{remark}
As the tradition for reporting numerical data on parameters of
(asymmetric) quantum codes seems stronger than the tradition for
reporting corresponding parameters of ramp secret sharing schemes,
throughout this paper we shall often report our findings in the first
setting. In such cases we leave it for the reader to apply
Remark~\ref{remsammenhaeng} to make the translation to secret
sharing. On the other hand, higher relative weights give information
on ramp secret sharing schemes, whereas  no implication for
asymmetric quantum codes seems to be
known. Hence, when treating them we shall do it in the setting of
secret sharing.\\

In this paper we present two new families of long nested codes $C_2 \subset C_1$ with
$M_1(C_1,C_2)+M_1(C_2^\perp, C_1^\perp)$ high. Such codes give rise to
ramp secret sharing schemes with $r-t$ close to $\ell$, and give rise
to asymmetric quantum codes with $d_z+d_x$ close to $n-\ell+2$. The code pairs are defined by evaluating
multivariate polynomials at the points of Cartesian products of
subsets of finite fields, and the above mentioned two families
are found by carefully applying Feng-Rao theory~\cite{geil2014relative}. Our first
family is made by combining, for the first time in the literature, the Feng-Rao improved code
constructions for dual and primary codes. This leads to
good pairs of codes, however, it only works for relatively high codimension
$\ell$. The asymmetric quantum codes related to this first family of
nested codes compare very favorably
with known asymmetric quantum codes of similar length as well as the Gilbert-Varshamov
bound. Moreover, the construction is very flexible and we can choose
$d_z/d_x$ very large, which as already mentioned can be desirable. Our second
family is a completely new construction which produces very good
parameters in the case of relatively small codimension $\ell$. Again
the corresponding asymmetric quantum codes compare favorably with the
known codes of similar length and similarly with the Gilbert-Varshamov
bound. Even more, we demonstrate a strong advantage
of using our estimates on the relative minimum distances
$M_1(C_1,C_2)$ and $M_1(C_2^\perp,C_1^\perp)$; rather than
just using information on the
minimum distances $d(C_1)$ and $d(C_2^\perp)$. Actually, using only information on the minimum
distances $d(C_1)$ and $d(C_2^\perp)$  -- which is often done in the literature -- it seems in many cases impossible to establish the code
parameters for asymmetric quantum codes, which we are able to
obtain. These reflections are closely related to the fact that the corresponding
asymmetric quantum codes
of relatively small dimension are almost always impure, which
as already mentioned is desirable. Again the construction is quite
flexible in the sense that we can choose the ratio $d_z/d_x$ very
high if requested. For both families of codes 
we provide generator matrices, and we also describe a method for estimating
higher relative weights $M_v(C_1, C_2)$, $M_v(C_2^\perp,C_1^\perp)$, $v=2,
\ldots , \ell$, which sometimes leads to closed formula
expressions. Recall, that such parameters express the information
leakage and message recovery in connection with ramp secret
sharing. As a result it is shown that for our second family of nested
codes, the security of the related secret sharing schemes is much
better than expected from studying only $t=t_1$. Furthermore, for certain choices of Cartesian
product point sets also parity check matrices
can easily be obtained, namely for the particular cases where the
considered codes satisfy the conditions for being so-called
$J$-affine variety codes~\cite{galindo2015quantum}. Finally, all
considered codes in this paper can be decoded up to half their
designed minimum distance by applying known decoding algorithms. The
dual codes can furthermore be decoded up to half the designed relative
minimum distance.  These observations lead to decoding algorithms for
the corresponding asymmetric quantum codes.

The paper is organized as follows. In Section~\ref{sec4} we start by recalling the Feng-Rao bounds for
primary and dual linear codes and we apply them to the general class
of  
codes derived by evaluating multivariate polynomials at Cartesian
product point sets. In this section we provide all needed background on Feng-Rao
theory -- for basic results on multivariate polynomials and related
concepts we refer the reader to~\cite{clo}. The section concludes with
a discussion on how to decode the related asymmetric quantum codes. Then, in 
Section~\ref{seclarge}, we explain how to employ the
Feng-Rao improved code constructions for primary and dual codes
simultaneously to obtain good families of nested codes with relatively
high codimension. We then study the corresponding ramp secret sharing
schemes and asymmetric quantum codes. In Section~\ref{secsmall} we present
the new good construction
of nested codes with relatively small
codimension and we study the corresponding ramp secret sharing schemes
and asymmetric quantum codes. Section~\ref{secon} 
gives concluding remarks on the connection to $J$-affine
variety codes. The paper contains a number of examples, the end of which we
indicate by $\Diamond$s.

\section{Codes defined from Cartesian product point sets}\label{sec4}
In this paper we consider families of code pairs $C_2
\subset C_1$  defined by evaluating multivariate polynomials at the
points of 
Cartesian products of subsets of finite fields. For the applications described in the previous section,
we are interested in the parameters $M_v(C_1,C_2)$ (the primary case) as well as
$M_v(C_2^\perp, C_1^\perp)$ (the dual case) -- with a special focus
on the relative minimum distances $M_1(C_1,C_2)$ and $M_1(C_2^\perp,C_1^\perp)$. To handle the primary
case only it would be natural to use the language of Gr\"{o}bner
basis theory and to apply the so-called
footprint bound~\cite{sakata1990extension,onorin,geilhoeholdt}. However, in this language
it is more difficult to treat the dual case and we
therefore give a coherent description of both cases using the Feng-Rao
bounds for general linear codes instead. The Feng-Rao bounds come in two
versions, namely one for
primary codes~\cite{AG,MR2831617,agismm,geil2014relative} and another for
dual
codes~\cite{FR24,FR1,FR2,early,handbook,MM,geil2014relative}.\\

 Our
exposition follows the presentation in~\cite[Section IV]{geil2014relative}
and is illustrated with a continued example. This continued example,
at the end of the present 
section, leads to the introduction of a general class of code pairs for which we
have a simple description of generator matrices,  where we know
the codimension, and where we can easily estimate the relative minimum distances and
also the higher relative weights (Theorem~\ref{theour}). It is from
this class of code pairs we, in the following sections, show how to
choose optimal pairs when the codimension is relatively large
(Section~\ref{seclarge}),  and when it is relatively small (Section~\ref{secsmall}).\\ 

\noindent Consider a fixed basis ${\mathcal{B}}=\{\vec{b}_1, \ldots ,
\vec{b}_n\}$ for ${\mathbb{F}}_q^n$ as a vector space over
${\mathbb{F}}_q$ and let ${\mathcal{I}}=\{1, \ldots , n\}$. 
\begin{definition}\label{defro}
Define $\bar{\rho} : {\mathbb{F}}_q^n \rightarrow \mathcal{I}
\cup \{ 0 \}$ to be the function given as follows. For non-zero $\vec{c}$ we have
$\bar{\rho}(\vec{c})=i$ where $i$ is the unique integer such that
$$\vec{c} \in {\mbox{Span}} \{ \vec{b}_1, \ldots , \vec{b}_i\}
\backslash {\mbox{Span}} \{ \vec{b}_1, \ldots ,
\vec{b}_{i-1}\}.$$ Here we use the convention that ${\mbox{Span}} \, 
\emptyset = \{ \vec{0}\}$. Finally, let $\bar{\rho}(\vec{0})=0$.
\end{definition}
Throughout the paper by $\prec_{\deg}$ we shall always mean the degree
lexicographic ordering given by the rule that for two different
monomials we have $X_1^{i_1} \cdots X_m^{i_m} \prec_{\deg} X_1^{j_1}
\cdots X_m^{j_m}$ if one of the following conditions holds:
\begin{enumerate}
\item $i_1+\cdots +i_m < j_1 +\cdots +j_m$
\item$ i_1+\cdots + i_m = j_1 +\cdots +j_m$, but the
               rightmost non-zero entry of\\
  $(j_1-i_1, \ldots , j_m-i_m)$  is
  positive. 
\end{enumerate}

In case of two variables $X$ and $Y$, we shall always think of $X$ as
$X_1$ and $Y$ as $X_2$. In the paper we shall also need other monomial orderings
$\prec$, however, the degree lexicographic ordering will play a particular
important role.

\begin{example}\label{ex1}
Consider the ideal $I=\langle X^6-1,Y^6-1\rangle \subset
{\mathbb{F}}_7[X,Y]$ and the residue class ring $R={\mathbb{F}}_7[X,Y]/I$. The corresponding
variety consists of all pairs of non-zero elements of
${\mathbb{F}}_7$, hence we may write ${\mathcal{V}}_{\mathbb{F}_q}(I)=\{P_1,  \ldots , P_{36}\}$. Let
${\mbox{\textnormal{ev}}} : R  \rightarrow   {\mathbb{F}}_7^{36}$ be the vector
space homomorphism given by ${\mbox{\textnormal{ev}}}(F+I)=(F(P_1), 
\ldots , F(P_{36}))$. The set ${\mathcal{B}}=\{ {\mbox{\textnormal{ev}}}
(X^iY^j+I) \mid 0 \leq i,j< 6\}$ then constitutes a basis for
${\mathbb{F}}_7^{36}$ as a vector space over ${\mathbb{F}}_7$. To see
this, we first observe that 
\begin{eqnarray}
{\mbox{\textnormal{ev}}}(F(X,Y)+I)&=& {\mbox{\textnormal{ev}}} \big( F(X,Y) -A(X,Y)(X^6-1)-B(X,Y)(Y^6-1)+I\big) \nonumber 
\end{eqnarray}
 for any $A(X,Y), B(X,Y)\in {\mathbb{F}}_7[X,Y]$, which implies that we
may, without loss of generality, assume that $\deg_X F,
\deg_Y F < 6$. Using Lagrange interpolation it holds that ${\mbox{\textnormal{ev}}}$
is surjective, and as $\#{\mathcal{B}}$ equals the dimension of the
image 
${\mathbb{F}}_7^{36}$ ${\mathcal{B}}$ is indeed a basis -- and 
consequently ${\mbox{\textnormal{ev}}}$ is an isomorphism. We next enumerate ${\mathcal{B}}$
according to the degree lexicographic ordering $\prec_{\deg}$.
The enumeration is illustrated in Figure~\ref{fig0}. 
\begin{figure}[h]
$$
\begin{array}{cccccc}
Y^5&XY^5&X^2Y^5&X^3Y^5&X^4Y^5&X^5Y^5\\
Y^4&XY^4&X^2Y^4&X^3Y^4&X^4Y^4&X^5Y^4\\
Y^3&XY^3&X^2Y^3&X^3Y^3&X^4Y^3&X^5Y^3\\
Y^2&XY^2&X^2Y^2&X^3Y^2&X^4Y^2&X^5Y^2\\
Y&XY&X^2Y&X^3Y&X^4Y&X^5Y\\
1&X&X^2&X^3&X^4&X^5
\end{array}
{\mbox{ \ \ \ \ }}
\begin{array}{cccccc}
\vec{b}_{21}&\vec{b}_{26}&\vec{b}_{30}&\vec{b}_{33}&\vec{b}_{35}&\vec{b}_{36}\\
\vec{b}_{15}&\vec{b}_{20}&\vec{b}_{25}&\vec{b}_{29}&\vec{b}_{32}&\vec{b}_{34}\\
\vec{b}_{10}&\vec{b}_{14}&\vec{b}_{19}&\vec{b}_{24}&\vec{b}_{28}&\vec{b}_{31}\\
\vec{b}_6&\vec{b}_9&\vec{b}_{13}&\vec{b}_{18}&\vec{b}_{23}&\vec{b}_{27}\\
\vec{b}_3&\vec{b}_5&\vec{b}_8&\vec{b}_{12}&\vec{b}_{17}&\vec{b}_{22}\\
\vec{b}_1&\vec{b}_2&\vec{b}_4&\vec{b}_7&\vec{b}_{11}&\vec{b}_{16}
\end{array}
$$
\caption{The enumeration of ${\mathcal{B}}$ in Example~\ref{ex1}}
\label{fig0}
\end{figure}
As an example we obtain $\bar{\rho}({\mbox{\textnormal{ev}}}(2X^5Y^4+5X^3Y^2+4+I))=34$.\demo
\end{example}

\noindent Recall, that the component wise product of two vectors in
${\mathbb{F}}_q^n$ is given by $$(\alpha_1, \ldots , \alpha_n) \ast (\beta_1, \ldots , \beta_n)=(\alpha_1\beta_1,
\ldots , \alpha_n\beta_n).$$
Using this product we can now introduce the concept of
one-way well-behaving pairs. 
\begin{definition}
An ordered pair $(i,j) \in {\mathcal{I}} \times {\mathcal{I}}$ is 
said to be one-way well-behaving (OWB) if $\bar{\rho}(\vec{b}_{i^\prime} \ast
\vec{b}_j)< \bar{\rho}(\vec{b}_{i} \ast \vec{b}_j)$ holds true for all
$i^\prime \in {\mathcal{I}}$ with 
$i^\prime < i$.  
\end{definition}
\begin{example}\label{ex2}
This is a continuation of Example~\ref{ex1}. Consider
$\vec{b}_i={\mbox{\textnormal{ev}}}(X^{\alpha}Y^{\beta}+I)$ and
$\vec{b}_j={\mbox{\textnormal{ev}}}(X^{\gamma}Y^{\delta}+I)$  (where, by assumption, $0 \leq
\alpha , \beta , \gamma , \delta < 6$). The pair $(i,j)$ is OWB if and only if $\alpha+\gamma < 6$ and
$\beta+\delta < 6$ hold simultaneously. To see the ``if'' part note
that if $X^{\eta} Y^{\lambda} \prec_{\deg} X^{\alpha} Y^{\beta}$ then the
leading monomial $M$ of the remainder of
$X^{\eta+\gamma}Y^{\lambda+\delta}$ after division with
$\{X^6-1,Y^6-1\}$ satisfies $M \preceq_{\deg} X^{\eta +\gamma}Y^{\lambda
  +\delta} \prec_{\deg} X^{\alpha+\gamma}Y^{\beta+\delta}$ which follows
from the properties of a monomial ordering and those of the division
algorithm. The ``only if'' part has to do with the special form of the
ideal $I$ coming from a variety which is a Cartesian product. If for
instance, $\alpha+\gamma \geq 6$ then letting $\eta=6-\gamma-1$ we
obtain $X^{\eta}Y^{\beta} \prec_{\deg} X^{\alpha} Y^{\beta}$, but the leading
monomial $N$ of $X^{\eta +\gamma}Y^{\beta +  \delta}$ after division
with $\{X^6-1, Y^6-1\}$ has the same $Y$-part as the leading monomial
of $X^{\alpha +\gamma}Y^{\beta+\delta}$ after division with $\{X^6-1,
Y^6-1\}$, but higher $X$-part. \demo
\end{example}
To formulate the Feng-Rao bound for
primary codes we shall need the following set.
\begin{definition}\label{defdif}
For $i \in {\mathcal{I}}$  define 
$$\Lambda_i=\{ l \in {\mathcal{I}}
\mid \exists \, j \in {\mathcal{I}} {\mbox{ such that }} (i,j) {\mbox{
    is OWB and }} \bar{\rho}(\vec{b}_i \ast \vec{b}_j)=l \}.$$ 
\end{definition}
\noindent Given a $v$-dimensional vector space $U \subseteq {\mathbb{F}}_q^n$ then $\bar{\rho}(U \backslash \{
\vec{0} \} )$ is of size $v$. The following
proposition, known as the Feng-Rao bound for primary codes~\cite[Proposition 8]{geil2014relative}, therefore is operational.
\begin{proposition}\label{prothe}\label{propthe}
Let $U \subseteq {\mathbb{F}}_q^n$ be a vector space of dimension at
least $1$. The support size of $U$ satisfies
\begin{equation}
\# {\mbox{\textnormal{Supp }}} (U) \geq \# \cup_{i \in \bar{\rho}(U \backslash \{
    \vec{0} \})} \Lambda_i. \label{eqbound}
\end{equation}
\end{proposition}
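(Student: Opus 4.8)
The plan is to produce, for the set $L:=\bigcup_{i\in\bar\rho(U\setminus\{\vec 0\})}\Lambda_i$ on the right-hand side of \eqref{eqbound}, a family of $\#L$ linearly independent vectors of $\mathbb{F}_q^n$ all of whose supports lie inside $\mathrm{Supp}(U)$. Since the vectors supported on a fixed set $S$ form a subspace of $\mathbb{F}_q^n$ of dimension $\#S$, exhibiting such a family immediately gives $\#\mathrm{Supp}(U)\ge\#L$, which is the claim.

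First I would, for each $l\in L$, fix data witnessing membership: by Definition~\ref{defdif} there are $i\in\bar\rho(U\setminus\{\vec 0\})$ and $j\in\mathcal I$ with $(i,j)$ OWB and $\bar\rho(\vec b_i*\vec b_j)=l$, and by the definition of $\bar\rho(U\setminus\{\vec 0\})$ there is a nonzero $\vec c\in U$ with $\bar\rho(\vec c)=i$, say $\vec c=\sum_{k\le i}a_k\vec b_k$ with $a_i\ne 0$. Set $\vec d_l:=\vec c*\vec b_j=\sum_{k\le i}a_k(\vec b_k*\vec b_j)$. Two facts then do the job. First, $\mathrm{Supp}(\vec d_l)\subseteq\mathrm{Supp}(\vec c)\subseteq\mathrm{Supp}(U)$, because a coordinate vanishing in $\vec c$ vanishes in the componentwise product $\vec c*\vec b_j$, and $\vec c\in U$. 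Second, $\bar\rho(\vec d_l)=l$: since $(i,j)$ is OWB we have $\bar\rho(\vec b_k*\vec b_j)<\bar\rho(\vec b_i*\vec b_j)=l$ for every $k<i$, so among the summands of $\vec d_l$ the term $a_i(\vec b_i*\vec b_j)$ is the unique one attaining $\bar\rho$-value $l$; as $a_i\ne 0$ this contribution cannot be cancelled, hence $\vec d_l\in\mathrm{Span}\{\vec b_1,\dots,\vec b_l\}\setminus\mathrm{Span}\{\vec b_1,\dots,\vec b_{l-1}\}$. Finally I would invoke the elementary fact that nonzero vectors with pairwise distinct $\bar\rho$-values are linearly independent: in any nontrivial relation, the vector with the largest $\bar\rho$-value $l^*$ contributes $\vec b_{l^*}$ with a nonzero coefficient while no other vector in the relation involves $\vec b_{l^*}$, a contradiction. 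Applying this to $\{\vec d_l\}_{l\in L}$, whose $\bar\rho$-values are exactly the distinct elements of $L$, yields the required $\#L$ linearly independent vectors supported on $\mathrm{Supp}(U)$, and the proof is complete.

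The step that needs care — and the place where the OWB hypothesis is indispensable — is the equality $\bar\rho(\vec d_l)=l$ rather than merely $\bar\rho(\vec d_l)\le l$. In general $\bar\rho$ of a linear combination is only bounded by the maximum of the $\bar\rho$-values of its terms, with equality precisely when that maximum is attained by a single term with nonzero coefficient (more generally, when the leading contributions do not cancel), and it is exactly the one-way well-behaving condition that forces a unique dominant term here. The same no-cancellation principle underlies the linear-independence lemma, so once it is phrased cleanly both applications are routine.
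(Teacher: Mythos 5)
Your proof is correct and it is essentially the standard argument for the Feng-Rao bound for primary codes (the paper itself only cites \cite[Proposition~8]{geil2014relative} and gives no proof): fix, for each $l$ in the union, a witness $i\in\bar\rho(U\setminus\{\vec 0\})$, a $j$ with $(i,j)$ OWB and $\bar\rho(\vec b_i*\vec b_j)=l$, and a $\vec c\in U$ with $\bar\rho(\vec c)=i$; then $\vec c*\vec b_j$ is supported inside $\mathrm{Supp}(U)$ and, by the no-cancellation property of $\bar\rho$ together with OWB, has $\bar\rho$-value exactly $l$; distinct $\bar\rho$-values give linear independence, and vectors supported on a fixed set $S$ form a $\#S$-dimensional subspace. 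All of these steps are correctly carried out, including the identification of OWB as precisely what forces a unique dominant summand.
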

\begin{example}\label{ex3}
This is a continuation of the previous examples. We have
$\vec{b}_{28}={\mbox{\textnormal{ev}}}(X^4Y^3+I)$ and therefore $$\# \Lambda_{28}=\# \{
X^4Y^3, X^5Y^3, X^4Y^4,X^5Y^4,X^4Y^5,X^5Y^5\}=6.$$ In general, for
$\vec{b}_i={\mbox{\textnormal{ev}}}(X^{\alpha}Y^{\beta}+I)$ (with $0 \leq \alpha, \beta <6$) we
have $\# \Lambda_i =(6-\alpha)(6-\beta)$. The situation is depicted in Figure~\ref{fig0.5}.

\begin{figure}[h]
$$
\begin{array}{cccccc}
6&5&4&3&2&1\\
12&10&8&6&4&2\\
18&15&12&9&6&3\\
24&20&16&12&8&4\\
30&25&20&15&10&5\\
36&30&24&18&12&6
\end{array}
$$
\caption{$\# \Lambda_i$ from Example~\ref{ex1} (enumeration with respect to Figure~\ref{fig0})}
\label{fig0.5}
\end{figure}
Let $F(X,Y)$ be a polynomial whose leading monomial with respect to $\prec_{\deg}$ is $X^{\alpha}Y^{\beta}$ for some $0 \leq
\alpha , \beta <6$. Consider $\vec{c}={\mbox{\textnormal{ev}}}(F+I)$, then, by Proposition~\ref{propthe}, $w_H(\vec{c}) \geq (6-\alpha)(6-\beta)$. In general,
Figure~\ref{fig0.5} gives upper bounds on the Hamming weights of all possible words in ${\mathbb{F}}_7^{36}$.\demo
\end{example}

\noindent We now turn to relative generalized Hamming weights. Note that although $C_2 \subset C_1$ implies
 $\bar{\rho}(C_2) \subset \bar{\rho}(C_1)$,
it does not always hold that 
$\vec{c} \in C_1 \backslash C_2$ implies $\bar{\rho}(\vec{c}) \in \bar{\rho}(C_1)\backslash
\bar{\rho}(C_2)$. Nevertheless, the Feng-Rao bound for primary
codes~\cite[Theorem 9]{geil2014relative}  still gives us useful information.
\begin{theorem}\label{thethat}
Consider two linear codes $C_2 \subset C_1\subseteq {\mathbb{F}}_q^n$
with $\dim (C_1)=k_1$ and $\dim (C_2)=k_2$. Let $u$ be the smallest element in $\bar{\rho}(C_1)$ that is not in
  $\bar{\rho}(C_2)$. For $v=1, \ldots , \ell = k_1 - k_2$ we have
\begin{eqnarray}
M_v(C_1,C_2)&\geq &\min \big\{ \# \cup_{s=1}^v \Lambda_{i_s} \mid u\leq
i_1 < \cdots < i_v {\mbox{ and }}
i_1, \ldots ,
i_v \in \bar{\rho}(C_1 \backslash \{\vec{0}\})\big\}. \nonumber
\end{eqnarray}
\end{theorem}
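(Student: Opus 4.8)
The plan is to reduce the relative weight statement to a statement about single codewords, and then to apply the Feng-Rao bound for primary codes (Proposition~\ref{propthe}) in a form tailored to subspaces meeting $C_2$ trivially. Fix $v$ with $1 \le v \le \ell$, and let $U \subseteq C_1$ be any subspace of dimension $v$ with $U \cap C_2 = \{\vec{0}\}$; we must bound $\#\mbox{Supp}(U)$ from below. By Proposition~\ref{propthe} we already know $\#\mbox{Supp}(U) \ge \#\bigcup_{i \in \bar\rho(U \setminus \{\vec 0\})} \Lambda_i$, so the whole game is to show that the index set $\bar\rho(U \setminus \{\vec 0\})$ — which has exactly $v$ elements — can be taken to consist of $v$ elements $i_1 < \cdots < i_v$ all lying in $\bar\rho(C_1 \setminus \{\vec 0\})$ and all $\ge u$, where $u$ is the least element of $\bar\rho(C_1) \setminus \bar\rho(C_2)$. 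The inclusion in $\bar\rho(C_1 \setminus \{\vec 0\})$ is immediate since $U \subseteq C_1$. The real content is the lower bound $i_1 \ge u$.

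The key step is therefore: if $U \cap C_2 = \{\vec 0\}$ and $U \ne \{\vec 0\}$, then every element of $\bar\rho(U \setminus \{\vec 0\})$ is $\ge u$. I would argue this by contradiction. Suppose some $\vec c \in U \setminus \{\vec 0\}$ has $\bar\rho(\vec c) = i < u$. Since $i < u$ and $u$ is the \emph{smallest} index in $\bar\rho(C_1)$ not in $\bar\rho(C_2)$, while $i \in \bar\rho(C_1)$, we get $i \in \bar\rho(C_2)$; so there is $\vec c' \in C_2$ with $\bar\rho(\vec c') = i$. One now runs the standard elimination argument: because $\bar\rho$ records the leading basis vector in the fixed ordered basis $\mathcal B$, we can choose a nonzero $\vec c \in U$ with $\bar\rho(\vec c) = i < u$ minimal among all nonzero elements of $U$, subtract a scalar multiple of $\vec c'$ to kill the leading term, and — iterating over the finitely many indices below $u$ that all lie in $\bar\rho(C_2)$ — eventually produce a nonzero element of $U \cap C_2$, contradicting $U \cap C_2 = \{\vec 0\}$. (The iteration terminates because each step strictly decreases $\bar\rho$, which takes values in the finite set $\{0,1,\dots,n\}$, and the only way to reach value $0$ is the zero vector.) Hence $\min \bar\rho(U \setminus \{\vec 0\}) \ge u$.

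Combining the two steps: for any admissible $U$, writing $\bar\rho(U \setminus \{\vec 0\}) = \{i_1 < \cdots < i_v\}$ we have $u \le i_1 < \cdots < i_v$ with all $i_s \in \bar\rho(C_1 \setminus \{\vec 0\})$, so Proposition~\ref{propthe} gives
$$
\#\mbox{Supp}(U) \ge \#\bigcup_{s=1}^{v} \Lambda_{i_s} \ge \min\Big\{ \#\textstyle\bigcup_{s=1}^{v}\Lambda_{i_s} \;\Big|\; u \le i_1 < \cdots < i_v,\ i_1,\dots,i_v \in \bar\rho(C_1 \setminus \{\vec 0\})\Big\}.
$$
Taking the minimum of the left-hand side over all $v$-dimensional $U \subseteq C_1$ with $U \cap C_2 = \{\vec 0\}$ yields exactly the claimed lower bound on $M_v(C_1,C_2)$.

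I expect the elimination argument in the middle paragraph to be the only delicate point: one must be careful that the vector produced at the end of the reduction is genuinely nonzero (this is where minimality of the chosen starting index, or equivalently a dimension/rank bookkeeping over the at most $u-1$ indices below $u$, is used) and that it lands in $C_2$ (this uses that every index encountered along the way belongs to $\bar\rho(C_2)$, which in turn rests on the definition of $u$). Everything else — surjectivity onto a $v$-element index set, monotone enumeration, the final application of Proposition~\ref{propthe} — is routine.
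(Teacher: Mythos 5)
Your overall strategy --- reduce to showing $\min \bar\rho(U \setminus \{\vec 0\}) \ge u$ for every $v$-dimensional $U \subseteq C_1$ with $U \cap C_2 = \{\vec 0\}$, then apply Proposition~\ref{propthe} and minimize over $U$ --- is exactly right and is what the cited source~\cite[Theorem 9]{geil2014relative} does (the present paper states the theorem without reproving it). However, the elimination argument in your middle paragraph is confused in a way that would stop you if you tried to write out the details. After the first subtraction $\vec{c} - \lambda \vec{c}'$ with $\vec{c} \in U$, $\vec{c}' \in C_2$, and $\lambda \neq 0$, the resulting vector already \emph{leaves} $U$: if $\vec{c} - \lambda \vec{c}' \in U$, then $\lambda \vec{c}' \in U \cap C_2 = \{\vec 0\}$, which is impossible since $\lambda \neq 0$ and $\vec{c}' \neq \vec{0}$. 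So the iteration cannot ``produce a nonzero element of $U \cap C_2$''; it terminates with the zero vector, not with a nonzero one, and your stated worry about the terminal vector being nonzero is aimed at the wrong object.

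What the iteration really establishes is the cleaner statement that any $\vec{c} \in C_1$ with $\bar\rho(\vec{c}) < u$ already lies in $C_2$: each subtraction strictly decreases $\bar\rho$; each intermediate $\bar\rho$-value is in $\bar\rho(C_1)$ and remains $<u$, hence is in $\bar\rho(C_2)$ by the minimality of $u$; so a suitable $C_2$-vector is always available, and continuing until reaching $\vec{0}$ exhibits $\vec{c}$ as a linear combination of $C_2$-vectors. Applied to a nonzero $\vec{c} \in U$ with $\bar\rho(\vec{c}) < u$, this makes $\vec{c}$ \emph{itself} a nonzero element of $U \cap C_2$, which is the desired contradiction; no minimality of the starting index $i$ among $\bar\rho(U \setminus \{\vec 0\})$ is needed. (An iteration-free version: for $j < u$ the function $j \mapsto \dim\big(C_t \cap \text{Span}\{\vec{b}_1, \ldots, \vec{b}_j\}\big)$ jumps by one precisely at $j \in \bar\rho(C_t)$, and below $u$ these jump sets coincide for $t=1,2$; together with $C_2 \subseteq C_1$ this forces $C_1 \cap \text{Span}\{\vec{b}_1,\ldots,\vec{b}_{u-1}\} = C_2 \cap \text{Span}\{\vec{b}_1,\ldots,\vec{b}_{u-1}\}$.) The remaining steps in your write-up --- the inclusion $\bar\rho(U \setminus \{\vec 0\}) \subseteq \bar\rho(C_1 \setminus \{\vec 0\})$, the application of Proposition~\ref{propthe}, and the final minimization --- are correct as written.
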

\begin{example}\label{ex5}
This is a continuation of the previous examples. If
$C_2={\mbox{Span}}\{\vec{b}_1, \vec{b}_2,\vec{b}_3, \vec{b}_5\}$ and
$C_1={\mbox{Span}}\{\vec{b}_1, \vec{b}_2,\vec{b}_3, \vec{b}_4,
\vec{b}_5\}$ then $$M_1(C_1,C_2) \geq \min \{ \# \Lambda_4, \#
\Lambda_5 \}=\min \{ 24, 25 \} = 24.$$ However, if 
$C_2={\mbox{Span}}\{\vec{b}_1, \vec{b}_2,\vec{b}_3, \vec{b}_4\}$, while
$C_1$ is unchanged, then $M_1(C_1,C_2) \geq \# \Lambda_5=25$.\demo
\end{example}

To treat dual codes we shall need the following definitions, where the
first can be considered as the counterpart of Definition~\ref{defdif},
and the last as the counterpart of Definition~\ref{defro}.
\begin{definition} For $l \in {\mathcal{I}}$ define
$$V_l=\{i \in {\mathcal{I}} \mid \bar{\rho}(\vec{b}_i \ast \vec{b}_j)=l
  {\mbox{ for some }}\vec{b}_j \in {\mathcal{B}} {\mbox{ with }} (i,j)
  {\mbox{ OWB}}\}.$$
\end{definition}

\begin{definition}
For $\vec{c} \in {\mathbb{F}}_q^n \backslash \{ \vec{0}\}$ define
$\eta(\vec{c})$ to be the smallest number $l \in {\mathcal{I}}$ such that
$\vec{c} \cdot \vec{b}_l \neq 0$. Here $\vec{a} \cdot \vec{b}$ means the
Euclidean inner product between $\vec{a}$ and $\vec{b}$.
\end{definition}
Given a $v$-dimensional space $U$, $\eta(U\backslash \{ \vec{0} \})$ is
of size $v$. The following proposition, known as the
Feng-Rao bound for dual codes~\cite[Proposition 13]{geil2014relative}, therefore is operational.

\begin{proposition}\label{thenaada}
Let $U \subseteq {\mathbb{F}}_q^n$ be a space of dimension at least
$1$. We have 
$$\# {\mbox{\textnormal{Supp}}} (U) \geq \# \cup_{l \in \eta (U\backslash \{\vec{0}\})} V_l.$$
\end{proposition}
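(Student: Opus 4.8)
The plan is to mimic the structure of the proof of the primary bound (Proposition~\ref{propthe}), but working with the inner-product description of membership rather than the span description. First I would fix a space $U \subseteq {\mathbb{F}}_q^n$ of dimension $v \geq 1$ and recall that $\eta(U \backslash \{\vec{0}\})$ has exactly $v$ elements; indeed, if $\vec{c}_1, \vec{c}_2 \in U$ had the same value $\eta(\vec{c}_1) = \eta(\vec{c}_2) = l$, a suitable ${\mathbb{F}}_q$-linear combination would have $\eta$-value strictly larger than $l$ (or be zero), so one can extract a basis $\vec{c}_1, \ldots, \vec{c}_v$ of $U$ with pairwise distinct $\eta$-values $l_1 < \cdots < l_v$, and these are precisely the elements of $\eta(U \backslash \{\vec{0}\})$. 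The goal is then to show $\#\,\mathrm{Supp}(U) \geq \#\bigcup_{s=1}^{v} V_{l_s}$.

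The key step is the following claim: if $l \in \eta(U \backslash \{\vec{0}\})$ and $i \in V_l$, then $i \in \mathrm{Supp}(U)$, i.e.\ some codeword of $U$ is nonzero in coordinate $i$. To prove this, pick $\vec{c} \in U$ with $\eta(\vec{c}) = l$, so $\vec{c} \cdot \vec{b}_m = 0$ for all $m < l$ and $\vec{c} \cdot \vec{b}_l \neq 0$. Since $i \in V_l$, there is some $\vec{b}_j \in {\mathcal{B}}$ with $(i,j)$ OWB and $\bar\rho(\vec{b}_i \ast \vec{b}_j) = l$. Now consider the vector $\vec{c} \ast \vec{b}_j$, or equivalently the inner products $(\vec{c} \ast \vec{b}_j) \cdot \vec{b}_m = \vec{c} \cdot (\vec{b}_j \ast \vec{b}_m)$. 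Writing $\vec{b}_j \ast \vec{b}_m$ in the basis ${\mathcal{B}}$, the OWB property of $(i,j)$ gives that $\bar\rho(\vec{b}_i \ast \vec{b}_j) = l$ while $\bar\rho(\vec{b}_{i'} \ast \vec{b}_j) < l$ for every $i' < i$; together with the definition of $\bar\rho$ this lets me control which coordinates of the $\vec{b}_l$-expansion of the various products $\vec{b}_j \ast \vec{b}_m$ can be nonzero. Combining this with $\vec{c} \cdot \vec{b}_m = 0$ for $m < l$ and $\vec{c}\cdot \vec{b}_l \neq 0$, I conclude that $\vec{c} \cdot (\vec{b}_i \ast \vec{b}_j) \neq 0$. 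Finally, $\vec{c}\cdot(\vec{b}_i \ast \vec{b}_j) = (\vec{c} \ast \vec{b}_i)\cdot \vec{b}_j = \sum_{k} c_k (\vec{b}_i)_k (\vec{b}_j)_k$, and if $c_i = 0$ then... wait, that is not quite coordinatewise — so instead I observe directly that $\vec{c}\cdot(\vec{b}_i\ast\vec{b}_j)$ is an ${\mathbb{F}}_q$-linear expression in the entries $c_k$ with the $k$-th coefficient equal to $(\vec{b}_i)_k(\vec{b}_j)_k$; if $\vec{c}$ vanished on coordinate $i$ for every $\vec{c}\in U$ we could not derive a contradiction this cleanly, so the cleaner route is: the map $\vec{x} \mapsto \vec{x} \ast \vec{b}_i$ and the triangular structure of the basis products show that $\bar\rho(\vec{b}_i \ast \vec{b}_j)=l$ forces coordinate $i$ into the support. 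I would carry this out by the same triangularity argument used implicitly in~\cite[Proposition 13]{geil2014relative}: decompose $\vec{b}_i \ast \vec{b}_j = \lambda \vec{b}_l + (\text{lower terms in } \bar\rho)$ with $\lambda \neq 0$, apply $\vec{c} \cdot (-)$, use vanishing below $l$ to isolate $\lambda(\vec{c}\cdot\vec{b}_l) \neq 0$, hence $\vec{c}\ast \vec{b}_j \neq \vec{0}$; but $(\vec{c}\ast\vec{b}_j)$ is supported inside $\mathrm{Supp}(\vec{c})$, and a finer bookkeeping over all $\vec{b}_j$ achieving the value $l$ for this fixed $i$ pins the nonzero coordinate down to $i$ itself (this is where the one-way direction of OWB is essential — it is exactly what prevents "collisions" coming from indices $i' < i$).

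Once the claim is established, the proposition follows immediately: $\bigcup_{s=1}^{v} V_{l_s} \subseteq \mathrm{Supp}(U)$, so taking cardinalities gives $\#\,\mathrm{Supp}(U) \geq \#\bigcup_{l \in \eta(U\backslash\{\vec0\})} V_l$. The main obstacle is precisely the claim, and within it the delicate point is going from "$\vec{c}\ast\vec{b}_j$ is nonzero" to "coordinate $i$ lies in $\mathrm{Supp}(U)$": one must use the OWB hypothesis in the correct direction and possibly pass through the dual/annihilator description, since a naive coordinatewise argument does not see which single coordinate survives. I expect the cleanest write-up to proceed by contradiction — assume coordinate $i$ is not in $\mathrm{Supp}(U)$, restrict everything to the complementary coordinates, and show that the resulting configuration violates $\bar\rho(\vec{b}_i \ast \vec{b}_j) = l$ combined with OWB — rather than by a direct computation.
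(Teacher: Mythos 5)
First, a bookkeeping note: the paper does not supply its own proof of this proposition---it is imported verbatim from \cite[Proposition~13]{geil2014relative}, as the sentence preceding the statement makes explicit. So the comparison here is between your attempt and the standard Feng--Rao argument for dual codes.

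Your plan founders on the ``key claim,'' which is false. The indices $i$ appearing in $V_l$ label the vectors of the \emph{fixed basis} ${\mathcal B}=\{\vec{b}_1,\ldots,\vec{b}_n\}$, which is an arbitrary basis of ${\mathbb F}_q^n$ and has no special relation to the $n$ coordinate positions; ${\mathrm{Supp}}(U)$, on the other hand, \emph{is} a set of coordinate positions. So the inclusion $\bigcup_l V_l\subseteq{\mathrm{Supp}}(U)$ is simply the wrong statement to aim for. A small example: in ${\mathbb F}_q^3$ with $\vec{b}_1=(1,1,1)$, $\vec{b}_2=(0,1,1)$, $\vec{b}_3=(0,0,1)$ one computes $V_1=\{1\}$; yet the space $U={\mathrm{Span}}\{(0,1,0)\}$ has $\eta((0,1,0))=1$ (since $(0,1,0)\cdot\vec{b}_1=1$) while ${\mathrm{Supp}}(U)=\{2\}$, so $1\notin{\mathrm{Supp}}(U)$ even though $1\in V_1$. (The numerical inequality $\#{\mathrm{Supp}}(U)\geq\#V_1$ of course still holds, $1\geq 1$.) You half-sense this when you write that some bookkeeping should ``pin the nonzero coordinate down to $i$ itself''---that step cannot be filled in because it is not true. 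The correct mechanism is a \emph{rank} argument on restricted vectors, not a coordinate-containment argument: set $T={\mathrm{Supp}}(U)$ and $W=\bigcup_{l\in\eta(U\setminus\{\vec{0}\})}V_l$, and prove that the restrictions $\{\vec{b}_i|_T : i\in W\}$ are linearly independent in ${\mathbb F}_q^T$, which immediately yields $\#T\geq\#W$. To see the independence, suppose $\vec{a}=\sum_{i\in W}a_i\vec{b}_i$ vanishes on $T$ with not all $a_i$ zero; let $i^\ast=\max\{i\in W:a_i\neq 0\}$ (so $\bar{\rho}(\vec{a})=i^\ast$), pick $l$ with $i^\ast\in V_l$, pick $\vec{c}\in U$ with $\eta(\vec{c})=l$, and pick $j$ with $(i^\ast,j)$ OWB and $\bar{\rho}(\vec{b}_{i^\ast}\ast\vec{b}_j)=l$. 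Since $\vec{a}$ vanishes on $T\supseteq{\mathrm{Supp}}(\vec{c})$ we have $(\vec{a}\ast\vec{b}_j)\cdot\vec{c}=0$; but expanding the left side as $\sum_{i\in W} a_i(\vec{b}_i\ast\vec{b}_j)\cdot\vec{c}$, OWB kills every term with $i<i^\ast$ (those products have $\bar{\rho}<l$, and $\eta(\vec{c})=l$ forces $\vec{c}\cdot\vec{b}_m=0$ for $m<l$), terms with $i>i^\ast$ have $a_i=0$, and the $i=i^\ast$ term equals $a_{i^\ast}\lambda(\vec{c}\cdot\vec{b}_l)\neq 0$ where $\lambda$ is the leading coefficient of $\vec{b}_{i^\ast}\ast\vec{b}_j$, a contradiction. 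All of the ingredients you listed (the triangular expansion, the vanishing below level $l$, the one-sidedness of OWB) are the right ones---they just need to be aimed at a putative linear dependence among restricted basis vectors rather than at a single coordinate of $\vec{c}$.
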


\begin{example}\label{ex6}
This is a continuation of the previous examples. For
$\vec{b}_l={\mbox{\textnormal{ev}}}(X^{\alpha}Y^{\beta}+I)$ (with $0 \leq \alpha , \beta  < 6$)
we have $\# V_l=(\alpha+1)(\beta+1)$. Given $\vec{c}$ with
$\eta(\vec{c}) =l$, from Proposition~\ref{thenaada} we obtain
$w_H(\vec{c}) \geq (\alpha +1)(\beta +1)$. By Proposition~\ref{thenaada}, Figure~\ref{fig0.75} gives
upper bounds on the Hamming weights of all possible words in
${\mathbb{F}}_7^{36}$. 
\begin{figure}[h]
$$
\begin{array}{cccccc}
6&12&18&24&30&36\\
5&10&15&20&25&30\\
4&8&12&16&20&24\\
3&6&9&12&15&18\\
2&4&6&8&10&12\\
1&2&3&4&5&6
\end{array}
$$
\caption{$\# V_l$ from Example~\ref{ex5} (enumeration with respect to Figure~\ref{fig0})}
\label{fig0.75}
\end{figure}
\ \demo
\end{example}
We next treat relative generalized Hamming weights. Note that for $C_2 \subset C_1$ it does not in general hold that $\vec{c} \in C_2^\perp \backslash C_1^\perp$
implies $\eta (\vec{c}) \in \bar{\rho} (C_1\backslash \{ \vec{0}\})$. Nevertheless, the
Feng-Rao bound for dual codes~\cite[Theorem 14]{geil2014relative} still gives us useful information.

\begin{theorem}\label{theny}
Consider linear codes $C_2 \subset C_1\subseteq {\mathbb{F}}_q^n$. Let $u^\perp$ be the largest
element in $\bar{\rho}(C_1 \backslash \{\vec{0}\})$. For $v=1, \ldots
, \dim (C_1)-\dim (C_2)=\dim (C_2^\perp) -\dim (C_1^\perp)$ we have
\begin{eqnarray}
M_v(C_2^\perp,  C_1^\perp) &\geq &\min \{ \#  \cup_{s=1}^v V_{i_s}
\mid 1 \leq i_1 < \cdots <i_v \leq u^\perp,  i_1, \ldots , i_v \notin \bar{\rho}(C_2)\}.\nonumber
\end{eqnarray}
\end{theorem}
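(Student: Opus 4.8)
The plan is to bound $M_v(C_2^\perp, C_1^\perp)$ by taking an arbitrary subspace $U \subseteq C_2^\perp$ with $\dim U = v$ and $U \cap C_1^\perp = \{\vec{0}\}$ (these are exactly the subspaces over which $M_v(C_2^\perp,C_1^\perp)$ is defined), applying Proposition~\ref{thenaada} to $U$, and showing that the index set $\eta(U \backslash \{\vec{0}\})$ is one of the tuples admitted in the minimum on the right-hand side. First I would write $\eta(U \backslash \{\vec{0}\}) = \{i_1, \ldots, i_v\}$ with $1 \le i_1 < \cdots < i_v \le n$ -- this set has exactly $v$ elements, as recalled just before Proposition~\ref{thenaada} -- so that it remains to check the two constraints $i_s \notin \bar{\rho}(C_2)$ and $i_s \le u^\perp$ for every $s$.

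For the first constraint I would argue by contradiction: pick $\vec{c} \in U \backslash \{\vec{0}\}$ with $\eta(\vec{c}) = i_s$, so $\vec{c} \cdot \vec{b}_j = 0$ for $j < i_s$ and $\vec{c} \cdot \vec{b}_{i_s} \neq 0$; if $i_s$ were in $\bar{\rho}(C_2)$ there would be $\vec{d} \in C_2$ of the form $\vec{d} = \sum_{j=1}^{i_s} a_j \vec{b}_j$ with $a_{i_s} \neq 0$, whence $\vec{c} \cdot \vec{d} = a_{i_s}(\vec{c} \cdot \vec{b}_{i_s}) \neq 0$, contradicting $\vec{c} \in C_2^\perp$. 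For the second constraint I would first observe that, $u^\perp$ being the largest element of $\bar{\rho}(C_1 \backslash \{\vec{0}\})$, every $\vec{x} \in C_1$ lies in ${\mbox{Span}}\{\vec{b}_1, \ldots, \vec{b}_{u^\perp}\}$, so ${\mbox{Span}}\{\vec{b}_1, \ldots, \vec{b}_{u^\perp}\}^\perp \subseteq C_1^\perp$; then if some $\vec{c} \in U \backslash \{\vec{0}\}$ had $\eta(\vec{c}) > u^\perp$ it would be orthogonal to all of $\vec{b}_1, \ldots, \vec{b}_{u^\perp}$, hence lie in $C_1^\perp$, contradicting $U \cap C_1^\perp = \{\vec{0}\}$.

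Once both constraints hold, Proposition~\ref{thenaada} gives $\# {\mbox{\textnormal{Supp}}}(U) \ge \# \cup_{s=1}^v V_{i_s}$, which is at least the minimum in the statement; taking the minimum over all admissible $U$ then finishes the proof. I expect the bound $i_v \le u^\perp$ to be the delicate point: it is exactly where the hypothesis $U \cap C_1^\perp = \{\vec{0}\}$ is used at full strength (not merely $U \not\subseteq C_1^\perp$), and it reflects the fact that for $\vec{c} \in C_2^\perp \backslash C_1^\perp$ one can only assert $\eta(\vec{c}) \le u^\perp$ and not $\eta(\vec{c}) \in \bar{\rho}(C_1 \backslash \{\vec{0}\})$ -- which is why the theorem is phrased with the ceiling $u^\perp$ rather than with membership in $\bar{\rho}(C_1 \backslash \{\vec{0}\})$, in contrast with the non-dual Theorem~\ref{thethat}.
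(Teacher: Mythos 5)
Your proof is correct. Note, however, that the paper does not give its own proof of Theorem~\ref{theny}; it cites this result as \cite[Theorem 14]{geil2014relative}, so there is no in-paper argument to compare against. That said, your argument is exactly the natural one and almost certainly mirrors the cited proof: take an admissible $U \subseteq C_2^\perp$ with $\dim U = v$ and $U \cap C_1^\perp = \{\vec{0}\}$, check that the index set $\eta(U\setminus\{\vec{0}\})$ satisfies the two side conditions (the disjointness from $\bar{\rho}(C_2)$ via $\vec{c}\cdot\vec{d}=0$ for $\vec{c}\in C_2^\perp$, $\vec{d}\in C_2$; the upper bound $u^\perp$ via $\mathrm{Span}\{\vec{b}_1,\dots,\vec{b}_{u^\perp}\}^\perp \subseteq C_1^\perp$), and then invoke Proposition~\ref{thenaada}. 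Your closing observation is also accurate: the bound $i_s \le u^\perp$ must hold for every $s$, so the hypothesis $U \cap C_1^\perp = \{\vec{0}\}$ is used at full strength rather than just $U \not\subseteq C_1^\perp$, and this is precisely why the theorem is stated with the ceiling $u^\perp$ rather than with membership in $\bar{\rho}(C_1 \setminus \{\vec{0}\})$, in contrast to the primary-code Theorem~\ref{thethat}.
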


\begin{example}\label{ex7}
This is a continuation of the previous examples. If
$C_2={\mbox{Span}}\{\vec{b}_1, \vec{b}_2,\vec{b}_3,
\vec{b}_4\}$ and
$C_1={\mbox{Span}}\{\vec{b}_1, \vec{b}_2,\vec{b}_3, \vec{b}_4,
\vec{b}_6\}$, then $$M_1(C_2^\perp,C_1^\perp) \geq \min \{ \# V_5, \#
V_6 \}=\min \{ 4, 3 \} = 3.$$ However, if 
$C_1={\mbox{Span}}\{\vec{b}_1, \vec{b}_2,\vec{b}_3, \vec{b}_4,\vec{b}_5\}$ while
$C_2$ is unchanged then $M_1(C_2^\perp,C_1^\perp) \geq \# V_5=4$.\demo
\end{example}
\ \\

The above theorems and examples lead us to consider the following
family of codes, which have a good behavior with respect to the
applications described in the introduction.  Consider a Cartesian product $S=S_1\times \cdots \times S_m \subseteq
{\mathbb{F}}_q^m$. For $i=1, \ldots , m$ define the one-variable polynomial
\begin{eqnarray}
F_i(X_i)=\prod_{\alpha \in S_i}(X_i-\alpha), \label{eqfi}
\end{eqnarray}
and consider the
vanishing ideal of $S$,  
$I=\langle F_1(X_1), \ldots , F_m(X_m)\rangle \subset
{\mathbb{F}}_q[X_1, \ldots , X_m]$. 
 We write $R={\mathbb{F}}_q[X_1, \ldots , X_m]/I$ and enumerate
$S$ as $S=\{\alpha_1, \ldots  \alpha_n\}$, where $n=\#
S=\prod_{i=1}^ms_i$. Here, we use the notation $s_i=\# S_i$. As in
the above examples we then obtain a vector space homomorphism ${\mbox{\textnormal{ev}}}:R \rightarrow
{\mathbb{F}}_q^n$ defined by ${\mbox{\textnormal{ev}}}(F+I)=(F(\alpha_1), \ldots ,
F(\alpha_n))$. Now let 
\begin{eqnarray}
\Delta(s_1, \ldots , s_m)&=&\{X_1^{i_1} \cdots X_m^{i_m} \mid 0 \leq i_t <s_t, t=1,
          \ldots , m\} = \{N_1, \ldots , N_n\}, \nonumber 
\end{eqnarray}
where the enumeration of the $N_i$'s is with respect to an arbitrary
(but fixed) monomial ordering $\prec$. For general $L \subseteq \Delta(s_1, \ldots , s_m)$
define 
$$C(L)={\mbox{Span}} \{ {\mbox{\textnormal{ev}}}(X_1^{i_1} \cdots X_m^{i_m}+I) \mid
X_1^{i_1}\cdots X_m^{i_m} \in L\},$$
which is clearly a code of length $n$. 
For the purpose of applying the Feng-Rao bounds to the codes $C(L)$ and $C(L)^\perp$ we 
introduce 
 functions $D$ and $D^\perp$.
\begin{definition}
Given $X_1^{i_1} \cdots X_m^{i_m}\in \Delta(s_1,\ldots , s_m)$,   
define
$$D(X_1^{i_1} \cdots X_m^{i_m})=\prod_{t=1}^{m}(s_t-i_t) {\mbox{ and }}
D^\perp(X_1^{i_1} \cdots
X_m^{i_m})=\prod_{t=1}^m(i_t+1).$$ More generally, for $K
\subseteq \Delta(s_1, \ldots , s_m)$ let
\begin{eqnarray}
D(K)&=&\# \{ N \in \Delta(s_1, \ldots , s_m) \mid N {\mbox{ is divisible by some }} M \in
        K\}, \nonumber \\
D^\perp(K)&=&\# \{N \in \Delta (s_1, \ldots , s_m) \mid N {\mbox{ divides
                              some }} M \in K\}.\nonumber
\end{eqnarray}
\end{definition}

Observe that $D(X_1^{i_1}, \ldots , X_m^{i_m})=D(\{X_1^{i_1}, \ldots ,
X_m^{i_m}\})$ and $D^\perp(X_1^{i_1}, \ldots ,
X_m^{i_m})=D^\perp(\{ X_1^{i_1}, \ldots ,
X_m^{i_m}\})$.

We are now ready to describe the relative parameters of the evaluation
codes introduced above.

\begin{theorem}\label{theour}
Consider $S=S_1 \times \cdots \times S_m
\subseteq {\mathbb{F}}_q^m$ and $L_2 \subset L_1 \subseteq \Delta
(s_1, \ldots , s_m)=\{N_1, \ldots ,
N_n\}$ where the enumeration of the $N_i$'s is with respect to
an arbitrary (but fixed) monomial ordering $\prec$. Then the codes $C(L_1)$ and $C(L_2)$ are of length $n$ and
their codimension equals
$\#L_1-\#L_2$. Furthermore, for $v=1, \ldots , \# L_1 - \# L_2$ we
have
\begin{eqnarray}
&M_v(C(L_1),C(L_2))\geq  \min \{{\mbox{$D$}}(K) \mid K \subseteq \{ N_u,
                          \ldots , N_n\} \cap L_1, \#K=v\},
                          \label{bnd}\\
&M_v(C(L_2)^\perp,C(L_1)^\perp)\geq  \min \{
                                       D^\perp(K)
                                      \mid K \subseteq \{N_1, \ldots
                                      N_{u^\perp} \} \backslash L_2, 
                                      \#K=v   \}, \label{bndperp} \
\end{eqnarray}
where $u=\min \{i \mid N_i \in L_1 \backslash L_2\}$ and $u^\perp =
\max \{i \mid N_i \in L_1 \}$.
\end{theorem}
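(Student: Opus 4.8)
The plan is to make the Feng--Rao data $\bar{\rho}$, $\Lambda_i$ and $V_l$ explicit for the basis at hand and then invoke Theorems~\ref{thethat} and~\ref{theny} directly. First I would record the standard structural facts, exactly as in Example~\ref{ex1}: the set $\{F_1(X_1),\dots,F_m(X_m)\}$ is a Gr\"obner basis of $I$ with respect to \emph{any} monomial ordering, since its leading monomials $X_1^{s_1},\dots,X_m^{s_m}$ are pairwise coprime; the corresponding footprint is $\Delta(s_1,\dots,s_m)$, which has $n$ elements; and $\ev\colon R\to{\mathbb{F}}_q^n$ is a vector space isomorphism, being surjective by Lagrange interpolation and hence bijective by equality of dimensions. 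Thus $\ccB=(\ev(N_1+I),\dots,\ev(N_n+I))$ is a basis of ${\mathbb{F}}_q^n$, and for a polynomial $F=\sum_i c_i N_i$ supported on $\Delta$ we have $\ev(F+I)=\sum_i c_i\vec{b}_i$, so by Definition~\ref{defro} $\bar{\rho}(\ev(F+I))$ is the largest $i$ with $c_i\neq 0$, i.e.\ the index of the $\prec$-leading monomial $\lm_\prec(F)$; in particular $\bar{\rho}(\vec{b}_i)=i$. It follows that $C(L_j)=\mathrm{Span}\{\vec{b}_i\mid N_i\in L_j\}$ has dimension $\#L_j$ and $\bar{\rho}(C(L_j)\setminus\{\vec{0}\})=\{i\mid N_i\in L_j\}$; hence $C(L_2)\subset C(L_1)\subseteq{\mathbb{F}}_q^n$ have length $n$ and codimension $\#L_1-\#L_2$, and the integers $u$ of Theorem~\ref{thethat} and $u^\perp$ of Theorem~\ref{theny} become $u=\min\{i\mid N_i\in L_1\setminus L_2\}$ and $u^\perp=\max\{i\mid N_i\in L_1\}$, matching the statement.

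The heart of the argument is the claim that if $N_i=\prod_t X_t^{a_t}$ and $N_j=\prod_t X_t^{b_t}$ satisfy $a_t+b_t<s_t$ for every $t$, then $(i,j)$ is OWB and $\bar{\rho}(\vec{b}_i\ast\vec{b}_j)$ is the index of $\prod_t X_t^{a_t+b_t}$. To prove it I would note first that, because the $F_t$ involve pairwise disjoint variables, the normal form modulo $\{F_1,\dots,F_m\}$ of an arbitrary monomial $\prod_t X_t^{c_t}$ factors as $\prod_t g_t(X_t)$ (that product being supported on $\Delta$ and congruent to the monomial modulo $I$), where $g_t$ is the normal form of $X_t^{c_t}$ modulo $F_t$, a univariate polynomial of degree $d_t\le\min(c_t,s_t-1)$; expanding, $\prod_t X_t^{d_t}$ occurs with non-zero coefficient while every monomial in the support has each $X_t$-exponent at most $d_t$, so $\lm_\prec\bigl(\overline{\prod_t X_t^{c_t}}\bigr)=\prod_t X_t^{d_t}\preceq\prod_t X_t^{c_t}$. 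Applying this to $\prod_t X_t^{c_t}=N_{i'}N_j$ for any $i'<i$, so $N_{i'}=\prod_t X_t^{a'_t}\prec N_i$, gives $\lm_\prec(\overline{N_{i'}N_j})\preceq N_{i'}N_j\prec N_iN_j$, the last step by compatibility of $\prec$ with multiplication. Since $a_t+b_t<s_t$, the monomial $N_iN_j$ lies in $\Delta$ and equals its own normal form, so passing to $\bar{\rho}$ --- which reads off the index of the $\prec$-leading monomial --- yields $\bar{\rho}(\vec{b}_{i'}\ast\vec{b}_j)<\bar{\rho}(\vec{b}_i\ast\vec{b}_j)$ for all $i'<i$, which is the definition of OWB, while $\bar{\rho}(\vec{b}_i\ast\vec{b}_j)$ is the index of $N_iN_j$. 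I expect this step to be the main obstacle: for a general monomial ordering $N_{i'}\prec N_i$ does \emph{not} force $a'_t\le a_t$ coordinate-wise, so one cannot argue degree by degree as in Example~\ref{ex2}; what saves the argument is the coordinate-wise domination $\lm_\prec(\overline{N_{i'}N_j})\preceq N_{i'}N_j$, which stems from the product-of-univariates shape of normal forms special to a Cartesian-product ideal.

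Finally I would cash in the claim. For $N_i\in\Delta$, any multiple $N\in\Delta$ of $N_i$ can be written $N=N_iN_j$ with $N_j\in\Delta$, and $\deg_{X_t}N_i+\deg_{X_t}N_j=\deg_{X_t}N<s_t$, so by the claim $(i,j)$ is OWB with $\bar{\rho}(\vec{b}_i\ast\vec{b}_j)$ equal to the index of $N$; hence $\Lambda_i\supseteq\{\,\text{index of }N\mid N\in\Delta,\ N_i\text{ divides }N\,\}$, and therefore $\#\bigcup_{i\in A}\Lambda_i\ge D(\{N_i\mid i\in A\})$ for every $A\subseteq\ccI$. Symmetrically, taking $N_j=N_l/N_i$ for divisors $N_i\in\Delta$ of $N_l$ gives $V_l\supseteq\{\,\text{index of }N\mid N\in\Delta,\ N\text{ divides }N_l\,\}$, so $\#\bigcup_{l\in A}V_l\ge D^\perp(\{N_l\mid l\in A\})$. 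Feeding the first inequality into Theorem~\ref{thethat}, and using the bijection between admissible index tuples $u\le i_1<\dots<i_v$ with all $N_{i_s}\in L_1$ and $v$-element subsets $K=\{N_{i_1},\dots,N_{i_v}\}\subseteq\{N_u,\dots,N_n\}\cap L_1$, yields \eqref{bnd}; feeding the second into Theorem~\ref{theny}, and noting that $i_s\notin\bar{\rho}(C(L_2))$ amounts to $N_{i_s}\notin L_2$, yields \eqref{bndperp}. (Both minima are over non-empty sets, since the $\ell=\#L_1-\#L_2\ge v$ monomials of $L_1\setminus L_2$ belong to each of the two relevant families.) This completes the proof.
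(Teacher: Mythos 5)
Your proposal is correct and its overall architecture matches the paper's: establish that $\{\ev(N_1+I),\dots,\ev(N_n+I)\}$ is a basis and that $\bar\rho$ reads off the index of the $\prec$-leading monomial, show that $\Lambda_i$ (resp.\ $V_l$) contains all indices of multiples (resp.\ divisors) of $N_i$ inside $\Delta$, hence $\#\bigcup\Lambda_{i_t}\ge D(\{N_{i_t}\})$ and $\#\bigcup V_{l_t}\ge D^\perp(\{N_{l_t}\})$, and then invoke Theorems~\ref{thethat} and~\ref{theny}. The difference is one of detail rather than route: where the paper simply states \emph{``Notice that, regardless of the choice of monomial ordering $\prec$, $D(N_i)\le\#\Lambda_i$\dots''} with a footnote pointing to Example~\ref{ex2}, you give a complete argument, via the observation that because $I$ is generated by univariate $F_t$'s in disjoint variables, the normal form of any monomial factors as a product of univariate normal forms whose leading term therefore coordinate-wise dominates every other term. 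One small quibble: your remark that \emph{``one cannot argue degree by degree as in Example~\ref{ex2}''} for a general $\prec$ is overly cautious --- the chain in that example, $M\preceq N_{i'}N_j\prec N_iN_j$, already relies only on the division-algorithm fact that the remainder's leading monomial is $\preceq$ the dividend's, and on multiplicativity of $\prec$, neither of which is specific to $\prec_{\deg}$, so the paper's appeal to that example is in fact sound for arbitrary orderings. Your product-of-univariates argument gives a second, perhaps more transparent, reason why the key inequality is ordering-independent; both close the gap that the paper leaves implicit.
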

\begin{proof}
We start by proving that $\{{\mbox{\textnormal{ev}}}(N_1+I),\ldots ,
{\mbox{\textnormal{ev}}}(N_n+I)\}$ is a basis for ${\mathbb{F}}_q^n$ as a
vector space over ${\mathbb{F}}_q$. This fact implies that the dimension of
$C(L_i)$ equals $\#L_i, i=1,2$, and the formula for the
codimension follows. Observe that
\begin{eqnarray}
{\mbox{\textnormal{ev}}}(F(X_1, \ldots , X_m)+I)={\mbox{\textnormal{ev}}}\big(F(X_1, \ldots , X_m)
-A_1(X_1, \ldots , X_m)F_1(X_1)-\cdots\nonumber \\
 {\mbox{ \ \ \ \ \ \ \ \ \ \ \ \ \ \ \ \ \ \ \ \ }} - A_m(X_1, \ldots ,
X_m)F_m(X_1, \ldots , X_m)+I\big)\nonumber 
\end{eqnarray}
for any polynomials $A_1, \ldots , A_m$ in the variables $X_1, \ldots
X_m$ with
coefficients in ${\mathbb{F}}_q$. Hence, we
may assume that $\deg_{X_1} F < \deg F_1=s_1,
\ldots$, $\deg_{X_m} F < \deg F_m = s_m$. Using Lagrange-interpolation
we next see that ${\mbox{\textnormal{ev}}}$ is surjective and, as $\Delta(s_1, \ldots , s_m)$ is of the same
size as the image ${\mathbb{F}}_q^n$, the considered set is indeed a
basis for ${\mathbb{F}}_q^n$. As in the above examples we enumerate
the basis elements $\{\vec{b}_1={\mbox{\textnormal{ev}}}(N_1+I), \ldots ,
\vec{b}_n={\mbox{\textnormal{ev}}}(N_n+I) \}$ (meaning that we order it according
to the monomial ordering $\prec$). \\
Notice that, regardless of the choice of monomial ordering $\prec$,
$D(N_i) \leq \# \Lambda_i$, 
and that in larger generality\footnote{Actually equalities hold -- which
  can be seen by applying similar arguments as in Example~\ref{ex2} --
  but we shall not need this fact.}  $D(\{N_{i_1}, \ldots , N_{i_m}\}) \leq \#
\cup_{t=1}^m \Lambda_{i_t}$. 
Therefore (\ref{bnd})  follows from
Theorem~\ref{thethat}. Similarly, regardless of the choice of monomial
ordering $\prec$, $D^\perp(N_i) \leq \# V_i$ and in larger
generality\footnote{Actually again equalities hold, but we shall
not need this fact.}
 $D^\perp(\{N_{i_1}, \ldots , N_{i_m}\}) \leq \#
\cup_{t=1}^m V_{i_t}$. Hence, (\ref{bndperp}) follows from Theorem~\ref{theny}.
\end{proof}
In the next 
two sections we show  a way to choose $L_2 \subset L_1$ such that the parameters 
$\ell=\dim C(L_1)-\dim C(L_2)=\# L_1-\# L_2$, $M_1(C(L_1),C(L_2))$,
and $M_1(C(L_2)^\perp,C(L_1)^\perp)$ are good. 
We study two separate cases. The first case, which we treat in
Section~\ref{seclarge}, deals with relatively large codimension. The second case,
which we treat in Section~\ref{secsmall}, concerns relatively small codimensions.

Before studying these two families of codes we briefly discuss the
decoding of the related asymmetric quantum codes. Observe that the decoding algorithm for order domain codes described
in~\cite[Section 6.3]{handbook} can be applied in the more general
setting of linear codes with Feng-Rao designed minimum
distance. Hence, it can be applied to all codes of the present paper. This holds both for dual
codes~\cite{MM} and primary codes~\cite{agismm}. The decoding algorithm which
corrects errors up to half the designed minimum distance uses
${\mathcal{O}}(n^3)$ operations, where $n$ is the length of the
code. In~\cite[Appendix A]{MR2588125} Duursma and Park provided a similar
algorithm correcting errors up to half the designed {\it{relative}}
minimum distance. This was done at the general level of linear codes
described by means of their parity check matrix. The
application in connection with  decoding
of asymmetric quantum codes is as follows. To decode
both phase-shift and qudit-flip errors up to half the designed values
of $d_z$ and $d_x$ one
will need two decoding algorithms, namely one which decodes up to
$\lfloor (M_1(C_1,C_2)-1)/2\rfloor$ errors in connection with $C_2
\subset C_1$, and another which corrects up to $\lfloor (M_1(C_2^\perp,C_1^\perp)-1)/2\rfloor$ errors in connection with $C_1^\perp
\subset C_2^\perp$~\cite{q8,q7}. Duursma and Park's algorithm applies to the last
task, but not to the first
 in its present form. It is an open research problem to modify 
the decoding algorithm from~\cite{MR2588125} for
 general nested dual codes, so that it also works for nested
 {\it{primary}} codes. This probably could be done using the material in~\cite{agismm}. 
In the absence of such a
translation one may instead apply the algorithm from~\cite{agismm} to
correct only up to $\lfloor (d(C_1)-1)/1\rfloor$ errors in connection
with the primary nested codes.

\section{Relatively large codimension}\label{seclarge}
One of the nice features of the Feng-Rao bounds is that they come with
improved code constructions. In the setting of the codes in the
previous section, by applying the improved construction for primary codes~\cite{AG}, we obtain  a code $C(L_1)$ of designed
distance $\delta$ and maximal dimension if we choose
\begin{eqnarray}
L_1=\{X_1^{i_1} \cdots X_m^{i_m} \in \Delta(s_1, \ldots , s_m) \mid D(X_1^{i_1}\cdots X_m^{i_m})
  \geq \delta\}. \label{eqsnabeleen}
\end{eqnarray}
Similarly, by applying the improved construction for dual codes~\cite{FR1,FR2} we obtain a code $C(L_2)^\perp$ of designed distance
$\delta^\perp$ and maximal dimension if we choose
\begin{eqnarray}
L_2=\{X_1^{i_1} \cdots X_m^{i_m} \in \Delta(s_1, \ldots , s_m) \mid D^\perp(X_1^{i_1}\cdots X_m^{i_m})
  < \delta^\perp \}. \label{eqsnabeltoo}
\end{eqnarray}
Our first proposal for constructing good pairs of nested codes
is to choose $L_1$ and $L_2$ as in
(\ref{eqsnabeleen}) and (\ref{eqsnabeltoo}) with $L_2 \subset
L_1$. We then obtain
\begin{eqnarray}
M_1(C(L_1),C(L_2)) &\geq d(C(L_1)) &\geq \delta ,\label{eq1}\\
M_1(C(L_2)^\perp , C(L_1)^\perp) &\geq d(C(L_2)^\perp)&\geq
  \delta^\perp.\label{eq2}
\end{eqnarray}
The codimension $\ell= \#L_1-\#L_2$ is the largest possible with these
designed parameters as, by Proposition~\ref{prothe} and
Proposition~\ref{thenaada}, 
$L_1$ is as large as possible and $L_2$ is as small as possible, such that
(\ref{eq1}) and (\ref{eq2}) hold. Observe that (\ref{eq1}) and
(\ref{eq2}) are independent of the choice of monomial
ordering $\prec$ in Theorem~\ref{theour}, as the integers $u$ and
$u^\prime$ from that theorem play no role here. 
Note that $D(X_1^{i_1} \cdots X_m^{i_m}) =\delta$ is
a concave function on the domain under consideration, while
$D^\perp(X_1^{i_1} \cdots X_m^{i_m}) =\delta^\perp$ is a convex
function. Therefore the necessary condition that $L_2 \subset L_1$
creates a restriction on how small a codimension can be  for
each fixed value of $\delta$ (low codimensions require another method which we
describe in the next section).  The following theorem summarizes the
method described.
\begin{theorem}\label{thebigB}
With the above notation, fix two positive integers $\delta$ and
$\delta^\perp$ such that the monomial sets $L_1$ and $L_2$ described
in~(\ref{eqsnabeleen}) and (\ref{eqsnabeltoo}), respectively, satisfy
that $L_2 \subset L_1$. Then the evaluation codes $C(L_2)\subset
C(L_1)$ are of codimension $\ell=\#L_1-\#L_2$, and the relative minimum
distances satisfy $M_1(C(L_1),C(L_2))\geq
\delta$ and $M_1(C(L_2)^\perp,C(L_1)^\perp)\geq \delta^\perp$.
\end{theorem}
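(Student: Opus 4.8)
The plan is to observe that Theorem~\ref{thebigB} is essentially a repackaging of the already-established machinery: the bounds~(\ref{eq1}) and~(\ref{eq2}) follow from Theorem~\ref{theour} once we know the necessary inclusions and dimension count, so the whole proof amounts to verifying the hypotheses of Theorem~\ref{theour} and unwinding the definitions. First I would fix $\delta$ and $\delta^\perp$ as in the statement, take $L_1$ as in~(\ref{eqsnabeleen}) and $L_2$ as in~(\ref{eqsnabeltoo}), and record the standing assumption $L_2 \subset L_1$. The codimension claim is then immediate from Theorem~\ref{theour}: $C(L_2) \subset C(L_1)$ are of length $n$ with $\dim C(L_i) = \# L_i$, hence $\ell = \# L_1 - \# L_2$.

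Next I would address the two distance bounds. For $M_1(C(L_1),C(L_2))$, apply~(\ref{bnd}) from Theorem~\ref{theour} with $v=1$: the bound reads $M_1(C(L_1),C(L_2)) \geq \min\{ D(N_i) \mid N_i \in \{N_u, \ldots, N_n\} \cap L_1 \}$. Since every element of this index set lies in $L_1$, and membership in $L_1$ means precisely $D(X_1^{i_1}\cdots X_m^{i_m}) \geq \delta$ by~(\ref{eqsnabeleen}), each term in the minimum is at least $\delta$, so the bound is $\geq \delta$. Dually, for $M_1(C(L_2)^\perp,C(L_1)^\perp)$ I would invoke~(\ref{bndperp}) with $v=1$: it gives $M_1(C(L_2)^\perp,C(L_1)^\perp) \geq \min\{ D^\perp(N_i) \mid N_i \in \{N_1,\ldots,N_{u^\perp}\} \setminus L_2 \}$. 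Every monomial $N$ in this set is outside $L_2$, and~(\ref{eqsnabeltoo}) says $N \notin L_2$ is equivalent to $D^\perp(N) \geq \delta^\perp$; hence every term is at least $\delta^\perp$ and the bound follows.

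There is essentially no hard part here — the statement is a corollary of Theorem~\ref{theour} — but the one point that deserves a word of care is why the construction is canonical, i.e.\ why this $L_1, L_2$ really makes the codimension as large as the designed parameters allow; this is the content of the remark preceding the theorem rather than of the theorem itself, so for the proof of the displayed statement I would simply note that it is not needed. The only genuine verification is the translation between "lies in $L_1$" / "lies outside $L_2$" and the inequalities on $D$ and $D^\perp$, which is immediate from~(\ref{eqsnabeleen}) and~(\ref{eqsnabeltoo}). I would close by remarking that, as already observed in the text, the resulting bounds $\delta, \delta^\perp$ do not depend on the choice of monomial ordering $\prec$, since the indices $u$ and $u^\perp$ only restrict which monomials enter the minima, not the values $D(N), D^\perp(N)$ attached to them, and we have bounded those values uniformly over all of $L_1$ (resp.\ the complement of $L_2$).
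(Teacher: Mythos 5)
Your proof is correct and the substance is the same as the paper's — the Feng–Rao machinery applied to the particular choice of $L_1$ and $L_2$ — but the route differs slightly. The paper does not apply Theorem~\ref{theour} directly to the pair $(C(L_1),C(L_2))$; instead it uses the chain of inequalities in~(\ref{eq1})–(\ref{eq2}), namely $M_1(C(L_1),C(L_2)) \geq d(C(L_1)) \geq \delta$ and $M_1(C(L_2)^\perp,C(L_1)^\perp) \geq d(C(L_2)^\perp) \geq \delta^\perp$, where the first step in each is the general fact (recalled in the introduction) that a relative minimum distance dominates the minimum distance of the larger/dual code, and the second step is the Feng–Rao bound (Propositions~\ref{prothe} and~\ref{thenaada}) combined with the defining property of $L_1$, resp.\ $L_2$. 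You instead invoke~(\ref{bnd}) and~(\ref{bndperp}) with $v=1$ and observe that the index sets $\{N_u,\ldots,N_n\}\cap L_1$ and $\{N_1,\ldots,N_{u^\perp}\}\setminus L_2$ are contained in $L_1$, resp.\ in the complement of $L_2$, over which $D$, resp.\ $D^\perp$, is bounded below by $\delta$, resp.\ $\delta^\perp$. Both arguments are valid; the paper's phrasing through $d(C(L_1))$ and $d(C(L_2)^\perp)$ is slightly more informative because it makes explicit that the \emph{designed} distance is met already at the level of ordinary minimum distance, a fact the paper later uses in the remark on purity of the resulting asymmetric quantum codes. Your closing observation about independence from the monomial ordering is correct and matches the paper's own comment. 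One minor point you leave implicit: the inclusion $C(L_2)\subset C(L_1)$ is strict because $L_2\subsetneq L_1$ and the evaluations $\{\mathrm{ev}(N_i+I)\}$ form a basis, as established in the proof of Theorem~\ref{theour}; it is worth stating this in one line before invoking that theorem.
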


\begin{remark}
The case $\delta=\delta^\perp$ and $S={\mathbb{F}}_q^m$ is studied in~\cite{galindo2015new,galindo2015stabilizer} in
connection with symmetric quantum codes. There it is characterized
when the corresponding sets $L_1$ and $L_2$ satisfy the inclusion $L_2
\subset L_1$. 
\end{remark}
\begin{remark}
It is possible to show that
$d(C(L_1))=M_1(C(L_1),C(L_2))$, but for general Cartesian product
point sets it is unsettled if
$$d(C(L_2)^\perp,C(L_1)^\perp)=M_1(C(L_2)^\perp,C(L_1)^\perp),$$ leaving
it undecided if the related asymmetric quantum codes are pure or not.
\end{remark}

Below we analyze the parameters of the nested code pairs in
Theorem~\ref{thebigB} when the sets $S_1, \ldots , S_m$ are all of the
same size, but first we illustrate the theorem with an example.

\begin{example}\label{ex8}
This is a continuation of the examples in Section~\ref{sec4} where we
considered $S={\mathbb{F}}_7^\ast \ast {\mathbb{F}}_7^\ast$. From
Figure~\ref{fig0.5} we see that for $\delta=12$ the set $L_1$ in (\ref{eqsnabeleen})
becomes
\begin{eqnarray}
L_1&=&\{ 1, X, Y, X^2,XY,Y^2,
X^3,X^2Y,XY^2,Y^3,X^4,\nonumber \\
&& {\mbox{ \ {\hspace{2.3cm}} \ }}X^3Y,X^2Y^2,XY^3,Y^4, X^3Y^2,X^2Y^3\}.\label{eqypti1}
\end{eqnarray}

Hence, $\# L_1=17$. According to Figure~\ref{fig0.75}, $\delta^\perp =
6 $ is the highest possible value of $\delta^\perp$ such
that all $X^\alpha Y^\beta \in \Delta(6,6)$ with $D^\perp (X^\alpha
Y^\beta ) < \delta^\perp$ also belong to $L_1$. The
corresponding set $L_2$ in (\ref{eqsnabeltoo}) then becomes
\begin{eqnarray}
L_2=\{ 1, X,Y,X^2,XY,Y^2,X^3,Y^3,X^4,Y^4\}\label{eqypti2}
\end{eqnarray}
which is of size $10$. Hence, the codimension between $C(L_1)$ and
$C(L_2)$ is $7$, and $M_1(C(L_1),C(L_2))\geq 12$ and $M_1(C(L_2)^\perp,
C(L_1)^\perp)\geq 6$. In a similar fashion we obtain the remaining parameters in
Table~\ref{tabwone}. Note that if we have a code pair $C(L_2) \subset
C(L_1)$ with designed parameters $\delta=a$, $\delta^\perp=b$ and $\#
L_1- \# L_2=\ell$
then there exist $L_2^\prime \subset L_1^\prime$ with 
designed parameters of $C(L_2^\prime) \subset C(L_1^\prime)$  being $\delta=b$,
$\delta^\perp =a$ and $\# L_1^\prime - \# L_2^\prime=\ell$. Hence, in
Table~\ref{tabwone} we only list parameters with $\delta \geq
\delta^\perp$. We also exclude cases with $\delta^\perp=1$
(corresponding to $C_2=\{\vec{0}\}$).
\begin{table}
\begin{center}
\begin{tabular}{c|rrrrrrrrrrrrrrr}
$\ell$&2&1&3&1&3&5&3&5&7&2&5&7&9\\
$\delta$&30&25&25&24&24&24&20&20&20&18&18&18&18\\
$\delta^\perp$&2&3&2&4&3&2&4&3&2&5&4&3&2\\
\ \\
$\ell$&3&6&8&10&5&8&10&12&7&9&12&14&16\\
$\delta$&16&16&16&16&15&15&15&15&12&12&12&12&12 \\
$\delta^\perp$&5&4&3&2&5&4&3&2&6&5&4&3&2 \\
\ \\
$\ell$&9&11&14&16&18&10&12&15&17&19&12&14&17\\
$\delta$&10&10&10&10&10&9&9&9&9&9&8&8&8 \\
$\delta^\perp$&6&5&4&3&2&6&5&4&3&2&6&5&4 \\
\ \\
$\ell$&19&21&16&18&21& 23&25&20&23&25&27&26&28 \\
$\delta$&8&8&6&6&6&6&6&5&5&5&5&4&4 \\
$\delta^\perp$&3&2&6&5&4& 3&2&5&4&3&2&4&3 \\
\ \\
$\ell$ &30&30&32&34 \\
$\delta$&4&3&3&2    \\     
$\delta^\perp$&2&3&2&2   
\end{tabular}
\end{center}
\caption{Parameters from Example~\ref{ex8}.}
\label{tabwone}
\end{table}
\ \demo
\end{example}
In the following we find closed formula expressions for the
parameters of the coset construction in Theorem~\ref{thebigB} when
$S_1, \ldots , S_m$ are all of the same size. We start with a lemma
explaining for which choices of $\delta$ and $\delta^\perp$ the theorem 
works.
\begin{lemma}\label{lembetingelse}
Assume $s=s_1=\cdots =s_m$ and consider $\delta \in \{1, \ldots ,
s^m\}$. Let $v \in \{ 0, \ldots , m-1 \}$ be such that $s^v \leq
\delta \leq s^{v+1}$. If $\delta^\perp \leq \lfloor 
(s-\frac{\delta}{s^v}+1)s^{m-v-1}\rfloor$ then the set $L_2$
from~(\ref{eqsnabeltoo}) is contained in the set $L_1$ from~(\ref{eqsnabeleen}).
\end{lemma}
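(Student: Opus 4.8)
The statement is purely combinatorial: it compares two "staircase" subsets of the box $\Delta(s,\dots,s)=\{0,\dots,s-1\}^m$. Writing a monomial as its exponent vector $\mathbf{i}=(i_1,\dots,i_m)$, set $D(\mathbf{i})=\prod_t(s-i_t)$ and $D^\perp(\mathbf{i})=\prod_t(i_t+1)$. Then $L_1=\{\mathbf{i}\mid D(\mathbf{i})\ge\delta\}$ and $L_2=\{\mathbf{i}\mid D^\perp(\mathbf{i})<\delta^\perp\}$, and we must show $L_2\subseteq L_1$ under the stated numeric hypothesis on $\delta^\perp$. The key observation is the substitution $j_t=s-1-i_t$, which sends the box to itself bijectively and turns $D^\perp(\mathbf{i})=\prod(i_t+1)$ into $\prod(s-j_t)=D(\mathbf{j})$. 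So $\mathbf{i}\in L_2\iff D(\mathbf{j})<\delta^\perp$, and the claim $\mathbf{i}\in L_1$ reads $D(\mathbf{i})=\prod(s-i_t)=\prod(j_t+1)=D^\perp(\mathbf{j})\ge\delta$. Hence the whole lemma is equivalent to the elementary implication: for $\mathbf{j}$ in the box, $\prod(s-j_t)<\delta^\perp\ \Rightarrow\ \prod(j_t+1)\ge\delta$.

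First I would reduce to a worst case. Fixing the "small" constraint $\prod(s-j_t)<\delta^\perp$, we want a lower bound on $\prod(j_t+1)$; equivalently, after renaming $a_t=j_t+1\in\{1,\dots,s\}$, the hypothesis is $\prod(s+1-a_t)\le\delta^\perp-1$ and we want $\prod a_t\ge\delta$. The product $\prod a_t$ is minimized, subject to the knapsack-type constraint $\prod(s+1-a_t)\le\delta^\perp-1$, at an extreme point where as many coordinates as possible are pushed to the boundary value $a_t=1$ (which costs the factor $s$ in the constraint) and at most one coordinate takes an intermediate value. This is exactly why the number $v$ in the statement is defined by $s^v\le\delta\le s^{v+1}$: it is the threshold number of coordinates that can be "free". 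Concretely I would argue: if $\mathbf{j}\in L_2$, i.e. $\prod(s-j_t)<\delta^\perp\le\lfloor(s-\tfrac{\delta}{s^v}+1)s^{m-v-1}\rfloor$, then at least $m-v-1$ of the factors $(s-j_t)$ must equal their minimum... let me instead phrase it as: at most $v+1$ coordinates can have $s-j_t\ge 2$, and more precisely the "mass" $\prod(s-j_t)$ being below that bound forces $\prod(j_t+1)$ to be at least $\delta$ by the reciprocal relationship $\prod(s-j_t)\cdot\prod(j_t+1)$... no, those don't multiply to a constant. The clean route is the AM–GM/rearrangement style argument on the single non-boundary coordinate.

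So the core computation I would carry out is this. Sort coordinates so that $j_1\le\cdots\le j_m$. If $\mathbf{j}\in L_2$ then $\prod_t(s-j_t)<\delta^\perp$. I claim at most $v$ of the $j_t$ can be nonzero, except possibly one extra coordinate with a controlled value. Suppose $v+1$ coordinates satisfy $j_t\ge 1$, i.e. $s-j_t\le s-1$; then $\prod(s-j_t)\le (s-1)^{v+1}s^{\,\#\{t:j_t=0\}}$, but this bound is too weak directly — instead I use that $s-j_t\ge 1$ always, so $\prod(s-j_t)\ge$ the product over the "large" coordinates. The right inequality to extract from $\mathbf{j}\in L_2$: letting $p=\prod_{t}(s-j_t)$, we have $p\le\delta^\perp-1<(s-\tfrac{\delta}{s^v}+1)s^{m-v-1}$. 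Since each factor is a positive integer $\le s$, and the product of all but the largest-index $m-v-1$ of them is an integer $\ge 1$, we get a lower bound on that largest factor which, after translating back via $a_t=j_t+1$, $a_t\cdot(s+1-a_t)$-type bookkeeping, yields $\prod a_t\ge \delta$. The main obstacle is making this extremal/rounding argument airtight — in particular handling the floor in $\lfloor(s-\tfrac{\delta}{s^v}+1)s^{m-v-1}\rfloor$ and the case where $\delta$ is not a power of $s$, so that the "free" coordinate takes a value strictly between $1$ and $s$ and one must check the inequality $\lceil\delta/s^v\rceil\cdot s^v\ge\delta$ interacts correctly with the constraint. Everything else — the duality substitution, the reduction to the product inequality, the sorting — is routine.
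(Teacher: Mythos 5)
Your instinct is right on the big picture: the lemma is equivalent to the claim that every integer exponent vector $(a_1,\dots,a_m)\in\{0,\dots,s-1\}^m$ with $\prod_t(s-a_t)<\delta$ satisfies $\prod_t(a_t+1)\ge\delta^\perp$, and the extremal configuration against which the hypothesis on $\delta^\perp$ is calibrated is $(0,\dots,0,\,s-\delta/s^v,\,s-1,\dots,s-1)$ with $v$ zeros and $m-v-1$ copies of $s-1$ --- exactly the rational point the paper plugs into $\tilde{D}$ and $\tilde{D}^\perp$. However, your opening duality substitution $j_t=s-1-i_t$ is a pure relabelling: it swaps $D$ with $D^\perp$, but since the hypothesis relating $\delta$ and $\delta^\perp$ is not symmetric you are left facing the identical problem, and the substitution provides no leverage.

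The genuine gap, which you flag yourself (``The main obstacle is making this extremal/rounding argument airtight''), is that neither of the two steps that actually prove the claim is carried out. First, the reduction to a corner point: an exchange argument suffices --- if $1\le a_i\le a_j\le s-2$ are two interior coordinates, replacing $(a_i,a_j)$ by $(a_i-1,a_j+1)$ strictly decreases both $\prod_t(s-a_t)$ and $\prod_t(a_t+1)$, so the constraint $D<\delta$ is preserved while $D^\perp$ drops, and iterating one may assume at most one coordinate lies strictly between $0$ and $s-1$. Second, the case analysis and floor arithmetic at that corner: writing it as $m-j-1$ zeros, one value $k\in\{0,\dots,s-1\}$, and $j$ copies of $s-1$, the constraint $s^{m-j-1}(s-k)<\delta\le s^{v+1}$ forces $j\ge m-v-1$; if $j\ge m-v$ then $D^\perp\ge s^{m-v}\ge(s-\delta/s^v+1)s^{m-v-1}\ge\delta^\perp$, and if $j=m-v-1$ then $k>s-\delta/s^v$, hence $k\ge\lfloor s-\delta/s^v\rfloor+1$ as an integer and $k+1\ge s-\delta/s^v+1$, giving $D^\perp=(k+1)s^{m-v-1}\ge(s-\delta/s^v+1)s^{m-v-1}\ge\lfloor(s-\delta/s^v+1)s^{m-v-1}\rfloor\ge\delta^\perp$. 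Without these two steps the proposal remains a plan rather than a proof; the plan itself, though, is sound and lands on the same critical point as the paper.
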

\begin{proof}
Define functions $\tilde{D}:{\mathbb{Q}}^m \rightarrow {\mathbb{Q}}$
and $\tilde{D}^\perp:{\mathbb{Q}}^m \rightarrow {\mathbb{Q}}$ by
$\tilde{D}((i_1, \ldots , i_m))=\prod_{t=1}^m (s-i_t)$ and
$\tilde{D}^\perp((i_1, \ldots , i_m))=\prod_{t=1}^m (i_t+1)$. Let
$i=s-\delta/s^v$ and note that $0 \leq i \leq s-1$ as well as 
$$\tilde{D} ((\underset{v {\mbox{ {\scriptsize{times}} }}}{\underbrace{{0, \ldots ,
  0}}},i,\underset{m-1-v {\mbox{ {\scriptsize{times}} }}}{
\underbrace{s-1,\ldots ,s-1}}))= \delta$$
hold true.
Finally we observe that
$$\tilde{D}^\perp ((\underset{v {\mbox{ {\scriptsize{times}} }}}{\underbrace{{0, \ldots ,
  0}}},i,\underset{m-1-v {\mbox{ {\scriptsize{times}} }}}{
\underbrace{s-1,\ldots ,s-1}}))=(s-\frac{\delta}{s^v}+1)s^{m-v-1}$$
and the lemma follows.
\end{proof}
The next step in our analysis is to establish an estimate from below on the dimension
of the code $C(L_1)$ and $C(L_2)^\perp$ when $L_1$ is as
in~(\ref{eqsnabeleen}), $L_2$ is as in~(\ref{eqsnabeltoo}) and
$s=s_1=\cdots =s_m$. In~\cite[Theorem 1]{hyptype} a bound was given for
the special case $S_1=\cdots =S_m={\mathbb{F}}_q$ and $q^{m-1} \leq
\delta, \delta^\perp \leq q^m$. With the last mentioned condition we
do not obtain
$L_2 \subset L_1$ and we therefore now generalize the result
from~\cite{hyptype} to arbitrary $1 \leq \delta^\perp , \delta \leq
s^m$ and $s=s_1=\cdots =s_m$. We start with a technical lemma, whose 
proof we give in Appendix~\ref{apa}.

\begin{lemma}\label{lemuseful}
For $m \geq 2$, $1 \leq i \leq m$ it holds that
\begin{eqnarray}
\int_0^{s-\frac{\tau}{s^{m-i}(s-x_1)\cdots (s-x_{i-1})}}
  \int_0^{s-\frac{\tau}{s^{m-i-1}(s-x_1)\cdots (s-x_{i})}}\cdots
  \int_0^{s-\frac{\tau}{(s-x_1)\cdots (s-x_{m-1})}}  dx_m  \cdots
  dx_{i+1} \, dx_i \nonumber \\
= s^{m-i+1}-\sum_{t=0}^{m-i}\frac{1}{t
  !}\frac{\tau}{(s-x_1) \cdots (s-x_{i-1})}\bigg( \ln \bigg(
  \frac{(s-x_1) \cdots (s-x_{i-1})s^{m-i+1}}{\tau}\bigg) \bigg)^t.\nonumber
\end{eqnarray}
\end{lemma}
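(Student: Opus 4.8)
The plan is to prove the identity by induction on $i$, counting down from $i=m$ to $i=1$, treating the innermost integral as the base case and then peeling off one integration at a time. Throughout, $\tau$ and $s$ are regarded as fixed positive parameters and $x_1,\ldots,x_{i-1}$ as free real variables on which the right-hand side depends; write $P=(s-x_1)\cdots(s-x_{i-1})$ for brevity, with the convention that the empty product (when $i=1$) equals $1$.

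For the base case $i=m$, there is a single integral
$$\int_0^{\,s-\tau/P}dx_m = s-\frac{\tau}{P},$$
and one checks directly that the claimed right-hand side, namely $s^{1}-\sum_{t=0}^{0}\frac{1}{t!}\frac{\tau}{P}\big(\ln(Ps/\tau)\big)^t = s-\frac{\tau}{P}$, agrees. For the inductive step, suppose the formula holds for $i+1$, i.e. after performing the $m-i$ innermost integrations with $x_1,\ldots,x_i$ held fixed we obtain
$$s^{m-i}-\sum_{t=0}^{m-i-1}\frac{1}{t!}\frac{\tau}{P(s-x_i)}\Big(\ln\Big(\frac{P(s-x_i)s^{m-i}}{\tau}\Big)\Big)^t.$$
It then remains to integrate this expression in $x_i$ from $0$ to $s-\tau/(P s^{m-i-1})$ — note this upper limit is exactly the value of $x_i$ at which the bracketed logarithm's argument becomes $1$, which is what makes the boundary terms collapse. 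Substituting $u = s-x_i$, so that $du=-dx_i$ and the range becomes $u\in[\tau/(Ps^{m-i-1}),s]$, the integral splits into $\int u$-free part giving $s\cdot s^{m-i}$ minus a sum of terms of the form $\frac{1}{t!}\frac{\tau}{P}\int \frac{(\ln(P u s^{m-i}/\tau))^t}{u}\,du$.

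The key computational fact is the antiderivative $\int \frac{(\ln(au))^t}{u}\,du = \frac{(\ln(au))^{t+1}}{t+1}$ for any constant $a>0$; applied with $a = Ps^{m-i}/\tau$, each term in the sum integrates in closed form, and evaluating at the endpoints — where at $u=\tau/(Ps^{m-i-1})$ the log argument is $1$ so that term vanishes, and at $u=s$ it equals $\ln(Ps^{m-i+1}/\tau)$ — produces precisely $\frac{1}{(t+1)!}\frac{\tau}{P}\big(\ln(Ps^{m-i+1}/\tau)\big)^{t+1}$. Re-indexing the sum ($t\mapsto t-1$, so the range $t\in\{0,\ldots,m-i-1\}$ becomes $t\in\{1,\ldots,m-i\}$) and collecting the leftover $t=0$ term, which comes from the $\int s^{m-i}\,dx_i$ piece contributing $-\frac{\tau}{P}\ln(Ps^{m-i+1}/\tau)$ after the substitution, assembles exactly the claimed right-hand side $s^{m-i+1}-\sum_{t=0}^{m-i}\frac{1}{t!}\frac{\tau}{P}\big(\ln(Ps^{m-i+1}/\tau)\big)^t$, completing the induction.

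The main obstacle is purely bookkeeping: keeping the indices of the sum, the powers of $s$, and the arguments of the nested logarithms consistent through the re-indexing, and verifying carefully that the lower integration limit is indeed the point where the relevant logarithm vanishes so that no boundary contributions survive there. There is no conceptual difficulty — it is a standard "integrate and re-index" induction — but the notation is heavy enough that one must track $P$, the exponent $m-i$, and the summation range simultaneously without slippage. I would relegate the detailed endpoint evaluations and the final index-shift to a short displayed computation in the appendix, exactly as the paper does.
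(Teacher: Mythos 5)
Your induction (down from $i=m$) and the key antiderivative $\int (\ln(au))^t/u\,du = (\ln(au))^{t+1}/(t+1)$ are exactly the paper's argument; the paper substitutes $u=\ln\big((s-x_1)\cdots(s-x_i)s^{m-i}/\tau\big)$ rather than your $u=s-x_i$, but the two are computationally equivalent. One small bookkeeping slip: the leftover $t=0$ term of the final sum is $-\tau/P$, not $-(\tau/P)\ln\big(Ps^{m-i+1}/\tau\big)$; it comes directly from $\int_0^{s-\tau/(Ps^{m-i})} s^{m-i}\,dx_i = s^{m-i+1}-\tau/P$ with no logarithm, while all the $\ln$-carrying terms arise from the reindexed sum.
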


From this lemma we obtain information on the dimensions of the codes
as follows.

\begin{theorem}\label{thedimensions}
Let $s=s_1=\cdots =s_m$ and consider $L_1$ and $L_2$ as
in~(\ref{eqsnabeleen}) and (\ref{eqsnabeltoo}), respectively with
$\delta=\delta^\perp=\tau \in \{1, \ldots , s^m\}$. The dimensions of
$C(L_1)$ and $C(L_2)^\perp$ are at least
\begin{eqnarray}
s^m-\sum_{t=1}^m\frac{1}{(t-1)!}\tau \bigg( \ln \big( \frac{s^m}{\tau} \big)\bigg)^{t-1}.\label{eqgenerel}
\end{eqnarray} 
If $1 \leq \tau < s$ then the dimensions are at least
\begin{eqnarray}
s^m-\sum_{t=1}^m\frac{1}{(t-1)!}\tau((m-1)\ln (\tau))^{t-1}\label{eqspecifik}
\end{eqnarray}
which is sharper than~(\ref{eqgenerel}).
\end{theorem}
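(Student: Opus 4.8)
The plan is to estimate $\#L_1$ from below by a volume argument, turning the discrete count of lattice points $X_1^{i_1}\cdots X_m^{i_m}\in\Delta(s,\dots,s)$ with $\tilde D((i_1,\dots,i_m))=\prod_{t=1}^m(s-i_t)\geq\tau$ into an integral over the region $\{(x_1,\dots,x_m)\in[0,s]^m\mid\prod(s-x_t)\geq\tau\}$, and then evaluating that integral via Lemma~\ref{lemuseful}. First I would note that $\#L_1$ counts integer points $(i_1,\dots,i_m)$ with $0\leq i_t\leq s-1$ and $\prod_t(s-i_t)\geq\tau$; since $\tilde D$ is non-increasing in each coordinate, the standard under-estimate of such a monotone region by unit boxes placed at the lattice points gives $\#L_1\geq\operatorname{vol}\{(x_1,\dots,x_m)\in[0,s]^m\mid\tilde D((x_1,\dots,x_m))\geq\tau\}$. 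Concretely, the region is $\{0\leq x_1\leq s-\tau/s^{m-1}\}$, then for fixed $x_1$, $\{0\leq x_2\leq s-\tau/(s^{m-2}(s-x_1))\}$, and so on — exactly the nested limits appearing in Lemma~\ref{lemuseful} with $i=1$. Applying that lemma with $i=1$ (so the product over $s-x_1,\dots,s-x_{i-1}$ is empty, i.e. equals $1$) and integrating the resulting expression in $x_1$ over $[0,s]$ — which is again an instance of the same identity, now effectively with the roles shifted — collapses the whole computation and yields
\begin{eqnarray}
\operatorname{vol}\{\tilde D\geq\tau\}=s^m-\sum_{t=1}^m\frac{1}{(t-1)!}\tau\Big(\ln\big(\tfrac{s^m}{\tau}\big)\Big)^{t-1},\nonumber
\end{eqnarray}
which is the bound~(\ref{eqgenerel}) for $\dim C(L_1)$. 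I expect the bookkeeping to reduce to one careful application of Lemma~\ref{lemuseful} followed by an elementary final integration, with re-indexing of the sum $\sum_{t=0}^{m-i}$ to $\sum_{t=1}^m$ being the only fussy point.

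For $\dim C(L_2)^\perp$ I would argue by the symmetry already exploited in Section~\ref{seclarge}: the map $(i_1,\dots,i_m)\mapsto(s-1-i_1,\dots,s-1-i_m)$ sends $\Delta(s,\dots,s)$ to itself and interchanges $\tilde D$ with $\tilde D^\perp$ up to the obvious shift, so the complement $\Delta(s,\dots,s)\setminus L_2=\{M\mid D^\perp(M)\geq\delta^\perp\}$ has exactly the same cardinality as $L_1$ when $\delta^\perp=\tau$. Since $\dim C(L_2)^\perp=n-\dim C(L_2)=n-\#L_2=\#(\Delta\setminus L_2)$, the same lower bound~(\ref{eqgenerel}) applies verbatim. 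Thus both dimension bounds follow from the single volume estimate.

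For the sharper bound~(\ref{eqspecifik}) in the range $1\leq\tau<s$, the point is that when $\tau<s$ one can under-estimate the region more cleverly: for such small $\tau$ every monomial $X_1^{i_1}\cdots X_m^{i_m}$ with at most one exponent nonzero automatically lies in $L_1$, and more generally the constraint $\prod(s-i_t)\geq\tau$ is extremely weak. I would replace the exact integral by integrating instead over the simpler region obtained by bounding $s-x_t\geq 1$ for all but one coordinate, or equivalently bounding each nested upper limit $s-\tau/(s^{m-i}(s-x_1)\cdots(s-x_{i-1}))$ from below using $(s-x_1)\cdots(s-x_{i-1})\geq 1$; this is where the factor $(m-1)\ln\tau$ in place of $\ln(s^m/\tau)$ comes from, and one checks $\tau\big((m-1)\ln\tau\big)^{t-1}\leq\tau\big(\ln(s^m/\tau)\big)^{t-1}$ for $\tau<s$ term by term since $(m-1)\ln\tau<m\ln s-\ln\tau=\ln(s^m/\tau)$ exactly when $\tau^m<s^m$. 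The main obstacle I anticipate is making the $\tau<s$ refinement rigorous: one must verify that the cruder region used for~(\ref{eqspecifik}) is genuinely contained in (or has volume at most that of) the true region $\{\tilde D\geq\tau\}$, so that its volume is still a valid lower bound for $\#L_1$ — i.e. that the inequalities go the right way — and then confirm termwise domination of the two sums; everything else is a mechanical substitution into Lemma~\ref{lemuseful}.
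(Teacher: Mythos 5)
Your derivation of the bound~(\ref{eqgenerel}) matches the paper: the count of lattice points $(i_1,\dots,i_m)\in\{0,\dots,s-1\}^m$ with $\prod_t(s-i_t)\geq\tau$ is bounded from below by the volume of the corresponding region in $[0,s]^m$, which Lemma~\ref{lemuseful} with $i=1$ computes in closed form. (One small correction: with $i=1$ the lemma already carries out all $m$ nested integrals, including the outermost one in $x_1$; there is no leftover integration over $[0,s]$ to perform.) Your reduction of the $C(L_2)^\perp$ bound to that of $C(L_1)$ via the involution $(i_1,\dots,i_m)\mapsto(s-1-i_1,\dots,s-1-i_m)$, which swaps $D$ and $D^\perp$, is exactly the symmetry the paper invokes.

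Your plan for the sharper bound~(\ref{eqspecifik}), however, cannot work as stated. You propose to shrink the integration region by forcing $(s-x_1)\cdots(s-x_{i-1})\geq 1$ in the nested limits, and to verify that the cruder region is contained in the true region $\{\tilde D\geq\tau\}$. But a smaller region has smaller volume, hence yields a \emph{weaker} lower bound on $\#L_1$; you cannot get a larger lower bound by integrating over less, yet~(\ref{eqspecifik}) is, as you yourself note, larger than~(\ref{eqgenerel}) precisely when $\tau<s$. Any purely volume-based estimate over a subregion is therefore doomed: the extra strength of~(\ref{eqspecifik}) must come from avoiding the volume approximation on part of the lattice. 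The paper's actual argument is a split. When $\tau<s$, every integer tuple with some coordinate $i_v<s-\tau$ automatically satisfies $\prod_t(s-i_t)\geq s-i_v>\tau$ (since every other factor is at least $1$), so these tuples are all in $L_1$ and there are \emph{exactly} $s^m-\tau^m$ of them -- no volume estimate here. The remaining tuples, those with $i_v\geq s-\tau$ for every $v$, are in bijection (via the shift $j_v=i_v-(s-\tau)$) with tuples $(j_1,\dots,j_m)\in\{0,\dots,\tau-1\}^m$ satisfying $\prod_t(\tau-j_t)\geq\tau$, and only to this scaled-down copy of the original problem does one apply Lemma~\ref{lemuseful} -- now with $s$ replaced by $\tau$, so that $\ln(\tau^m/\tau)=(m-1)\ln\tau$ appears. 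Adding $s^m-\tau^m$ to the resulting volume bound $\tau^m-\sum_{t=1}^m\frac{\tau}{(t-1)!}\bigl((m-1)\ln\tau\bigr)^{t-1}$ gives~(\ref{eqspecifik}). The gain over~(\ref{eqgenerel}) comes precisely from replacing a lossy volume underestimate on the ``easy'' part of the lattice by an exact count, a mechanism your all-volume strategy has no way to reproduce.
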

\begin{proof}
By symmetry is is enough to prove the result for $C(L_1)$. The
dimension of $C(L_1)$, i.e.\ the number of integer tupples $(i_1,
\ldots , i_m) \in \{0, \ldots , s-1\}^m$ with $(s-i_1) \cdots (s-i_m)
\geq \tau$, is at least that of the volume of 
$$\{(x_1, \ldots , x_m) \in [0,s]^m \mid (s-x_1)\cdots (s-x_m) \geq
\tau\}$$
which corresponds to the integral in Lemma~\ref{lemuseful} when $i$ is
chosen to be equal to $1$. This proves~(\ref{eqgenerel}). Next, assume
$1 \leq \tau < s$. The above mentioned set of integer tupples can be
divided into two sets, the first set consisting of those tupples satisfying $0
\leq i_v<s-\tau$ for some $v \in \{1, \ldots , m\}$, and the second
set consisting of those tupples satisfying $s-\tau\leq i_v$ for $v=1, \ldots ,
m$. The number of elements in the first set equals $s^m-\tau^m$. The
cardinality of the second set is estimated from below by the volume of
$$\{(x_1, \ldots , x_m) \in [0,\tau]^m \mid (\tau-x_1) \cdots
(\tau-x_m) \geq \tau \}.$$
The last part of the theorem now follows by applying
Lemma~\ref{lemuseful} with $i=1$ and $s=\tau$.
\end{proof}

\begin{remark}
From Theorem~\ref{thedimensions} one obtains for each choice of $m$
closed formula lower bounds on the rate $k/n$ as
a function of the relative minimum distance\footnote{Here, the
  relative minimum distance should not be confused with the first
  relative generalized Hamming weight.} $d/n$. Such estimates are
independent of the actual value of $s$. From the proof it is clear
that these estimates become more and more precise as $s$ increases.
Computer experiments reveal that with
$m=2$ and $m=3$ already for $s=32$ the true values of the rate is
almost the same as the estimated.
\end{remark}

Using the constructions described in Section~\ref{secintro} we get by applying
Lemma~\ref{lembetingelse} in combination with (\ref{eqgenerel}) from
Theorem~\ref{thedimensions} the following result on the existence of ramp secret
sharing schemes and asymmetric quantum codes.
\begin{theorem}\label{theappllarge}
Consider integers $m \geq 2$ and $s\leq q$, where $q$ is a prime
power. Given $\delta \in \{1,\ldots , s^m\}$ let $v \in \{0,\ldots ,
m-1\}$ be such that $s^v\leq \delta \leq s^{v+1}$ and choose an
integer $\delta^\perp \leq \lfloor (s-\frac{\delta}{s^v}+1)s^{m-v-1}
\rfloor$. From Theorem~\ref{thebigB} we obtain ramp secret sharing
schemes over ${\mathbb{F}}_q$ with $n=s^m$ participants, shares in
${\mathbb{F}}_q^\ell$ where 
$$\ell \geq s^m-\sum_{t=1}^m \frac{1}{(t-1)!}\bigg( \delta \big(\ln
\big(\frac{s^m}{\delta}\big)\big)^{t-1}+\delta^\perp \big(\ln
\big(\frac{s^m}{\delta^\perp}\big)\big)^{t-1} \bigg),$$
the first privacy number satisfying $t=t_1 \geq \delta^\perp-1$ and
the last reconstruction number satisfying $r=r_\ell \leq s^m-\delta
+1$. Similarly, we obtain asymmetric quantum codes with parameters 
$$[[n=s^m,\ell,d_z \geq \delta / d_x \geq \delta^\perp]]_q.$$
\end{theorem}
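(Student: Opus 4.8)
The statement is a bookkeeping consequence of the results already in hand, so the plan is simply to concatenate them correctly. First I would fix subsets $S_1,\dots,S_m\subseteq\mathbb{F}_q$ of cardinality $s$ each — possible because $s\leq q$ — and put $S=S_1\times\cdots\times S_m$, so that $n=\#S=s^m$. For the given $\delta$ and the given $\delta^\perp$ I would take $L_1$ and $L_2$ to be the monomial sets defined in~(\ref{eqsnabeleen}) and~(\ref{eqsnabeltoo}). The hypothesis $\delta^\perp\leq\lfloor(s-\delta/s^v+1)s^{m-v-1}\rfloor$, with $v$ chosen so that $s^v\leq\delta\leq s^{v+1}$, is precisely the condition of Lemma~\ref{lembetingelse}, which therefore guarantees $L_2\subseteq L_1$. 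Hence Theorem~\ref{thebigB} applies and yields the nested pair $C(L_2)\subset C(L_1)\subseteq\mathbb{F}_q^n$ of codimension $\ell=\#L_1-\#L_2$ with $M_1(C(L_1),C(L_2))\geq\delta$ and $M_1(C(L_2)^\perp,C(L_1)^\perp)\geq\delta^\perp$.

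From this one code pair both families of objects are read off. Using the coset description $C(L_1)/C(L_2)$ together with \cite[Theorem 3]{geil2014relative} gives $r=r_\ell=n-M_1(C(L_1),C(L_2))+1\leq s^m-\delta+1$ and $t=t_1=M_1(C(L_2)^\perp,C(L_1)^\perp)-1\geq\delta^\perp-1$, which is the secret-sharing assertion. Plugging the same pair into the CSS construction, Theorem~\ref{thecss} gives an asymmetric quantum code with parameters $[[\,s^m,\ell,d_z/d_x\,]]_q$ where $d_z=M_1(C(L_1),C(L_2))\geq\delta$ and $d_x=M_1(C(L_2)^\perp,C(L_1)^\perp)\geq\delta^\perp$, which is the last assertion.

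There remains the lower bound on $\ell=\dim C(L_1)-\dim C(L_2)$, and this is where Theorem~\ref{thedimensions} enters. That theorem bounds both $\dim C(L_1)$ and $\dim C(L_2)^\perp$ below by~(\ref{eqgenerel}) with the relevant designed distance playing the role of $\tau$; since $L_1$ in~(\ref{eqsnabeleen}) depends only on $\delta$ and $L_2$ in~(\ref{eqsnabeltoo}) only on $\delta^\perp$, I would apply it once with $\tau=\delta$ to get
\[
\dim C(L_1)\ \geq\ s^m-\sum_{t=1}^{m}\frac{1}{(t-1)!}\,\delta\left(\ln\left(\frac{s^m}{\delta}\right)\right)^{t-1},
\]
and once with $\tau=\delta^\perp$ to get $\dim C(L_2)^\perp\geq s^m-\sum_{t=1}^{m}\frac{1}{(t-1)!}\,\delta^\perp(\ln(s^m/\delta^\perp))^{t-1}$. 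Using $\dim C(L_2)^\perp=n-\dim C(L_2)=s^m-\dim C(L_2)$, the second inequality becomes the upper bound $\dim C(L_2)\leq\sum_{t=1}^{m}\frac{1}{(t-1)!}\,\delta^\perp(\ln(s^m/\delta^\perp))^{t-1}$, and subtracting the two estimates produces exactly the claimed bound on $\ell$. No analytic difficulty arises; the one thing to watch is this last dualization step — Theorem~\ref{thedimensions} provides a lower bound on $\dim C(L_2)^\perp$, and the desired lower bound on $\ell$ only emerges after turning it into an upper bound on $\dim C(L_2)$ and subtracting. (One tacitly needs $\ell\geq 1$ for the scheme and the quantum code to be meaningful, which holds whenever the displayed lower bound is positive; in the boundary regime one may simply restrict to pairs $(\delta,\delta^\perp)$ for which $L_2\subsetneq L_1$.)
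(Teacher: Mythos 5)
Your proposal is correct and matches the paper's intended argument, which is stated as a one-sentence sketch before the theorem: combine Lemma~\ref{lembetingelse} (to guarantee $L_2\subseteq L_1$), Theorem~\ref{thebigB} (to get the nested pair and the two relative distances), the privacy/reconstruction formula and Theorem~\ref{thecss} (to translate into secret sharing and asymmetric quantum parameters), and (\ref{eqgenerel}) from Theorem~\ref{thedimensions} (to bound~$\ell$). Your expansion fills in exactly the implicit details — in particular the observation that Theorem~\ref{thedimensions}, although phrased with $\delta=\delta^\perp=\tau$, actually bounds $\dim C(L_1)$ and $\dim C(L_2)^\perp$ separately through their respective designed distances, and the dualization step $\dim C(L_2)=s^m-\dim C(L_2)^\perp$ that turns the second bound into the needed upper bound on $\dim C(L_2)$.
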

\begin{remark}
The lower bound on $\ell$ in Theorem~\ref{theappllarge} can
be improved in the case $1 \leq \delta < s$ or $1 \leq \delta^\perp <
s$ by applying~(\ref{eqspecifik}) instead of (\ref{eqgenerel}). 
\end{remark}

Recall from Remark~\ref{remsammenhaeng} that studying asymmetric
quantum codes derived from the CSS construction is equivalent to
studying linear ramp secret sharing schemes. Therefore, the following
discussion on asymmetric quantum codes imposed by
Theorem~\ref{thebigB} can be directly translated into results on ramp
secret sharing schemes. We leave the details for the reader. It seems relevant to compare in some concrete cases what can be derived from
Theorem~\ref{thebigB} in combination with Theorem~\ref{thecss} with
other general constructions in the literature of asymmetric quantum codes of
similar length. Applying Theorem~\ref{thebigB} to polynomials in two
variables and to a Cartesian product $S=S_1 \times S_2$ we obtain from
Theorem~\ref{thecss} asymmetric stabilizer codes of length $s_1s_2$,
where $s_1=\#S_1$ and $s_2=\#S_2$. For comparison La Guardia's Construction II of asymmetric
quantum generalized Reed-Solomon codes~\cite[Theorem 7.1]{aqlaguardia} 
gives codes of length
$m_1m_2$ as follows:
\begin{theorem}\label{thelaguardia}
Let $q$ be a prime power. Then there exist asymmetric quantum
generalized Reed-Solomon codes with parameters
$$[[m_1 m_2, \ell=m_1(2k-m_2+c),d_z \geq d / d_x \geq (d-c)]]_q,$$
where $1 < k < m_2 < 2k+c\leq q^{m_1}$, $k=m_2-d+1$, and $m_2, d >
c+1$, $c \geq 1$, $m_1 \geq 1$ are integers.
\end{theorem}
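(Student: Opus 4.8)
The result is \cite[Theorem 7.1]{aqlaguardia}; here is a sketch of the construction behind it. The plan is to build the nested pair $C_2 \subset C_1 \subseteq {\mathbb{F}}_q^{m_1 m_2}$ by field--expanding a pair of nested generalized Reed--Solomon (GRS) codes over the extension field ${\mathbb{F}}_{q^{m_1}}$, and then to invoke Theorem~\ref{thecss}. Since $m_2 < 2k+c \le q^{m_1}$, we may fix $m_2$ distinct evaluation points $\alpha_1, \ldots, \alpha_{m_2} \in {\mathbb{F}}_{q^{m_1}}$ and a multiplier vector $\vec{v} \in ({\mathbb{F}}_{q^{m_1}}^\ast)^{m_2}$; for $0 \le j \le m_2$ put $\mathrm{GRS}_j = \{(v_1 f(\alpha_1), \ldots, v_{m_2} f(\alpha_{m_2})) : f \in {\mathbb{F}}_{q^{m_1}}[X],\ \deg f < j\}$, an ${\mathbb{F}}_{q^{m_1}}$-linear $[m_2, j, m_2-j+1]$ code, with $\mathrm{GRS}_j \subseteq \mathrm{GRS}_{j'}$ whenever $j \le j'$. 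I would take $C_1' = \mathrm{GRS}_k$ and $C_2' = \mathrm{GRS}_{k_2}$ with $k_2 := m_2 - k - c$; the hypotheses $d > c+1$ (i.e.\ $m_2 - k > c$) and $m_2 < 2k+c$ give $1 \le k_2 < k$, so $C_2' \subsetneq C_1'$.

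Next, fix an ${\mathbb{F}}_q$-basis $B = (\beta_1, \ldots, \beta_{m_1})$ of ${\mathbb{F}}_{q^{m_1}}$ with trace-dual basis $B^\ast = (\beta_1^\ast, \ldots, \beta_{m_1}^\ast)$, so that $\mathrm{Tr}_{{\mathbb{F}}_{q^{m_1}}/{\mathbb{F}}_q}(\beta_i \beta_j^\ast) = \delta_{ij}$, and let $\phi_B, \phi_{B^\ast} : {\mathbb{F}}_{q^{m_1}}^{m_2} \to {\mathbb{F}}_q^{m_1 m_2}$ be the associated coordinatewise expansion maps. Put $C_1 = \phi_B(C_1')$ and $C_2 = \phi_B(C_2')$. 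Since $\phi_B$ is ${\mathbb{F}}_q$-linear and injective, $C_2 \subset C_1$ and $\dim_{{\mathbb{F}}_q} C_i = m_1 \dim_{{\mathbb{F}}_{q^{m_1}}} C_i'$, hence the codimension is $\ell = m_1(k - k_2) = m_1(2k - m_2 + c)$. The expansion of a nonzero ${\mathbb{F}}_{q^{m_1}}$-symbol has at least one nonzero ${\mathbb{F}}_q$-entry, so $d(C_1) \ge d(C_1') = m_2 - k + 1 = d$; recalling from Section~\ref{secintro} that $M_1(C_1,C_2) \ge d(C_1)$ and applying Theorem~\ref{thecss}, we get $d_z = M_1(C_1,C_2) \ge d$.

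The only step that is not pure bookkeeping is identifying the ${\mathbb{F}}_q$-dual of $C_2$. Here I would use the standard fact that for any ${\mathbb{F}}_{q^{m_1}}$-linear code $D$ one has $\phi_B(D)^{\perp} = \phi_{B^\ast}(D^{\perp_{q^{m_1}}})$, where $\perp$ is the Euclidean dual over ${\mathbb{F}}_q$ and $\perp_{q^{m_1}}$ the Euclidean dual over ${\mathbb{F}}_{q^{m_1}}$: one checks $\langle \phi_B(x), \phi_{B^\ast}(y)\rangle = \mathrm{Tr}_{{\mathbb{F}}_{q^{m_1}}/{\mathbb{F}}_q}(\langle x, y\rangle)$ for all $x,y \in {\mathbb{F}}_{q^{m_1}}^{m_2}$, which gives the inclusion $\supseteq$, and then equality follows by comparing ${\mathbb{F}}_q$-dimensions. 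Applying this with $D = C_2'$ and using that the Euclidean dual of a GRS code is again a GRS code, $(C_2')^{\perp_{q^{m_1}}}$ is a GRS code of length $m_2$ and dimension $m_2 - k_2 = k+c$, hence MDS with minimum distance $m_2 - (k+c) + 1 = d - c$. Since $\phi_{B^\ast}$ does not decrease weights, $d(C_2^\perp) \ge d - c$; as $M_1(C_2^\perp, C_1^\perp) \ge d(C_2^\perp)$, Theorem~\ref{thecss} yields $d_x \ge d - c$.

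Collecting the three items produces the asserted $[[m_1 m_2,\ m_1(2k - m_2 + c),\ d_z \ge d\,/\,d_x \ge d - c]]_q$ code; the numerical hypotheses ($1 < k < m_2$, $k = m_2 - d + 1$, $m_2, d > c+1$, $c \ge 1$, $m_1 \ge 1$, $2k + c \le q^{m_1}$) are exactly what guarantees $m_2 \le q^{m_1}$ and $1 \le k_2 < k \le m_2$, so that every code above is well defined and the pair is properly nested. The main (and essentially only nontrivial) obstacle is the trace-pairing identity governing the dual of a field-expanded code; everything else reduces to classical GRS facts together with Theorem~\ref{thecss}.
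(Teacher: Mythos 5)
The paper does not prove this result; Theorem~\ref{thelaguardia} is quoted verbatim from La~Guardia~\cite[Theorem 7.1]{aqlaguardia} solely to serve as a benchmark for the constructions of Sections~\ref{seclarge} and~\ref{secsmall}, so there is no ``paper's own proof'' to compare against. Your reconstruction of La~Guardia's Construction~II --- nested GRS codes $\mathrm{GRS}_{k_2}\subset\mathrm{GRS}_k$ over ${\mathbb{F}}_{q^{m_1}}$ with $k_2=m_2-k-c$, ${\mathbb{F}}_q$-expansion with respect to a pair of trace-dual bases so that $\phi_B(D)^\perp=\phi_{B^\ast}(D^{\perp_{q^{m_1}}})$, the MDS property of GRS duals, and Theorem~\ref{thecss} --- is the standard argument behind that theorem, and all of your bookkeeping (nesting from $d>c+1$ and $m_2<2k+c$, codimension $m_1(2k-m_2+c)$, and the two distance bounds) checks out.
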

Observe that the bound $d_z \geq m_2-k+1$ in
Theorem~\ref{thelaguardia} suggests that to obtain the widest variety
of code parameters for a given code length one should choose $m_1$
smallest possible and $m_2$ largest possible,  such that the conditions in
the theorem are satisfied. The surprising consequence -- which we
illustrate in the following example -- is that sometimes one obtains
better parameters from Theorem~\ref{thelaguardia}  by considering a
shorter code length. Adding 0s to the code words of the shorter code, 
we then obtain bounds on codes of the right length.
\begin{example}\label{exq1}
We first consider asymmetric quantum codes with $q=7$ and of
length $49$. Applying Theorem~\ref{thelaguardia} directly to the case
of $n=49$ we obtain six different
sets of parameters. However, we can actually derive better information
on codes of length $49$ by applying Theorem~\ref{thelaguardia}
to codes of length $n=48$. 
In Table~\ref{tablaguardia1} a selection of such  parameters are
compared with examples of what can be achieved by applying 
Theorem~\ref{thebigB} instead. In the table, by ``-----'' we indicate that no
comparable parameters can be derived. The advantage of our method is clear in most
cases. Furthermore, all the codes in Table~\ref{tablaguardia1}
coming from Theorem~\ref{thebigB} strictly exceed the Gilbert-Varshamov bound (Theorem~\ref{theMGV}).\\
\begin{table}
\begin{center}
\begin{tabular}{cc}
Theorem~\ref{thelaguardia} (\cite[Theorem 7.1]{aqlaguardia}) &Theorem~\ref{thebigB}\\
\hline
----- &$[[49,3,30/4]]_7$\\
----- &$[[49,8,24/4]]_7$\\
-----&$[[49,5,24/5]]_7$\\
-----&$[[49,9,20/5]]_7$\\
$[[49,10,14/7]]_7$&$[[49,10,14/7]]_7$\\
$[[49,12,14/6]]_7$&$[[49,14,14/6]]_7$\\
$[[49,16,12/6]]_7$&$[[49,18,12/6]]_7$\\
$[[49,18,10/7]]_7$&$[[49,16,10/7]]_7$
\end{tabular}
\end{center}
\caption{Comparison of code parameters in Example~\ref{exq1}. 
Parameters from applying Theorem~\ref{thelaguardia} to
  codes of length $n=48$ on the left,  and parameters from Theorem~\ref{thebigB} on the right.}
\label{tablaguardia1}
\end{table}
We next consider asymmetric quantum codes with $q=8$ and  of
length $64$. Our theorem treats many more constellations of $d_z/d_x$
with $d_z \geq d_x$ than does Theorem~\ref{thelaguardia}. For
instance, the highest value of $d_z$ treated by
Theorem~\ref{thelaguardia} is $d_z=31$, whereas Theorem~\ref{thebigB} describes
35 different nested code pairs with $d_z \geq 32$. In most cases the
code parameters guaranteed by Theorem~\ref{thebigB} are much better
than the parameters described in Theorem~\ref{thelaguardia}. However,
there are also cases where the situation is the opposite. From the huge
amount of obtainable values $d_z/d_x$ we display in
Table~\ref{tablaguardia2} some representative examples that illustrate 
the situation. Again all listed codes coming from Theorem~\ref{thebigB} strictly exceed the
Gilbert-Varshamov bound.
\begin{table}
\begin{center}
\begin{tabular}{cc}
Theorem~\ref{thelaguardia} (\cite[Theorem 7.1]{aqlaguardia})&Theorem~\ref{thebigB}\\
\hline
-----&$[[64,5,35/5]]_8$\\
-----&$[[64, 12,30/4]]_8$\\
-----&$[[64,9,30/5]]_8$\\
-----&$[[64,7,30/6]]_8$\\
$[[64,6,25/6]]_8$&$[[64,10,25/6]]_8$\\
$[[64,6, 24/7]]_8$&$[[64,10,24/7]]_8$\\
$[[64,50,5/4]]_8$&$[[64,51,5/4]]_8$
\end{tabular}
\end{center}
\caption{Comparison of code parameters in
  Example~\ref{exq1}. Parameters from Theorem~\ref{thelaguardia} on
  the left and parameters from Theorem~\ref{thebigB} on the right.}
\label{tablaguardia2}
\end{table}
\ \demo
\end{example}

\begin{example}\label{exsyvseks}
In this example we consider $S_1, S_2 \subseteq {\mathbb{F}}_7$ with
$\# S_1=6$ and $\#S_2=7$ and apply Theorem~\ref{thebigB} and
Theorem~\ref{thelaguardia} to construct asymmetric quantum codes of
length $n=42$. In most cases Theorem~\ref{thebigB} is much better than
Theorem~\ref{thelaguardia}, however, there also exists a number of cases
where the latter is the best. Table~\ref{tabsyvseks}
displays some illustrative examples. As in the previous example all
displayed cases coming from Theorem~\ref{thebigB} strictly exceed the Gilbert-Varshamov bound.
\begin{table}
\begin{center}
\begin{tabular}{cc}
Theorem~\ref{thelaguardia} (\cite[Theorem 7.1]{aqlaguardia})&Theorem~\ref{thebigB}\\
\hline
-----&$[[42,4,20/5]]_7$\\
$[[42,2,18/4]]_7$&$[[42,9,18/4]]_7$\\
$[[42,6,16/4]]_7$&$[[42,10,16/4]]_7$\\
$[[42,10,14/4]]_7$&$[[42,13,14/4]]_7$\\
$[[42,14,10/6]]_7$&$[[42,14,10/6]]_7$\\
$[[42,16,9/6]]_7$&$[[42,15,9/6]]_7$\\
$[[42,24,7/4]]_7$&$[[42,23,7/4]]_7$\\
$[[42,28,5/4]]_7$&$[[42,29,5/4]]_7$
\end{tabular}
\end{center}
\caption{Comparison of code parameters in
  Example~\ref{exsyvseks}. Parameters from Theorem~\ref{thelaguardia} on
  the left and parameters from Theorem~\ref{thebigB} on the right.}
\label{tabsyvseks}
\end{table}

\ \demo
\end{example}

\begin{example}
This is a continuation of Example~\ref{ex8}. From Table~\ref{tabwone}
one can show that all related asymmetric quantum codes strictly exceed
the Gilbert-Varshamov bound (Theorem~\ref{theMGV}). The
details are left for the reader.
\end{example}

We conclude this section with an example illustrating how to derive
relative generalized Hamming weights of the considered codes. Recall
from Section~\ref{secintro}
that such information directly translates into information on the privacy numbers and the
reconstruction numbers of the corresponding ramp secret sharing schemes.
\begin{example}\label{ex9}
In this example we apply Theorem~\ref{theour} to estimate the
parameters $M_v(C(L_1),C(L_2))$, $M_v(C(L_2)^\perp, C(L_1)^\perp)$,
$v=1, \ldots , \#L_1-\#L_2=7$ where $L_1$ and $L_2$ are as in (\ref{eqypti1})
and (\ref{eqypti2}), respectively. See Figure~\ref{figfjongdu}.
\begin{figure}[h]
$$
\begin{array}{cccccc}
N_{21}&N_{26}&N_{30}&N_{33}&N_{35}&N_{36}\\
{\text{\rund{$N_{15}$}}}&N_{20}&N_{25}&N_{29}&N_{32}&N_{34}\\
{\text{\rund{$N_{10}$}}}&{\text{\rectangled{$N_{14}$}}}&{\text{\rectangled{$N_{19}$}}}&N_{24}&N_{28}&N_{31}\\
{\text{\rund{$N_6$}}}&{\text{\rectangled{$N_9$}}}&{\text{\rectangled{$N_{13}$}}}&{\text{\rectangled{$N_{18}$}}}&N_{23}&N_{27}\\
{\text{\rund{$N_3$}}}&{\text{\rund{$N_5$}}}&{\text{\rectangled{$N_8$}}}&{\text{\rectangled{$N_{12}$}}}&N_{17}&N_{22}\\
{\text{\rund{$N_1$}}}&{\text{\rund{$N_2$}}}&{\text{\rund{$N_4$}}}&{\text{\rund{$N_7$}}}&{\text{\rund{$N_{11}$}}}&N_{16}
\end{array}
$$
\caption{The situation in Example~\ref{ex9}: $L_2$ corresponds to the
  circled monomials and $L_1$ equals $L_2$ plus the boxed monomials.}
\label{figfjongdu}
\end{figure}
Recall that the monomial ordering, that we use in this example, is the
degree lexicographic ordering $\prec_{\deg}$. We therefore obtain
$u=\min \{ i \mid N_i \in L_1 \backslash L_2 \}=8$
and
$u^\perp=\max \{i \mid N_i \in L_1\}=19$.
Hence, (\ref{bnd}) becomes 
\begin{eqnarray}
M_v(C(L_1),C(L_2)) &\geq& \min \big\{ D(K) \mid K \subseteq \{N_8,
                          N_9, N_{10} \nonumber \\
&& {\mbox{ \ \ \ \ \ }} N_{11}, N_{12}, N_{13}, N_{14}, N_{15},
   N_{18}, N_{19}\}, \#K=v \big\}, \nonumber
\end{eqnarray}
and (\ref{bndperp}) becomes
\begin{eqnarray}
M_v(C(L_2)^\perp,C(L_1)^\perp) &\geq& \min \big\{ D^\perp(K) \mid K \subseteq \{N_8, N_9,
                          N_{12} \nonumber \\
&& {\mbox{ \ \ \ \ \ }} N_{13}, N_{14}, N_{16}, N_{17}, N_{18},
   N_{19}\}, \#K=v \big\}. \nonumber
\end{eqnarray}
Going through all possible combinations we obtain the information
in Table~\ref{tabtob}. Hence, we can construct a ramp secret sharing
scheme over ${\mathbb{F}}_7$ with $n=36$ participants, with the
secrets belonging to ${\mathbb{F}}_7^7$ and with the privacy numbers
being as in Table~\ref{tabpam}.
\begin{table}
\begin{center}
\begin{tabular}{c|rrrrrrr}
$v$&1&2&3&4&5&6&7\\
$M_v(C(L_1),C(L_2))\geq$&12&15&16&18&20&22&23\\
$M_v(C(L_2)^\perp,C(L_1)^\perp)\geq$&6&8&9&11&12&14&15
\end{tabular}
\end{center}
\caption{Estimated relative generalized Hamming weights of the code pair in Example~\ref{ex9}.}
\label{tabtob}
\end{table}

\begin{table}
\begin{center}
\begin{tabular}{c|rrrrrrr}
$v$&1&2&3&4&5&6&7\\
$t_v \geq$&5&7&8&10&11&13&14\\
$r_v \leq$&25&22&21&19&17&15&14
\end{tabular}
\end{center}
\caption{Privacy numbers and reconstruction numbers of the ramp secret
  sharing scheme described in Example~\ref{ex9}.}
\label{tabpam}
\end{table}

\ \demo
\end{example}
\section{Relatively small codimension}\label{secsmall}

In the former section we demonstrated how to construct good pairs of
nested codes having relatively 
 large codimension. We now show how to obtain good pairs of
nested codes with relatively small codimension. To explain the idea
behind our method, we start with an example which leads to a formal
statement in Theorem~\ref{thesmalltwo} below.

\begin{example}\label{extihi}
This is a continuation of the series of examples where we consider
codes over ${\mathbb{F}}_7^\ast \times {\mathbb{F}}_7^\ast$, and where
we use the degree lexicographic ordering $\prec_{\deg}$. Consider
\begin{eqnarray}
L_1=\{ X^\alpha Y^\beta \in \Delta(6,6) \mid
X^\alpha Y^\beta \preceq_{\deg} XY^3\}, \label{eql1}\\
L_2=\{  X^\alpha Y^\beta \in \Delta(6,6)\mid 
X^\alpha Y^\beta \prec_{\deg} X^3Y \}.\label{eql2}
\end{eqnarray}
The situation is described in Figure~\ref{figflot}

\begin{figure}[h]
$$
\begin{array}{cccccc}
N_{21}&N_{26}&N_{30}&N_{33}&N_{35}&N_{36}\\
N_{15}&N_{20}&N_{25}&N_{29}&N_{32}&N_{34}\\
{\text{\rund{$N_{10}$}}}&{\text{\rectangled{$N_{14}$}}}&N_{19}&N_{24}&N_{28}&N_{31}\\
{\text{\rund{$N_6$}}}&{\text{\rund{$N_9$}}}&{\text{\rectangled{$N_{13}$}}}&N_{18}&N_{23}&N_{27}\\
{\text{\rund{$N_3$}}}&{\text{\rund{$N_5$}}}&{\text{\rund{$N_8$}}}&{\text{\rectangled{$N_{12}$}}}&N_{17}&N_{22}\\
{\text{\rund{$N_1$}}}&{\text{\rund{$N_2$}}}&{\text{\rund{$N_4$}}}&{\text{\rund{$N_7$}}}&{\text{\rund{$N_{11}$}}}&N_{15}
\end{array}
$$
\caption{The situation in Example~\ref{extihi}: The circled monomials
  correspond to $L_2$. The circled and the boxed monomials correspond
  to $L_1$.}
\label{figflot}
\end{figure}
The codimension is $3$ and the values of $u$ and $u^\perp$ in
Theorem~\ref{theour} become
$u=\min \{i \mid N_i \in L_1 \backslash L_2\}=12$ (corresponding to
$N_u=X^3Y$), and $u^\perp=\max \{i \mid N_i \in L_1\}=14$ (corresponding to $N_{u^\perp}=XY^3$). By inspection we see that, due
to the particular choice of $L_1$ and $L_2$, in (\ref{bnd}) and
(\ref{bndperp}) of Theorem~\ref{theour} we need only to consider monomials in $L_1
\backslash L_2$. That is, we obtain
\begin{eqnarray}M_v(C(L_1),C(L_2))&\geq& \min \{ D(K) \mid K \subseteq
\{X^3Y,X^2Y^2,XY^3\}, \# K=v\}, \nonumber \\
M_v(C(L_2)^\perp,C(L_1)^\perp)&\geq& \min \{ D^\perp(K) \mid K \subseteq
\{X^3Y,X^2Y^2,XY^3\}, \# K=v\}, \nonumber
\end{eqnarray}
for $v=1,2,3$. From this we easily obtain the parameters in
Table~\ref{tabthree}.
\begin{table}
\begin{center}
\begin{tabular}{c|rrr}
$v$&1&2&3\\
$M_v(C(L_1),C(L_2))\geq$&15&19&22\\
$M_v(C(L_2)^\perp,C(L_1)^\perp)\geq$&8&11&13
\end{tabular}
\end{center}
\caption{Estimated relative generalized Hamming weights of the first
  code pair in Example~\ref{extihi}}
\label{tabthree}
\end{table}

In a similar way we have in (\ref{bnd}) and
(\ref{bndperp}) only  monomials from $L_1 \backslash
  L_2$, if in (\ref{eql1}) and (\ref{eql2})  we replace
  $(XY^3,X^3Y)$ with $(XY,XY)$, $(XY^2,X^2Y)$, $(X^2Y^2,X^2Y^2)$,
  $(XY^4,X^4Y)$, $(X^2Y^3,X^3Y^2)$, $(X^2Y^4,X^4Y^2)$,
  $(X^3Y^3,X^3Y^3)$, $(X^3Y^4,X^4Y^3)$, or $(X^4Y^4,X^4Y^4)$. However,
  due to symmetry, we only need to consider the first five cases (in addition
  to the case that we have already considered). For instance from
  $(X^4Y^4,X^4Y^4)$ we derive the same estimates for
  $M_v(C(L_1),C(L_2))$ 
and $M_v(C(L_2)^\perp,C(L_1)^\perp)$,
  respectively, as we would derive from $(XY,XY)$ for
  $M_v(C(L_2)^\perp,C(L_1)^\perp)$ and $M_v(C(L_1),C(L_2))$,
  respectively (the order of parameters being reversed). 
In
  Table~\ref{tabtabtab} we list our estimates of relative
  minimum distances and these are compared to the estimates of minimum distances to
  demonstrate the advantage of the proposed code construction. 
\begin{table}
\begin{center}
\begin{tabular}{c|rrrrrr}
$\ell=\#L_1-\#L_2$ &1&2&3&1&4&2\\
$M_1(C(L_1) , C(L_2))\geq$&25&20&15&16&10&12\\ 
$d(C(L_1))\geq$&24&18&12&12&6&6\\
$M_1(C(L_2)^\perp,C(L_1))^\perp\geq$&4&6&8&9&10&12\\
$d(C(L_2)^\perp)\geq$& 3&4&5&5&6&6
\end{tabular}
\end{center}
\caption{The first weights in Example~\ref{extihi}.}
\label{tabtabtab}
\end{table}
In this example we did  in (\ref{eql1}) and (\ref{eql2}), not consider replacing $XY^3$ and $X^3Y$, respectively,
with $Y^a$ and $X^a$, respectively. The
arguments of the example surely would apply also in this case, however, the
corresponding nested codes are of Reed-Muller type, and such codes do
not have impressive parameters. \demo
\end{example}

The method of Example~\ref{extihi} can be applied to any point set
$S_1\times S_2$ with $s=\#S_1=\#S_2$. The idea is to consider the
intersection of a line with slope $-1$ and the lattice
$\Delta(s,s)$. From either direction, both values $D(X^iY^j)$ and
$D^\perp (X^iY^j)$ strictly increase, while moving on the line segment
toward its middle. Hence, choosing $L_2 \subset
L_1$ in such a way that $L_1 \backslash L_2$ is equal to a center part
of a line segment of slope $-1$ produces good relative minimum distances.

\begin{theorem}\label{thesmalltwo}
Consider $S_1,S_2 \subseteq {\mathbb{F}}_q$ with $s=\# S_1 =\#
S_2$. Let $I=\langle F_1(X),F_2(Y)\rangle \subset {\mathbb{F}}_q[X,Y]$,
where $F_1$ and $F_2$ are as in~(\ref{eqfi}). Consider $X^iY^j \in \Delta(s,s)$ with
$i \leq j$ and let
\begin{eqnarray}
L_1&=&\{ N \in \Delta(s,s)\mid N \preceq_{\deg}X^iY^j\}, \label{eqgod1}\\
L_2&=&\{ N \in \Delta(s,s)\mid N \prec_{\deg}X^jY^i\}.\label{eqgod2}
\end{eqnarray}
The codes $C(L_1)$ and $C(L_2)$ are of length $n=s^2$ and their
codimension equals $\ell= j-i+1$. The relative minimum distances satisfy
\begin{eqnarray}
M_1(C(L_1),C(L_2))& =& (s-i)(s-j) \label{eqm1}, \\
M_1(C(L_2)^\perp , C(L_1)^\perp  &\geq &(i+1)(j+1), \label{eqm2}
\end{eqnarray}
and for $v=2, \ldots , \ell$
\begin{eqnarray}
M_v(C(L_1),C(L_2))& =& (s-i)(s-j)+\sum_{t=2}^v \big(
                       (s-i)-(t-1)\big)\label{eqcarnando1} \\
&=&(s-i)(s-j+v-1)-\frac{v(v-1)}{2}, \nonumber \\
M_v(C(L_2)^\perp,C(L_1)^\perp) &\geq& (i+1)(j+1)+\sum_{t=2}^v
                                      \big((j+1)-(t-1)\big)\label{eqcarnando2} \\
&=&(i+v)(j+1)-\frac{v(v-1)}{2}.\nonumber
\end{eqnarray}
\end{theorem}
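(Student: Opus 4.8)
The plan is to get the four displayed lower bounds from the Feng--Rao bound of Theorem~\ref{theour}, and then to match the bound in the two ``primary'' cases~(\ref{eqm1}) and~(\ref{eqcarnando1}) by exhibiting an explicit low-weight subspace.

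First I would describe $L_1\setminus L_2$ explicitly. The monomials $X^iY^j$ and $X^jY^i$ have the same total degree $i+j$, and every monomial of $\Delta(s,s)$ of degree $<i+j$ lies in $L_2$; hence $A:=L_1\setminus L_2$ is exactly the set of degree-$(i+j)$ monomials $N\in\Delta(s,s)$ with $X^jY^i\preceq_{\deg}N\preceq_{\deg}X^iY^j$, that is $A=\{\,X^{j-t}Y^{i+t}\mid 0\le t\le j-i\,\}$ (all exponents lie in $\{0,\dots,s-1\}$ since $i\le j\le s-1$). Thus the codimension is $\#L_1-\#L_2=j-i+1=\ell$, and by Theorem~\ref{theour} the codes have length $n=s^2$. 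The $\prec_{\deg}$-least element of $L_1\setminus L_2$ is $X^jY^i$, so $N_u=X^jY^i$; the $\prec_{\deg}$-greatest element of $L_1$ is $X^iY^j$, so $N_{u^\perp}=X^iY^j$. Hence $\{N_u,\dots,N_n\}\cap L_1=\{N_1,\dots,N_{u^\perp}\}\setminus L_2=A$, so~(\ref{bnd}) and~(\ref{bndperp}) become estimates of $\min\{D(K)\mid K\subseteq A,\ \#K=v\}$ and $\min\{D^\perp(K)\mid K\subseteq A,\ \#K=v\}$.

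The core computation is the first minimum. Writing $K=\{X^{j-t_1}Y^{i+t_1},\dots,X^{j-t_v}Y^{i+t_v}\}$ with $0\le t_1<\cdots<t_v\le j-i$, the monomials of $\Delta(s,s)$ divisible by some element of $K$ form a staircase region of the exponent lattice, and a telescoping count gives $D(K)=(s-j+t_1)(s-i-t_1)+\sum_{k=2}^{v}(t_k-t_{k-1})(s-i-t_k)$, whose quadratic part is $-\tfrac12\bigl[t_1^2+t_v^2+\sum_{k=2}^v(t_k-t_{k-1})^2\bigr]$, a negative sum of squares; so $D$ is a concave quadratic, and its minimum over the polytope $P=\{t_1\ge0,\ t_k-t_{k-1}\ge1,\ t_v\le j-i\}$ is attained at a vertex. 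Up to the symmetry permuting the gaps $t_k-t_{k-1}$, $P$ has only three vertex types: $(0,1,\dots,v-1)$, its mirror image $(j-i-v+1,\dots,j-i)$, and the ``one large gap'' vertex $(0,\,j-i-v+2,\,j-i-v+3,\dots,j-i)$. The first two give $(s-i)(s-j)+\sum_{t=2}^{v}\bigl((s-i)-(t-1)\bigr)=(s-i)(s-j+v-1)-\tfrac{v(v-1)}2$, and using $s-j\ge1$ and $j-i\ge v-1$ one checks the third is at least this value; since $(0,1,\dots,v-1)$ is an integer point of $P$ (hence a legitimate $K$), $\min\{D(K)\}$ equals the right-hand side of~(\ref{eqcarnando1}), which for $v=1$ is $(s-i)(s-j)$. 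For the dual minimum I would use the lattice reflection $X^aY^b\mapsto X^{s-1-a}Y^{s-1-b}$, under which $D^\perp(X^aY^b)=D(X^{s-1-a}Y^{s-1-b})$ and ``divides'' becomes ``is divisible by''; it maps $A$ bijectively onto the analogous anti-diagonal set for the pair $(s-1-j,\,s-1-i)$, so $\min\{D^\perp(K)\}$ equals the previous minimum with those substitutions, namely $(j+1)(i+v)-\tfrac{v(v-1)}2=(i+1)(j+1)+\sum_{t=2}^v\bigl((j+1)-(t-1)\bigr)$. Via Theorem~\ref{theour} this establishes the inequality $\ge$ in all of~(\ref{eqm1})--(\ref{eqcarnando2}).

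It remains to get the reverse inequality in~(\ref{eqm1}) and~(\ref{eqcarnando1}). Here I would pick $j$ distinct elements $a_1,\dots,a_j\in S_1$ and $i+v-1$ distinct elements $b_1,\dots,b_{i+v-1}\in S_2$ (possible since $j\le s-1$ and $i+v-1\le j\le s-1$), and set, for $t=0,\dots,v-1$, $F_t=\bigl(\prod_{k=1}^{j-t}(X-a_k)\bigr)\bigl(\prod_{l=1}^{i+t}(Y-b_l)\bigr)$, which is already reduced modulo $I$ and has $\prec_{\deg}$-leading monomial $X^{j-t}Y^{i+t}$. Let $U=\mathrm{Span}\{\mathrm{ev}(F_0),\dots,\mathrm{ev}(F_{v-1})\}$. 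Since $X^{j-t}Y^{i+t}\preceq_{\deg}X^iY^j$ all monomials of $F_t$ lie in $L_1$, so $U\subseteq C(L_1)$; and since these $v$ leading monomials are distinct and each is $\succeq_{\deg}X^jY^i$ (hence outside $L_2$), comparing $\prec_{\deg}$-highest terms, together with the fact that $\mathrm{ev}$ is an isomorphism, shows that no nonzero element of $U$ lies in $C(L_2)$; thus $\dim U=v$ and $U\cap C(L_2)=\{\vec{0}\}$. Finally $\#\mathrm{Supp}(U)=\#\bigcup_{t=0}^{v-1}\mathrm{Supp}(\mathrm{ev}(F_t))$, and a direct count of the points $(\alpha,\beta)\in S_1\times S_2$ lying in some $\mathrm{Supp}(\mathrm{ev}(F_t))$ — organised by which $a_k$ (if any) equals $\alpha$ and which $b_l$ (if any) equals $\beta$ — yields exactly $(s-i)(s-j+v-1)-\tfrac{v(v-1)}2$. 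Hence $M_v(C(L_1),C(L_2))$ is bounded above by this value, which combined with the Feng--Rao lower bound gives the equalities in~(\ref{eqm1}) and~(\ref{eqcarnando1}) and finishes the proof. I expect the vertex minimisation in the third paragraph to be the main obstacle; the support count and the leading-term bookkeeping are routine once set up.
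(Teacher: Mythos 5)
Your proof is correct and follows the same overall strategy as the paper's: use Theorem~\ref{theour} with $\prec_{\deg}$ to reduce to minimising $D$ and $D^\perp$ over $v$-element subsets of the anti-diagonal segment $A$, establish the minimisation, and then construct an explicit subspace to get equality in the primary case. The interesting differences are in the middle step. The paper carries out the minimisation only for $v=2$ by inclusion--exclusion and a derivative argument, and appeals to ``the same reasoning'' for $v\ge 3$; you instead give a closed ``telescoping'' expression for $D(K)$ valid for all $v$, observe that it is a concave quadratic, and then inspect the $v+1$ vertices of the integral polytope. You also derive the dual bound~(\ref{eqm2})/(\ref{eqcarnando2}) from the primary one via the lattice reflection $X^aY^b\mapsto X^{s-1-a}Y^{s-1-b}$, which is a cleaner mechanism than the paper's parallel (but unreflected) computation. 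For the reverse inequality in~(\ref{eqcarnando1}) you build one explicit subspace $U$ achieving the minimum; the paper instead asserts that each $K$ has a matching subspace. Both routes work.

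One place where you are vaguer than you should be is the phrase ``up to the symmetry permuting the gaps $t_k-t_{k-1}$, $P$ has only three vertex types.'' This implicitly claims that $D$ takes the same value at all $v-1$ ``mixed'' vertices (drop $c_k$ for $1\le k\le v-1$), which is true but not a formal symmetry of $D$: the linear part of $D$ is $(s-i)t_v-(s-j)t_1$ and the quadratic part is $-\tfrac12\bigl[t_1^2+t_v^2+\sum_{k\ge2}(t_k-t_{k-1})^2\bigr]$, and only because every mixed vertex has $t_1=0$, $t_v=j-i$, and the same multiset of interior gaps does $D$ come out constant on them. Expanding $D$ at the drop-$c_k$ vertex and checking that the $k$-dependence cancels (it does) would close this gap; and your claimed inequality does follow from $s-j\ge1$ and $j-i\ge v-1$ once the common value of the mixed vertices is in hand. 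This is a genuinely missing step in your write-up, though it is no worse than the paper's own ``the cases $v\ge3$ can be proved with the same reasoning,'' and your overall plan is otherwise sound. Similarly, the support count for $U$ is asserted rather than shown; it is correct and straightforward, but it should be written out.
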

\begin{proof}
We first establish the bounds 
\begin{eqnarray}
M_v(C(L_1),C(L_2))& \geq& \min \big\{ D(K) \mid K \subseteq \{ X^jY^i,
  X^{j-1}Y^{i+1},\nonumber \\
&& \ldots , X^iY^j\}, \# K =v\big \}, \label{eqgenbound1} \\
M_v(C(L_2)^\perp,C(L_1)^\perp) &\geq& \min \big \{ D^\perp(K) \mid K \subseteq \{ X^jY^i,
  X^{j-1}Y^{i+1},\nonumber \\
&& \ldots , X^iY^j\}, \# K =v\big\}. \label{eqgenbound2} 
\end{eqnarray}
We do this by applying
Theorem~\ref{theour} with $\prec_{\deg}$ as the chosen monomial
ordering. In particular $\Delta(s,s)=\{ N_1, \ldots , N_{s^2}\}$ where
the enumeration of the $N_i$s is with respect to $\prec_{\deg}$. Consider $u=\min \{i \mid N_i \in L_1 \backslash L_2\}$ and $u^\perp =
\max \{i \mid N_i \in L_1 \}$. The $u$th element of $\Delta(s,s)$ now
is $N_u=X^jY^i$ and the $u^\perp$th element is
$N_{u^\perp}=X^iY^j$. By~(\ref{bnd}) and (\ref{bndperp}) the right-hand sides of
(\ref{eqgenbound1}) and (\ref{eqgenbound2}) therefore serve as lower
bounds on their respective left-hand sides.\\
We next prove~(\ref{eqm1}) and (\ref{eqm2}).
 From~(\ref{eqgenbound1}) we have
\begin{eqnarray}
&&M_1(C(L_1),C(L_2)) \nonumber \\
&\geq &\min \{ D(X^jY^i),
D(X^{j-1}Y^{i+1}),\ldots , D(X^iY^j)\}\nonumber\\
&=&\min\{
(s-(i+v))(s-(j-v)) \mid v=0, \ldots , \ell-1=j-i\}\nonumber \\
&=&(s-i)(s-j)\nonumber
\end{eqnarray}
(equality of the convex function is attained for $v=0$ and
$v=\ell-1$). This proves that the right-hand side of (\ref{eqm1}) is larger
than or equal to the left-hand side. In a similar fashion we establish the
inequality~(\ref{eqm2}). To establish equality in~(\ref{eqm1}) we only
need to find a codeword in $C(L_1)\backslash C(L_2)$ with $(s-i)(s-j)$
non-zeros. To this end, write $S_1=\{\alpha_1, \ldots ,
\alpha_s\}$ and $S_2=\{ \beta_1, \ldots , \beta_s\}$ and consider 
$$F(X,Y)=\prod_{w=1}^i (X-\alpha_w)\prod_{r=1}^j (Y-\beta_r)$$ 
which has exactly the prescribed number of non-zeros in $S_1\times
S_2$. The codeword $\vec{c}={\mbox{\textnormal{ev}}}(F(X,Y)+I)$ clearly belongs to
$C(L_1) \backslash C(L_2)$ and therefore we have established equality
in~(\ref{eqm1}).\\
In a similar way we can show that equality actually holds
in~(\ref{eqgenbound1}). This is done by considering for each possible
set $K\subseteq \{X^jY^i, 
X^{j-1}Y^{i+1}, \ldots , X^iY^j\}$ the corresponding set of $\#K$
polynomials as above. The number of elements in $S_1\times S_2$, that
are non-zeros of at least one of these polynomials, equals
$D(K)$.\\
It remains to show that the right-hand side of~(\ref{eqgenbound1}) equals
the right-hand side of~(\ref{eqcarnando1}), and that the right-hand side
of~(\ref{eqgenbound2}) equals the right-hand side
of~(\ref{eqcarnando2}). For convenience, we will explain the case
in~(\ref{eqcarnando1}) for $v=2$. The cases $v \geq 3$ can be proved
with the same reasoning by using the principle of inclusion and
exclusion. A similar and straightforward reasoning, where one should
replace $(s-i)(s-j)$ with $(i+1)(j+1)$ gives the corresponding formula
for the case in~(\ref{eqcarnando2}).\\
Indeed, we are considering sets $K=\{X^{j-x}Y^{i+x},X^{j-y}Y^{i+y}\}$,
where one can assume $0 \leq x < y\leq j-i=\ell-1$. Then $D(K)$ equals
the cardinality of the set of monomials which are divisible by either
$X^{j-x}Y^{i+x}$ 
or 
$X^{j-y}Y^{i+y}$. So, we have to compute the sum
of the cardinalities of the set of monomials divisible by
$X^{j-x}Y^{i+x}$ plus that of the set of monomials divisible by
$X^{j-y}Y^{i+y}$ minus that of the set of monomials divisible by both
of them. Therefore,
\begin{eqnarray}
D(K)&=&\big((s-j)+x\big)\big((s-i)-x\big)+\big((s-j)+y\big)\big((s-i)-y\big)\nonumber
   \\
&&-\big((s-j)+x\big)\big((s-i)-y\big).\nonumber
\end{eqnarray}
We are looking for the minimum of that function within the above
mentioned region. Derivatives show that the minimum will appear on the
boundary. As a consequence we find it for the values $x=0$ and $y=1$
which give the value $(s-i)(s-j)+(s-i-1)$ in the statement. This
concludes the proof.
\end{proof}

\begin{proposition}\label{corsmalll}
Let the notation be as in Theorem~\ref{thesmalltwo} and consider
$\sigma \in \{0, \ldots , s-1\}$. Then for any positive integer $\ell
\leq \sigma +1$ such that $\ell$ is even if and only if $\sigma$ is
odd we obtain nested code pairs $C(L_2) \subset C(L_1)$ of
codimension $\ell$ with
\begin{eqnarray}
M_1(C(L_1),C(L_2))&=&\bigg(s-\frac{\sigma -\ell+1}{2}\bigg)\bigg(s-\frac{\sigma
                      +\ell-1}{2}\bigg) \label{eqqqsnabel1}\\
M_1(C(L_2)^\perp,C(L_1)^\perp)&\geq &\bigg(\frac{\sigma -\ell +3}{2}\bigg)\bigg(\frac{\sigma
                      +\ell+1}{2}\bigg) \label{eqqqsnabel2}
\end{eqnarray}
and if $\ell \leq \sigma-1$ then
\begin{eqnarray}
d(C(L_1))&=&s(s-\sigma)<M_1(C(L_1),C(L_2)).\label{eq2snabel2.5}
\end{eqnarray}
Similarly we obtain code pairs of codimension $\ell$ with 
\begin{eqnarray}
M_1(C(L_1),C(L_2))&=&\bigg(\frac{\sigma -\ell +3}{2}\bigg)\bigg(\frac{\sigma
                      +\ell+1}{2}\bigg)  \label{eqqqsnabel4} \\
                     M_1(C(L_2)^\perp,C(L_1)^\perp)&\geq &\bigg(s-\frac{\sigma -\ell+1}{2}\bigg)\bigg(s-\frac{\sigma +\ell-1}{2}\bigg) \label{eqqqsnabel3}
\end{eqnarray}
and if $\ell \leq \sigma-1$ then
\begin{eqnarray}
d(C(L_1))=\sigma+1 < M_1(C(L_1),C(L_2)).\nonumber 
\end{eqnarray}
\end{proposition}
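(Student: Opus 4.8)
The plan is to obtain both families by feeding two explicit choices of the pair $(i,j)$ into Theorem~\ref{thesmalltwo}, and to read off the minimum distance $d(C(L_1))$ from the primary Feng--Rao (footprint) bound, i.e.\ from~(\ref{bnd}) of Theorem~\ref{theour} applied with $L_2=\emptyset$, which gives $d(C(L_1))\ge\min\{D(N)\mid N\in L_1\}$, together with an explicit codeword realising it.

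\emph{First family.} I would put $i=\frac{\sigma-\ell+1}{2}$ and $j=\frac{\sigma+\ell-1}{2}$. The parity hypothesis makes $\sigma-\ell$ odd, so $i$ and $j$ are integers; $\ell\ge 1$ gives $0\le i\le j$, and $\ell\le\sigma+1\le s$ gives $j\le\sigma\le s-1$, so $X^iY^j\in\Delta(s,s)$. Since $i+j=\sigma$ and $j-i+1=\ell$, Theorem~\ref{thesmalltwo} yields codimension $\ell$, the equality $M_1(C(L_1),C(L_2))=(s-i)(s-j)$ which is~(\ref{eqqqsnabel1}), and $M_1(C(L_2)^\perp,C(L_1)^\perp)\ge(i+1)(j+1)$ which is~(\ref{eqqqsnabel2}). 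For~(\ref{eq2snabel2.5}) I would observe that every monomial of $L_1$ has total degree at most $\sigma\le s-1$ and that, for a fixed degree $d$, the map $a\mapsto D(X^aY^{d-a})=(s-a)(s-d+a)$ is concave, hence minimised at the extreme admissible exponents. A short check over $d<\sigma$ and $d=\sigma$ then gives $\min\{D(N)\mid N\in L_1\}=s(s-\sigma)$, attained at $X^\sigma\in L_1$; writing $S_1=\{\alpha_1,\ldots,\alpha_s\}$, the codeword $\ev\big(\prod_{w=1}^{\sigma}(X-\alpha_w)+I\big)$ lies in $C(L_1)\setminus\{\vec{0}\}$ and has exactly $s(s-\sigma)$ non-zeros, so $d(C(L_1))=s(s-\sigma)$. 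Finally $(s-i)(s-j)-s(s-\sigma)=ij$, which is positive exactly when $i\ge 1$, that is when $\ell\le\sigma-1$.

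\emph{Second family.} I would instead put $i'=s-\frac{\sigma+\ell+1}{2}$ and $j'=s-\frac{\sigma-\ell+3}{2}$. Again the parity hypothesis makes these integers; $\ell\ge 1$ gives $i'\le j'$, $\ell\le\sigma+1$ gives $j'\le s-1$, and $\sigma+\ell+1\le(s-1)+(\sigma+1)+1\le 2s$ gives $i'\ge 0$, so $X^{i'}Y^{j'}\in\Delta(s,s)$; moreover $j'-i'+1=\ell$. Theorem~\ref{thesmalltwo} now gives $M_1(C(L_1),C(L_2))=(s-i')(s-j')=\frac{(\sigma-\ell+3)(\sigma+\ell+1)}{4}$ which is~(\ref{eqqqsnabel4}), and $M_1(C(L_2)^\perp,C(L_1)^\perp)\ge(i'+1)(j'+1)=\big(s-\frac{\sigma-\ell+1}{2}\big)\big(s-\frac{\sigma+\ell-1}{2}\big)$ which is~(\ref{eqqqsnabel3}). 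For the minimum distance, note that $X^{i'}Y^{j'}$ has degree $D^\ast=2s-\sigma-2\ge s-1$, so $L_1$ contains every monomial of degree $<D^\ast$ but, in degree $D^\ast$, only those whose $Y$-exponent is at most $j'$. Using the same concavity of $a\mapsto(s-a)(s-d+a)$, the minimum of $D$ over the degree-$d$ monomials of $L_1$ equals $2s-1-d\ge\sigma+2$ for $s-1\le d<D^\ast$, is at least $s(s-d)>\sigma+1$ for $d<s-1$, and over the degree-$D^\ast$ monomials of $L_1$ equals $\min\{(s-i')(s-j'),\sigma+1\}=\sigma+1$, the last equality because $(s-i')(s-j')-(\sigma+1)=\frac{(\sigma-\ell+1)(\sigma+\ell-1)}{4}\ge 0$. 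Hence $\min\{D(N)\mid N\in L_1\}=\sigma+1$, attained at $X^{s-1}Y^{s-\sigma-1}\in L_1$, and with $S_2=\{\beta_1,\ldots,\beta_s\}$ the codeword $\ev\big(\prod_{w=1}^{s-1}(X-\alpha_w)\prod_{r=1}^{\,s-\sigma-1}(Y-\beta_r)+I\big)$ lies in $C(L_1)\setminus\{\vec{0}\}$ and has weight $\sigma+1$; thus $d(C(L_1))=\sigma+1$. Since $(s-i')(s-j')-(\sigma+1)=\frac{(\sigma-\ell+1)(\sigma+\ell-1)}{4}$ is strictly positive once $\ell\le\sigma-1$, the stated strict inequality follows.

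The substitutions themselves, and the appeal to Theorem~\ref{thesmalltwo}, are routine. The one place that needs care is the minimum-distance computation for the second family: because $\deg(X^{i'}Y^{j'})$ exceeds $s-1$, the set $L_1$ is not a full lower set within its top degree, so one must identify precisely which degree-$D^\ast$ monomials lie in $L_1$ and verify that no lower degree produces a smaller value of $D$. The concavity of $a\mapsto(s-a)(s-d+a)$ on each fixed-degree segment is exactly what keeps this bookkeeping short.
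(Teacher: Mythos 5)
Your proof is correct and follows essentially the same route as the paper: both feed $i=\tfrac{\sigma-\ell+1}{2}$, $j=\tfrac{\sigma+\ell-1}{2}$ (and the reflected pair $i'=s-1-j$, $j'=s-1-i$) into Theorem~\ref{thesmalltwo}. The paper's proof handles the first family explicitly ($d(C(L_1))=D(X^\sigma)=s(s-\sigma)$ works because $X^\sigma\in L_1$ and $D(X^aY^b)\ge D(X^{a+b})$) and dispatches the second family with ``follows by symmetry''; your more careful analysis of $d(C(L_1))$ in the second family — noting that when $i'+j'\ge s$ the top-degree layer of $L_1$ is no longer a full lower set and so $X^{i'+j'}\notin\Delta(s,s)$ — correctly fills in what ``symmetry'' glosses over, and arrives at the same value $\sigma+1$.
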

\begin{proof}
We only prove (\ref{eqqqsnabel1}), (\ref{eqqqsnabel2}) and (\ref{eq2snabel2.5}). The other
results follow by symmetry. Choose $0 \leq i \leq j < s$ with
$i+j=\sigma < s$ and $\ell =j-i+1$. Then $i=\frac{\sigma - \ell
  +1}{2}$, $j=\frac{\sigma +\ell-1}{2}$ and the first two results
follow from Theorem~\ref{thesmalltwo}. Finally, $d(C(L_1))=M_1(C(L_1),\{\vec{0}\})=D(X^\sigma)=s(s-\sigma)$.
\end{proof}
\begin{corollary}\label{corrrr}
Consider integers $1 < s  \leq q$, where $q$ is a prime power and let
$\sigma \in \{0, \ldots , s-1\}$. Then for any $\ell \leq \sigma +1$
such that $\ell$ is even if and only if $\sigma$ is odd we obtain from
Proposition~\ref{corsmalll} ramp secret sharing schemes over
${\mathbb{F}}_q$ with $n=s^2$ participants, shares in
${\mathbb{F}}_q^\ell$ and either
\begin{eqnarray}
t=t_1&\geq & \bigg( \frac{\sigma -\ell+3}{2}\bigg)\bigg( \frac{\sigma
             +\ell+1}{2}\bigg)-1\nonumber \\ 
r=r_\ell&=&s^2-\bigg( s-\frac{\sigma -\ell+1}{2}\bigg)\bigg(
            s-\frac{\sigma +\ell-1}{2}\bigg)+1 \nonumber 
\end{eqnarray}
or
\begin{eqnarray}
t=t_1&\geq & \bigg( s-\frac{\sigma -\ell+1}{2}\bigg)\bigg(
            s-\frac{\sigma +\ell-1}{2}\bigg)-1\nonumber \\ 
r=r_\ell&=&s^2-\bigg( \frac{\sigma -\ell+3}{2}\bigg)\bigg( \frac{\sigma
             +\ell+1}{2}\bigg)+1. \nonumber  
\end{eqnarray}
Similarly we obtain asymmetric quantum codes with parameters
$$[[n=s^2, \ell, d_z/d_x]]_q$$
where $d_z$ equals the right hand side of~(\ref{eqqqsnabel1}) and
$d_x$ is greater than or equal to the right hand side
of~(\ref{eqqqsnabel2}) (and similarly with (\ref{eqqqsnabel4}) and
(\ref{eqqqsnabel3})). If $\ell \leq \sigma-1$ then the asymmetric
quantum codes are impure.
\end{corollary}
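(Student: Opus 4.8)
The plan is to obtain every assertion by feeding the data of Proposition~\ref{corsmalll} into the two translation results recalled in Section~\ref{secintro}, namely the theorem expressing the reconstruction and privacy numbers through relative generalized Hamming weights, and Theorem~\ref{thecss} for the CSS construction. First I would invoke Proposition~\ref{corsmalll}: for $\sigma\in\{0,\ldots,s-1\}$ and every admissible $\ell\leq\sigma+1$ (the parity condition being precisely what makes $i=\tfrac{\sigma-\ell+1}{2}$ and $j=\tfrac{\sigma+\ell-1}{2}$ nonnegative integers with $i\leq j<s$), it produces a nested pair $C(L_2)\subset C(L_1)\subseteq\mathbb{F}_q^{s^2}$ of length $n=s^2$ and codimension $\ell$, with $M_1(C(L_1),C(L_2))$ and $M_1(C(L_2)^\perp,C(L_1)^\perp)$ given by the right-hand sides of~(\ref{eqqqsnabel1}) and~(\ref{eqqqsnabel2}) in the first case, and of~(\ref{eqqqsnabel4}) and~(\ref{eqqqsnabel3}) in the symmetric variant.

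For the secret sharing part I would apply the theorem from Section~\ref{secintro} giving $r_v=n-M_{\ell-v+1}(C_1,C_2)+1$ and $t_v=M_v(C_2^\perp,C_1^\perp)-1$. Specialising the privacy formula to $v=1$ and the reconstruction formula to $v=\ell$ (so that the index $\ell-v+1$ equals $1$ and $M_{\ell-v+1}=M_1$), then substituting $n=s^2$ together with the two values supplied by Proposition~\ref{corsmalll}, immediately yields the stated expressions for $t=t_1$ and $r=r_\ell$. The inequality in $t$ is inherited from the inequality in the lower bound on $M_1(C(L_2)^\perp,C(L_1)^\perp)$, whereas $r$ comes out as an exact equality because $M_1(C(L_1),C(L_2))$ is known exactly. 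The two displayed alternatives for $(t,r)$ correspond to the two alternatives in Proposition~\ref{corsmalll}.

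For the quantum part I would apply Theorem~\ref{thecss} to the same pair $C(L_2)\subset C(L_1)$, obtaining an $[[s^2,\ell,d_z/d_x]]_q$ code with $d_z=M_1(C(L_1),C(L_2))$ and $d_x=M_1(C(L_2)^\perp,C(L_1)^\perp)$, and then read off the formulas from Proposition~\ref{corsmalll} (matching~(\ref{eqqqsnabel1})--(\ref{eqqqsnabel2}) or~(\ref{eqqqsnabel4})--(\ref{eqqqsnabel3})). For impureness I would recall from the discussion following Theorem~\ref{thecss} that the code is impure as soon as strict inequality holds in $d_z\geq d(C(L_1))$ or in $d_x\geq d(C(L_2)^\perp)$; when $\ell\leq\sigma-1$, Proposition~\ref{corsmalll} gives $d(C(L_1))=s(s-\sigma)<M_1(C(L_1),C(L_2))=d_z$ in the first case, and $d(C(L_1))=\sigma+1<d_z$ in the symmetric case, so impureness follows in both. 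Since each ingredient is quoted verbatim from statements already established, there is no genuine obstacle; the only care needed is the bookkeeping of the range and parity constraints on $\ell$ and $\sigma$ and the correct specialisation $\ell-v+1=1$ when extracting $r_\ell$.
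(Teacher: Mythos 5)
Your proof is correct and is exactly the route the paper intends (no separate proof is printed because it is an immediate corollary): apply Proposition~\ref{corsmalll} to produce the nested pair $C(L_2)\subset C(L_1)$, then specialise the theorem $r_v=n-M_{\ell-v+1}(C_1,C_2)+1$, $t_v=M_v(C_2^\perp,C_1^\perp)-1$ at $v=\ell$ and $v=1$ respectively for the secret sharing statement, apply Theorem~\ref{thecss} for the CSS parameters, and conclude impureness from $d(C(L_1))<M_1(C(L_1),C(L_2))=d_z$ when $\ell\leq\sigma-1$. Your bookkeeping of the parity condition (which makes $i=\tfrac{\sigma-\ell+1}{2}$ and $j=\tfrac{\sigma+\ell-1}{2}$ integers with $0\leq i\leq j<s$) and of why $r$ is an equality while $t$ is only a lower bound is also right.
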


\begin{remark}
As mentioned in the introduction it is often desirable to have
asymmetric quantum codes with $d_z$ much larger than $d_x$. One such family is
obtained from Corollary~\ref{corrrr} with parameters $[[n=s^2, \ell, d_z \geq
s(s-\ell+1) / d_x =\ell ]]_q$ for any $\ell\in \{1, \ldots ,
s-1\}$. For $q=7,8,9$ and $\ell=1,2,3,4,5$ these codes strictly exceed
the Gilbert-Varshamov bound (Theorem~\ref{theMGV}). Similarly for
$q=5,11,13,16,17,19,23,25$ and $\ell=1,2,3,4$.
\end{remark}

\begin{example}\label{exandregras}
In this example we consider asymmetric quantum codes as in
Corollary~\ref{corrrr} with $d_z=\delta$ being equal to the right hand
side of (\ref{eqqqsnabel1}) and $d_x$ being greater than or equal to
$\delta^\perp$ which we define as the right hand side
of~(\ref{eqqqsnabel2}). We treat the cases
$n=q^2$, where $q=3,4,5,7,8,9$. 
In the
literature e.g.~\cite{aq1,ezerman2013css,aqlaguardia} one can find extensive tables of
quantum stabilizer code parameters derived by applying
Theorem~\ref{thecss}, however, they only use the bound $d_z\geq d(C_1)$ and
$d_x\geq d(C_2^\perp)$, where $d(C_1)$ and $d(C_2^\perp)$ are the
minimum distances of concrete code pairs with $C_2 \subset C_1$. The
present example illustrates the huge advantage of using instead the
relative minimum distances (which is what is behind the bounds in Corollary~\ref{corrrr}). This is done by investigating for each $\ell$ and
$\delta$ what is the highest value $g(\ell, \delta)$ such
that the tables of best known linear codes in~\cite{grassl} guarantee the existence of linear
code pairs $A, B^\perp$ satisfying $\dim A-\dim B=\ell$, $d(A) \geq
\delta$, and $d(B^\perp)\geq g(\ell,\delta)$ (this is in the spirit
of~\cite[Theorem 2]{rev3_5}). Observe, that we make no
assumption whatsoever that $B \subset A$. Actually, such inclusion is
very unlikely to hold when one chooses two codes $A$ and $B^\perp$ which are
optimal with respect to the tables of best known linear codes in~\cite{grassl}. In
Table~\ref{tabgrasgroent} we list values of $(\ell, \delta,
\delta^\perp,g(\ell, \delta))$. The many cases
where $\delta^\perp$ is close to $g(\ell,\delta)$ illustrate the huge
advantage of using the construction in Theorem~\ref{thesmalltwo} and
taking into account the relative minimum distances. Note that, there
are even two
cases where $\delta^\perp$ exceeds the corresponding $g(\ell,\delta)$,
namely for $q=7$ and $(\ell, \delta, \delta^\perp, g(\ell,\delta))$
equal to $(3, 15, 15, 14)$ or $(2, 30, 6, 5)$.  All displayed code parameters coming from Theorem~\ref{thesmalltwo} strictly exceed the
Gilbert-Varshamov bound (Theorem~\ref{theMGV}).

\begin{table}
\begin{center}
\begin{tabular}{c|l}
$q$&$(\ell,\delta,\delta^\perp,g(\ell,\delta))$\\
\hline
3&(1,4,4,4)\\
\ \\
4&(2,6,6,6) (1,9,4,4)\\
\ \\
5&(3,8,9,9) (1,9,9,9) (2,12,6,6) (1,16,4,4)\\
\ \\
7&(5,12,12,18) {\bf{(3,15,15,14)}} (1,16,16,17) (4,18,10,13)\\
&(2,20,12,12) (3,24,8,9) (1,25,9,10) {\bf{(2,30,6,5)}}\\
&(1,36,4,4)\\
\ \\
8&(5,21,12,19) (3,24,15,16) (1,25,16,16) (4,18,18,21) \\
&(2,20,20,22) (4,28,10,13) (2,30,12,13) (3,35,8,8)\\
& (1,36,9,10) (2,42,6,6) (1,49,4,4)\\
\ \\
9&(3,24,24,26) (1,25,25,26) 
(5,21,21,27) (4,28,18,21)\\
& (2,30,20,22) (5,32,12,18)
(3,35,15,16) (1,36,16,16)\\
& (4,40,10,13) (2,42,12,14) (3,48,8,8) (1,49,9,9)\\
& (2,56,6,6) (1,64,4,4)
\end{tabular}
\end{center}
\caption{Corresponding values of $(\ell, \delta, \delta^\perp,
  g(\ell,\delta))$ from Example~\ref{exandregras}. The many cases with
$\delta^\perp$ close to $g(\ell,\delta)$ (and even {\bf{two cases}} with
$\delta^\perp >g(\ell,\delta)$) demonstrate the advantage of the
construction in Theorem~\ref{thesmalltwo} and of using relative
minimum distances.}
\label{tabgrasgroent}
\end{table}
\ \demo
\end{example}

It is not straightforward to generalize 
Theorem~\ref{thesmalltwo} to  
point ensembles $S_1
\times S_2$ with $\# S_1 $ not necessarily equal to $ \# S_2$. The problem lies
in the choice of monomial ordering (and the corresponding definition of
$L_1$ and $L_2$). More concretely, for $\#S_1 \neq \# S_2$ there simply is no monomial ordering
which simultaneously optimizes the relative minimum distance of the
corresponding primary
codes and the relative minimum distance of the corresponding dual
codes. However, our method can still be applied as the following
example illustrates.
\begin{example}\label{ex101}
In this example we consider $S_1, S_2 \subseteq {\mathbb{F}}_q$ with
$s_1=\# S_1=8$, $s_2=\# S_2=5$ and consider codes defined from $S_1
\times S_2$. Here $q$ is any prime power greater than or equal to $8$. In Figure~\ref{figfjong} we depict on the left
$D(\Delta(s_1,s_2))$ and on the right
$D^\perp(\Delta(s_1,s_2))$. 
\begin{figure}
\begin{center}
$\begin{array}{rrrrrrrr}
8&7&6&5&4&3&2&1 \\
16&14&12&10&8&6&4&2\\
24&21&18&15&12&9&6&3\\
32&28&24&20&16&12&8&4\\
40&35&30&25&20&15&10&5
\end{array}
\ \ \ 
\begin{array}{rrrrrrrr}
5&10&15&20&25&30&35&40\\
4&8&12&16&20&24&28&32\\
3&6&9&12&15&18&21&24\\
2&4&6&8&10&12&14&16\\
1&2&3&4&5&6&7&8
\end{array}$
\end{center}
\caption{Example~\ref{ex101}: $D(N)$ to the left, and $D^\perp(N)$ to the right} 
\label{figfjong}
\end{figure}

Concentrating first on the lower left
corner of $\Delta(s_1,s_2)$ we see that on a line segment of slope
$-1$ the values $D^\perp(X^iY^j)$ indeed still increase when moving
from either direction toward the middle of the segment. This,
however, does not at all hold for $D(X^iY^j)$. For instance if we
choose 
$$L_1=\{N \in \Delta(s_1,s_2) \mid N \preceq_{\deg} XY^2\},$$
$$L_2=\{N \in \Delta(s_1,s_2) \mid N \prec_{\deg} X^2Y\}, $$
then in (\ref{bnd}) and (\ref{bndperp}) of Theorem~\ref{theour} we only need
to consider monomials in $L_1 \backslash L_2=\{X^2Y,XY^2\}$. Hence,
we obtain the estimates $M_1(C(L_1),C(L_2))\geq \min \{21,24\}=21$ 
and
$M_1(C(L_2)^\perp,C(L_1)^\perp)\geq 6$. But this seems somewhat not a
perfect choice of $L_2 \subset L_1$ as now $\min \{ D(M) \mid M\in
L_1\}=\min \{
D(M) \mid M \in L_1\backslash L_2\}$. Hence, we optimized the relative minimum distance of
the dual codes, but did not obtain any improvement for the primary
codes. Turning to the right upper corner of $\Delta(s_1,s_2)$ the
situation is similar, however, with the role of the primary and dual codes
interchanged. We finally consider the remaining middle part of
$\Delta(s_1,s_2)$. We first choose as monomial ordering the weighted
degree lexicographic ordering $\prec_w$ defined by the rule that  $X^{i_1}Y^{i_2} \prec_w X^{j_1}Y^{j_2}$ if
either $i_1+2i_2 < j_1+2j_2$, or $i_1+2i_2=j_1+2j_2$ with  $i_2 < j_2$. Then define
$$L_1=\{ N \in \Delta(s_1,s_2) \mid N \preceq_wX^2Y^2\},$$
$$L_2=\{ N \in \Delta(s_1,s_2) \mid N \prec_wX^4Y\}.$$
Again this renders the nice property that in (\ref{bnd}) and (\ref{bndperp}) we only
need to consider the monomials of $L_1\backslash L_2$, which in this
case becomes $\{X^4Y,X^2Y^2\}$. We obtain 
$$M_1(C(L_1),C(L_2)) \geq
\min \{16, 18\}=16,$$
$$M_1(C(L_2)^\perp, C(L_1)^\perp)\geq \min \{9,10\}=9.$$ Choosing on
the other hand the degree lexicographic ordering and defining
$$L_1=\{N \in \Delta(s_1,s_2) \mid N \preceq_{\deg} X^2Y^2\},$$
$$L_2=\{N \in \Delta(s_1,s_2) \mid N \prec_{\deg} X^3Y\},$$
we only need to consider monomials in $L_1\backslash
L_2=\{X^3Y,X^2Y^2\}$, from which we obtain $M_1(C(L_1),C(L_2))\geq \min
\{18,20\}=18$ and $M_1(C(L_2)^\perp, C(L_1)^\perp ) \geq \min \{ 8,9
\}=8$. As a consequence, there seems to be no general rule for which (weighted) degree
lexicographic ordering to choose.\demo
\end{example}
We now return to the case of the point set being a Cartesian product
of subsets $S_i\subseteq {\mathbb{F}}_q$ of the same
size. Theorem~\ref{thesmalltwo} treated the two-dimensional case, the
theorem below treats higher dimensions.
\begin{theorem}\label{thehigherdim}
Consider $S_1, \ldots , S_m \subseteq {\mathbb{F}}_q$ with $s=\# S_1=\cdots
= \#S_m$ and let\\
 $\langle F_1(X_1), \ldots , F_m(X_m)\rangle \subset
{\mathbb{F}}_q[X_1, \ldots , X_m]$, where $F_1, \ldots  F_m$ are as in
(\ref{eqfi}). Consider $X_1^{i_1}X_2^{i_2}\cdots X_m^{i_m} \in \Delta(s,
\ldots , s)$ with $i_1 \leq i_2$ and let
$$L_1=\{ N \in \Delta(s,\ldots , s)\mid N
\preceq_{\deg}X_1^{i_1}X_2^{i_2} \cdots  X_m^{i_m}\},$$
$$L_2=\{ N \in \Delta(s,\ldots , s) \mid N
\prec_{\deg}X_1^{i_2}X_2^{i_1}X_3^{i_3} \cdots , X_m^{i_m}\}.$$
The codes $C(L_1)$ and $C(L_2)$ are of length $n=s^m$ and their
codimension equals $\ell= i_2-i_1+1$. The relative minimum distances satisfy
\begin{eqnarray}
M_1(C(L_1),C(L_2)) &=& (s-i_1)\cdots (s-i_m) \nonumber \\
M_1(C(L_2)^\perp , C(L_1)^\perp  &\geq& (i_1+1)\cdots (i_m+1),
                                        \nonumber 
\end{eqnarray}
and for $v=2, \ldots , \ell$
\begin{eqnarray}
M_v(C(L_1),C(L_2)) &=& 
\min \big\{ D(K) \mid 
K \subseteq \{
  X_1^{i_2}X_2^{i_1}X_3^{i_3}\cdots X_m^{i_m},\nonumber \\
&&  \ldots , X_1^{i_1}X_2^{i_2}\cdots X_m^{i_m} \}, \# K =v\big\},
   \nonumber \\
M_v(C(L_2)^\perp,C(L_1)^\perp) &\geq& 
 \min \big\{ D^\perp(K) \mid 
K \subseteq \{
  X_1^{i_2}X_2^{i_1}X_3^{i_3}\cdots X_m^{i_m},\nonumber \\
&&  \ldots , X_1^{i_1}X_2^{i_2}\cdots X_m^{i_m} \}, \# K =v\big\}.\nonumber
\end{eqnarray}
\end{theorem}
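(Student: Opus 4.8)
The plan is to mimic the proof of Theorem~\ref{thesmalltwo}, reducing the $m$-variable situation to the planar one by peeling off a fixed tail in $X_3,\ldots,X_m$; indeed for $m=2$ the statement is exactly Theorem~\ref{thesmalltwo}, and for $m\geq 3$ the extra variables will contribute only constant multiplicative factors.

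First I would pin down $L_1\setminus L_2$. If $X_1^{i_2}X_2^{i_1}X_3^{i_3}\cdots X_m^{i_m}\preceq_{\deg}N\preceq_{\deg}X_1^{i_1}X_2^{i_2}X_3^{i_3}\cdots X_m^{i_m}$ then $N$ has the same total degree as these two monomials, and reading off the rightmost-nonzero rule of $\prec_{\deg}$ coordinate by coordinate from $X_m$ downward forces $N$ to carry exponents $i_3,\ldots,i_m$ in $X_3,\ldots,X_m$; hence $L_1\setminus L_2=\{X_1^{i_2-t}X_2^{i_1+t}X_3^{i_3}\cdots X_m^{i_m}\mid t=0,\ldots,i_2-i_1\}$, a set of $\ell=i_2-i_1+1$ monomials (all in $\Delta(s,\ldots,s)$ since $i_2<s$). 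Theorem~\ref{theour} then gives the length $n=s^m$, the codimension $\ell$, and the triangular basis $\{{\mbox{\textnormal{ev}}}(N_i+I)\}$ of ${\mathbb{F}}_q^n$. As $L_1$ and $L_2$ are initial segments for $\prec_{\deg}$, one has $N_u=X_1^{i_2}X_2^{i_1}X_3^{i_3}\cdots X_m^{i_m}$, $N_{u^\perp}=X_1^{i_1}X_2^{i_2}X_3^{i_3}\cdots X_m^{i_m}$, and the index sets $\{N_u,\ldots,N_n\}\cap L_1$ and $\{N_1,\ldots,N_{u^\perp}\}\setminus L_2$ occurring in~(\ref{bnd}) and~(\ref{bndperp}) both coincide with $L_1\setminus L_2$. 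Thus Theorem~\ref{theour} immediately yields, for every $v$, that $M_v(C(L_1),C(L_2))$ and $M_v(C(L_2)^\perp,C(L_1)^\perp)$ are at least $\min\{D(K)\}$ and $\min\{D^\perp(K)\}$ respectively, the minima taken over $K\subseteq L_1\setminus L_2$ with $\#K=v$; this already proves the dual inequalities and the ``$\geq$'' direction of the primary statements.

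For $v=1$ I would evaluate $D$ and $D^\perp$ on $X_1^{i_2-t}X_2^{i_1+t}X_3^{i_3}\cdots X_m^{i_m}$: they equal $(s-i_2+t)(s-i_1-t)\prod_{k\geq 3}(s-i_k)$ and $(i_2-t+1)(i_1+t+1)\prod_{k\geq 3}(i_k+1)$. Both $t\mapsto(s-i_2+t)(s-i_1-t)$ and $t\mapsto(i_2-t+1)(i_1+t+1)$ are concave, so their minimum over $t\in\{0,\ldots,\ell-1\}$ is attained at an endpoint, which in either case gives $(s-i_1)(s-i_2)$, respectively $(i_1+1)(i_2+1)$; multiplying by the tail factors gives the asserted lower bounds for $M_1$. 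For the matching upper bound on $M_1(C(L_1),C(L_2))$, writing $S_k=\{\alpha^{(k)}_1,\ldots,\alpha^{(k)}_s\}$, the polynomial $F=\prod_{k=1}^m\prod_{w=1}^{i_k}(X_k-\alpha^{(k)}_w)$ is already reduced mod $I$ (each $i_k<s$), has $\prec_{\deg}$-leading monomial $X_1^{i_1}X_2^{i_2}\cdots X_m^{i_m}\in L_1\setminus L_2$, so ${\mbox{\textnormal{ev}}}(F+I)\in C(L_1)\setminus C(L_2)$, and its weight on $S$ is exactly $\prod_{k=1}^m(s-i_k)$.

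It remains to establish equality in the $v\geq 2$ primary statement. Given $K=\{M_1,\ldots,M_v\}\subseteq L_1\setminus L_2$ with $M_j=X_1^{a^{(j)}_1}X_2^{a^{(j)}_2}X_3^{i_3}\cdots X_m^{i_m}$, I would take $G_j=\prod_{w=1}^{a^{(j)}_1}(X_1-\alpha^{(1)}_w)\prod_{w=1}^{a^{(j)}_2}(X_2-\alpha^{(2)}_w)\prod_{k=3}^m\prod_{w=1}^{i_k}(X_k-\alpha^{(k)}_w)$, whose $\prec_{\deg}$-leading monomial is $M_j$. The ${\mbox{\textnormal{ev}}}(G_j+I)$ have pairwise distinct $\bar{\rho}$-values, all in $\{u,\ldots,u^\perp\}$, so $U={\mbox{Span}}\{{\mbox{\textnormal{ev}}}(G_j+I)\}$ has dimension $v$, sits inside $C(L_1)$, and meets $C(L_2)$ trivially. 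Since a spanning set's supports cover ${\mbox{\textnormal{Supp}}}(U)$, we get $\#{\mbox{\textnormal{Supp}}}(U)=\#\bigcup_j{\mbox{\textnormal{Supp}}}({\mbox{\textnormal{ev}}}(G_j+I))$; identifying each $S_k$ with $\{0,\ldots,s-1\}$ via its enumeration carries ${\mbox{\textnormal{Supp}}}({\mbox{\textnormal{ev}}}(G_j+I))$ onto the set of exponent vectors of monomials in $\Delta(s,\ldots,s)$ divisible by $M_j$, whence $\#{\mbox{\textnormal{Supp}}}(U)=D(K)$ and the minimum is attained. The only spot requiring genuine combinatorial care is this last count for $v\geq 3$, but it is precisely the planar inclusion--exclusion already carried out in the proof of Theorem~\ref{thesmalltwo} (here with the common $X_3,\ldots,X_m$-factor divided out), so I expect it to be the main --- though essentially routine --- obstacle.
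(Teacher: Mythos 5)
Your argument is correct and is exactly the straightforward higher-dimensional analogue of the proof of Theorem~\ref{thesmalltwo}, which is all the paper's one-line proof intends: the rightmost-nonzero-coordinate reading correctly pins down $L_1\setminus L_2$, the three index sets in Theorem~\ref{theour} coincide because $L_1,L_2$ are $\prec_{\deg}$-initial segments, the endpoint evaluation of the concave product gives the $v=1$ bounds, and the polynomials $G_j$ span a $v$-dimensional $U\subseteq C(L_1)$ meeting $C(L_2)$ trivially with $\#\mathrm{Supp}(U)=D(K)$. Your closing worry about a nontrivial inclusion--exclusion for $v\geq 3$ is a red herring, though: unlike Theorem~\ref{thesmalltwo}, this theorem asserts only $M_v=\min\{D(K)\}$ and not a closed formula, and your own bijection between $\bigcup_j\mathrm{Supp}(\mathrm{ev}(G_j+I))$ and the monomials of $\Delta(s,\ldots,s)$ divisible by some $M_j\in K$ already yields $\#\mathrm{Supp}(U)=D(K)$ outright, so the proof was complete before that sentence.
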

\begin{proof}
The proof is similar to that of Theorem~\ref{thesmalltwo}. The details
are left for the reader.
\end{proof}

We next return to the two-dimensional case, comparing in an example
asymmetric quantum codes from Corollary~\ref{corrrr} 
with La Guardia's Construction II of asymmetric
quantum generalized Reed-Solomon codes
(Theorem~\ref{thelaguardia}). Recall from Remark~\ref{remsammenhaeng}
that parameters of asymmetric quantum codes based on the CSS
construction can be directly translated into parameters of ramp
secret sharing schemes.
\begin{example}\label{exq2}
We first consider the case of asymmetric quantum codes with $q=7$ and of length
$49$. The parameters produced by Theorem~\ref{thelaguardia} for the
code length $49$ all satisfy that $d_z, d_x \leq 6$. To produce higher
values of $d_z$ (as usual we shall assume $d_z \geq d_x$), we can instead apply
Theorem~\ref{thelaguardia} to codes of length $48$ and thereby
derive information on codes of length $49$. We then compare these
values with 
what is produced from Theorem~\ref{thesmalltwo} in combination with Theorem~\ref{thecss} for codes of length
$49$. As is seen in Table~\ref{tablaguardia3}, most often
Theorem~\ref{thesmalltwo} produces the best results. All displayed
codes coming from Theorem~\ref{thesmalltwo} strictly exceed the
Gilbert-Varshamov bound (Theorem~\ref{theMGV}).\\
\begin{table}
\begin{center}
\begin{tabular}{cc}
Theorem \ref{thelaguardia} (\cite[Theorem 7.1]{aqlaguardia})&Theorem \ref{thesmalltwo}\\
\hline
-----&$[[49,1,31/4]]_7$\\
-----&$[[49,2,30/6]]_7$\\
-----&$[[49,1,25/9]]_7$\\
$[[49,2,23/2]]_7$&$[[49,3,24/8]]_7$\\
$[[49,2,20/5]]_7$&$[[49,2,20/12]]_7$\\
$[[49,2,18/7]]_7$&$[[49,4,18/10]]_7$\\
$[[49,2,16/9]]_7$&$[[49,1,16/16]]_7$\\
$[[49,2,15/10]]_7$&$[[49,3,15/15]]_7$\\
$[[49,6,12/11]]_7$&$[[49,5,12/12]]_7$
\end{tabular}
\end{center}
\caption{Comparison of code parameters in Example~\ref{exq2}. To the
  right all possible parameters derived using
  Theorem~\ref{thesmalltwo}. To the left a selection of parameters
  derived by applying Theorem~\ref{thelaguardia} to codes of length
  $48$. An empty entry means that there are no comparable parameter.}
\label{tablaguardia3}
\end{table}
We next consider asymmetric quantum codes with $q=8$ and of length
$64$. In 
Table~\ref{tablaguardia4} we compare representative examples of what
can be derived from Theorem~\ref{thelaguardia} with what can be obtained from
Theorem~\ref{thesmalltwo} in combination with
Theorem~\ref{thecss}. Again the advantage of our method is distinct in
most cases, however, with a clear exception when $d_z=14$. For $d_z > 31$,
Theorem~\ref{thelaguardia} does not produce any information, which in
Table~\ref{tablaguardia4} is marked with ``-----''. All displayed codes coming
from Theorem~\ref{thesmalltwo} strictly exceed the
Gilbert-Varshamov bound (Theorem~\ref{theMGV}).\\

\begin{table}
\begin{center}
\begin{tabular}{cc}
Theorem \ref{thelaguardia} (\cite[Theorem 7.1]{aqlaguardia})&Theorem~\ref{thesmalltwo}\\
\hline
----- \ \ \ \ \ \ \ &$[[64,1,49/4]]_8$\\
----- \ \ \ \ \ \ \ &$[[64,2,42/6]]_8$\\
----- \ \ \ \ \ \ \ &$[[64,1,36/9]]_8$\\
----- \ \ \ \ \ \ \ &$[[64,3,35/8]]_8$\\
$[[64,2,30/3]]_8$&$[[64,2,30/12]]_8$\\
$[[64,4,28/4]]_8$&$[[64,4,28/10]]_8$\\
$[[64,2,25/8]]_8$&$[[64,1,25/16]]_8$\\
$[[64,2,24/9]]_8$&$[[64,3,24/15]]_8$\\
$[[64,2,21/12]]_8$&$[[64,5,21/12]]_8$\\
$[[64,2,20/13]]_8$&$[[64,2,20/20]]_8$\\
$[[64,4,18/14]]_8$&$[[64,4,18/18]]_8$
\end{tabular}
\end{center}
\caption{Comparison of code parameters in Example~\ref{exq2}. To the
  right all possible parameters from Theorem~\ref{thesmalltwo}. To the
  left a selection of parameters resulting from Theorem~\ref{thelaguardia}.}
\label{tablaguardia4}
\end{table}
\ \demo
\end{example}\label{remsidste}
We conclude this section with discussing higher weights and their
use in secret sharing. 
\begin{remark}\label{remarkenN}
Inspecting~(\ref{eqcarnando1}) and
(\ref{eqcarnando2}) it is clear that the $(v+1)$th relative weights
are typically much larger than the $v$th relative weights, for $v \in \{1,
\ldots , \ell-1\}$. In particular the second relative weights are often much larger
than the first relative weights. The consequence for the related ramp
secret sharing schemes is that the security is much better than what
is reflected only by the parameter $t=t_1$, in that $t_{v+1}$ is much
larger than $t_v$ for $v \in \{1, \ldots , \ell-1\}$. Hence, if a small
amount of information leakage can be accepted, then one can tolerate
many more leaked symbols. In the other direction, reconstruction
corresponds to solving a system of linear equations. Hence, the fact that $r_v$ is
much smaller than $r_{v+1}$ for $v \in \{1, \ldots , \ell-1\}$, and in
particular that $r_{\ell-1}$ is much smaller than $r=r_\ell$, means
that if one is willing to guess some of the indeterminates of the
system then one needs much fewer shares to reconstruct the secret. 
\end{remark}
We illustrate Remark~\ref{remarkenN} with an example.
\begin{example}\label{exnoyes}
This is a continuation of Example~\ref{extihi}. Applying
Theorem~\ref{thesmalltwo} we obtain ramp secret sharing schemes with
$n=36$ participants, with secrets in ${\mathbb{F}}_q^\ell$ where
$\ell$ and the privacy and reconstruction numbers are as in
Table~\ref{tabnuogsaa}.
\begin{table}
\begin{center}
\begin{tabular}{ccccccccc}
$\ell$&$t_1$&$t_2$&$t_3$&$t_4$&$r_1$&$r_2$&$r_3$&$r_4$\\
\hline 
1&24&-&-&-&33&-&-&-\\
2&19&23&-&-&29&31&-&-\\
3&14&18&21&-&24&26&29&-\\
1&15&-&-&-&28&-&-&-\\
4&9&13&16&18&18&20&23&27\\
1&8&-&-&-&21&-&-&-\\
3&7&10&12&-&15&18&22&-\\
2&5&7&-&-&13&17&-&-\\
1&3&-&-&-&12&-&-&-
\end{tabular}
\end{center}
\caption{Ramp secret sharing schemes from Example~\ref{exnoyes}}
\label{tabnuogsaa}
\end{table}

\ \demo
\end{example}

\section{Concluding remarks}\label{secon}
In a series of works~\cite{galindo2015quantum,galindo2015quantumwithout,galindo2015stabilizer,galindo2015new,allefire} the 
authors of the present paper investigated linear codes over ${\mathbb{F}}_q$
defined by evaluating multivariate polynomials at Cartesian products
$S_1\times \cdots \times S_m$, where for $i=1,\ldots , m$, $S_i$ is 
the set of roots of
\begin{eqnarray}
X_i^{N_i}-X_i& {\mbox{  \ or \ }}&X_i^{N_i-1}-1. \label{eqhelix}
\end{eqnarray}
Here $N_i >1$ satisfies that $N_i-1$ divides $q-1$. Such codes were
then used in~\cite{galindo2015quantum,galindo2015quantumwithout,galindo2015stabilizer,galindo2015new,allefire} for the construction of symmetric quantum codes. In the
terminology used in these papers a set $J \subseteq \{1, \ldots , m\}$ indicates for which
indices the second case in~(\ref{eqhelix}) occurs -- and the
corresponding codes are called $J$-affine variety codes, and if
$N_i-1=q-1$, $J=\{1, \ldots , m\}$ generalized toric codes~\cite{diego}. One of the
advantages of such codes is that they come with an efficient method
for finding parity check matrices. More precisely, when each row in
the generator matrix is made by evaluating a monomial at the points
of the point set, then~\cite[Proposition 1]{galindo2015stabilizer} provides a description
  of a corresponding parity check matrix. The codes of the present
  paper clearly are $J$-affine variety codes when $S_i$, $i=1, \ldots ,
  m$ is of the form~(\ref{eqhelix}). This is in particular the case
  in all the examples we have given, implying that for these codes we
  can easily establish parity check matrices.

Another
  advantage of $J$-affine variety codes is that they are suited for the
  construction of subfield subcodes. It is an interesting topic of future research
  to investigate if the  method from the present paper
can be successfully combined with such subfield subcode
construction. 

For $q$ an even power of a prime, Theorem~\ref{thecss} is also true if one
replaces the 
Euclidean duality with the Hermitian duality~\cite[Theorem~4.5]{aq5}. It
is also an interesting research problem to investigate if this product can be
successfully combined with the methods of the present paper.\\

\section*{Acknowledgments}
The authors thank Ryutaroh Matsumoto for pleasant
discussions and the anonymous reviewers for insightful comments
and suggestions.  
The authors gratefully acknowledge the support from The Danish Council for Independent Research (Grant
No.\ DFF--4002-00367), the support 
from the Spanish MINECO/FEDER
(Grants No.\ MTM2015-65764-C2-2-P and MTM2015-69138-REDT), and the support from University
Jaume I (Grant No.\ P1-1B2015-02).

\appendix

\section{Proof of Lemma~\ref{lemuseful}}\label{apa}
\begin{proof}
Let $m\geq 2$ be an arbitrary integer. The proof is by induction on
$i=m, \ldots , 1$. For $i=m$ the formula reduces to 
$$\int_0^{s-\frac{\tau}{(s-x_1)\cdots
    (s-x_{m-1})}}dx_m=s-\frac{\tau}{(s-x_1)\cdots (s-x_{m-1})}$$
which is indeed true. Next let $1 \leq i < m$ and assume that the
formula in the lemma holds when $i$ is substituted with $i+1$. We must
show that it also holds for $i$. The left hand side becomes
\begin{eqnarray}
&&\int_0^{s-\frac{\tau}{s^{m-i}(s-x_1)\cdots
  (s-x_{i-1})}}s^{m-i}\nonumber \\
&&-\sum_{t=0}^{m-i-1}
  \frac{1}{t!}\frac{\tau}{(s-x_1) \cdots (s-x_i)}\bigg(\ln
  \bigg(\frac{(s-x_1)\cdots (s-x_i)s^{m-i}}{\tau}\bigg)\bigg)^t \,
  dx_i \nonumber \\
&=&s^{m-i+1}-\frac{\tau }{(s-x_1)\cdots (s-x_{i-1})} \nonumber \\
&&+\sum_{t=0}^{m-i-1} \int_0^{s-\frac{\tau}{s^{m-i}(s-x_1)\cdots
   (s-x_{i-1})}}\frac{1}{t!}\frac{\tau}{(s-x_1)\cdots
   (s-x_{i-1})}\frac{-1}{(s-x_i)}\nonumber \\
&&\cdot \bigg( \ln \bigg( \frac{(s-x_1)\cdots
   (s-x_i)s^{m-i}}{\tau}\bigg)\bigg)^t \, dx_i.\label{eqterms}
\end{eqnarray}
We continue with the last term, which after the substitution, 
$$u=\ln ( (s-x_1)\cdots (s-x_i)s^{m-i}/\tau),$$
becomes
\begin{eqnarray}
&&\sum_{t=0}^{m-i-1} \int_{\ln \bigg(\frac{(s-x_1)\cdots
  (s-x_{i-1})s^{m-i+1}}{\tau}\bigg)}^0
  \frac{1}{t!}\frac{\tau}{(s-x_1)\cdots (s-x_{i-1})}u^t \, du\nonumber
  \\
&=&-\sum_{t=0}^{m-i-1} \frac{1}{(t+1)!}\frac{\tau}{(s-x_1) \cdots
    (s-x_{i-1})}\bigg( \ln \bigg(\frac{(s-x_1) \cdots
    (s-x_{i-1})s^{m-i+1}}{\tau}\bigg) \bigg)^{t+1}.\nonumber
\end{eqnarray}
Shifting the index by $1$ in the last sum and collecting terms
in~(\ref{eqterms}) prove the lemma.
\end{proof}

\end{document}